\newtheorem{theorem}{Theorem}[section]
\newtheorem{lemma}[theorem]{Lemma}
\newtheorem{cor}[theorem]{Corollary}
\theoremstyle{definition}
\newtheorem{definition}[theorem]{Definition}
\newtheorem{example}[theorem]{Example}
\theoremstyle{remark}
\newtheorem{remark}[theorem]{Remark}
\numberwithin{equation}{section}
\title[Inverse problems for locally perturbed  lattices]{Inverse problems for locally perturbed  lattices - Discrete Hamiltonian and
quantum graph}
\author{Emilia Bl{\aa}sten, Pavel Exner, Hiroshi Isozaki,\\ Matti Lassas and Jinpeng Lu}
\date{}
\date{\today}
\begin{document}
\baselineskip 14pt
\maketitle

\begin{abstract}
We consider the inverse scattering problems for two types of Schr{\"o}dinger operators on  locally perturbed periodic lattices. For the discrete Hamiltonian,  the knowledge of the S-matrix for all energies determines the graph structure and the coefficients of the Hamiltonian. For locally perturbed equilateral metric graphs, the knowledge of the S-matrix for all energies determines the graph structure.
\end{abstract}

\section{Introduction}
\subsection{The goal of this work}

There are two basic models for describing the motion of quantum mechanical particles on a periodic lattice. In the first model, the configuration space consists of graph vertices only and the Hamiltonian is written as a difference operator which is determined by the adjacency matrix. We refer to this operator as the discrete Schr{\"o}dinger operator in this paper. In the other model, the wave functions are supported on the graph edges and the Hamiltonian is a differential operator on the edges. This model is called the quantum (or metric) graph.

The aim of this paper is twofold. The first topic concerns a locally perturbed periodic lattice. 
We analyze the discrete Schr{\"o}dinger operator having the form
\begin{equation}
\widehat H_G : \hat u \to \frac{1}{{\rm deg}\, v}\sum_{w\sim v, w \in G}g_{vw}\hat u(w) + q(v)\hat u(v), \quad v \in G,
\label{Equationinafiitepart}
\end{equation}
on a finite part of the graph $G$ and prove the following result (Theorem \ref{mainscattering}):
\begin{itemize}
\item Given a locally perturbed periodic lattice of a certain class 
and the associated discrete Hamiltonian $\widehat H_G$, we can determine the graph structure, $g_{vw}$ and $q(v)$ from the knowledge of the S-matrix for all energies.
\end{itemize}
Here, a local perturbation of lattice means replacing a finite number of edges and vertices by a finite number of other edges and vertices and changing the weights $g_{vw}$ and the potentials  $q(v)$ on finite number of edges and vertices, respectively.
 
The other topic of this paper concerns the Schr\"odinger operator on a  metric graph $\Gamma = \{\mathcal V, \mathcal E\}$, with vertex set $\mathcal V$ and edge set $\mathcal E$, and the topology determined by an appropriate adjacency matrix. The metric character of the graph means that each edge is identified with a line segment, in our case finite, and parametrized by its arclength. This makes it possible to endow
$\Gamma$ naturally with the metric defined as the length of the shortest path between two points. We do not fix the orientation of a given edge $e$, that is, the graph is undirected. We assume that for $v, v' \in \mathcal V$, there exists at most one edge with end points $v$, $v'$, and that $\Gamma$ has no loops.
 This can be assumed without loss of generality, since otherwise one can insert a `dummy' vertex of degree $2$ to any `superfluous' edge. With each edge $e \in \mathcal E$, we associate a one-dimensional Schr{\"o}dinger  operator
\begin{equation}
h_e := - \frac{d^2}{dz^2} + V_e(z), \quad z  \in [0,\ell_e] =: I_e,
\label{edge Schrodinger}
\end{equation}
where the length $\ell_e$ of the edge $e$ is a positive constant. To convert the collection of operators (\ref{edge Schrodinger}) into a self-adjoint Schr\"odinger operator on the whole graph, one has to impose conditions matching the functions at the vertices. In general, self-adjoint operators referring to the differential expression in question are parametrized by ${\rm deg}\, v \times {\rm deg}\, v$ unitary matrices, 
 cf.  \cite{KostrykinSchrader1999} or \cite{BerkolaikoKuchment2013}, Theorem 1.4.4.  If we require continuity of the functions at the vertices, however, this multitude is reduced to a one-parameter family, which we adopt in our case. To be concrete, for $\hat f \in H^2_{loc}(\mathcal E)$, we impose the generalized Kirchhoff condition, otherwise known as $\delta$-coupling: if $\hat f = \{\hat f_e\}_{e\in \mathcal E}$ such that $\hat f \in C(\Gamma)$ and $\hat f_e \in C^1(I_e)$, it holds that
\begin{equation}
\sum_{e \sim v} \hat f'_e(v) = C_v\hat f(v), \quad v \in \mathcal V,
\label{GeneralizedKirchhoff}
\end{equation}
where $f'_e(v)$ is given by (\ref{Defined/dzef(v)=f'e(v)}) and $e \sim v$ means that $v$ is an endpoint of the edge $e$, $C_v$ is a real constant,  $\hat f(v) = \hat f_e(0)$ if $e(0) = v$. Note that such a Hamiltonian can be defined as the norm-resolvent limit as $\kappa \to \infty$ of the following operators,
$$ 
\tilde h_{e,\kappa} = - \frac{d^2}{dz^2} + V_e(z) + \kappa W_{e}(\kappa z),
$$ 
with the usual Kirchhoff condition $\sum_{e \sim v}\hat f'_e(v) = 0$ for any $v \in \mathcal V$, where $C_v:= \sum_{e \sim v} \int_e W_e(z)\,dz$ and $W_{e}\in L^1(e)$ is a fixed function, cf. \cite{Exner96}. Note also that the singular vertex couplings with functions discontinuous at the vertex also allow for an interpretation, but the corresponding approximation procedure is considerably more complicated, see \cite{ChExTu10}.

We develop an inverse spectral and scattering theory associated with such quantum graphs which would facilitate a recovery of the graph structure, potentials $V_e(z)$, and constants $C_v$.
Roughly speaking, we consider a locally perturbed periodic graph, and prove the following result (Theorem \ref{qunatumgraphstrutureTheoremScattering}):
 
\begin{itemize}
\item Consider an infinite quantum graph $\Gamma = \{\mathcal V, \mathcal E\}$ on which all $\ell_e$, $V_e(z)$ coincide for all $e \in \mathcal E$, and $C_v/{\rm deg}\, v$ coincide for all $v \in \mathcal V$. If $\Gamma$ is a local perturbation of a periodic lattice of a certain class, then we can determine the graph structure of $\Gamma$ from the S-matrix for all energies.
\end{itemize}
Here a local perturbation of lattice means replacing a finite number of edges and vertices by a finite number of other edges and vertices.

The proof will be done by showing the equivalence of the S-matrix and the Dirichlet-to-Neumann (D-N) map in a bounded domain, and by reducing the problem to inverse problems for discrete Schr\"odinger operators of the type (\ref{Equationinafiitepart}).

 \subsection{Plan of the work}
 We  proceed in the following steps.
\begin{enumerate}
\item Preliminaries on metric graphs (\S \ref{Metricgraphsection}).
\item Inverse boundary value problem with the D-N map for a finite graph (\S \ref{SectionFinitegraphInverseproblem}): Use the results from \cite{BILL} to determine the structure of finite discrete graphs and quantum graphs from the knowledge of the corresponding D-N map.
\item  Inverse scattering for discrete Hamiltonians (\S \ref{InverseScatteringDiscrete}): Show that the S-matrix and the D-N map are equivalent and thus reduce the inverse scattering problem to the inverse boundary value problem.
 \item   Inverse scattering for quantum graphs (\S \ref{SectionQunatumGraphSpectralTheory}, \S \ref{InverseQuantumGraph}): Develop the spectral and scattering theory for locally perturbed periodic graph Laplacians, show that the S-matrix and the D-N map are equivalent, and recover the perturbations from the D-N map.
 \end{enumerate}

We end this section by the lists of assumptions and notations used in this paper except  standard ones.
 
 \begin{center}
 \begin{tabular}{|cc|cc|} \hline
 \multicolumn{4}{|c|}{\textbf{Assumptions}} \\ \hline
 {(M-1) - (M-5)} &{\S 2} &
  {(A-1) - (A-4)} &\S 2.3 in \cite{AIM18}   \\ \hline
  {(B-1) - (B-3)}  &\S 2.3 in \cite{AIM18} & 
  {(C-1), (C-2), (C-1)'} &\S \ref{s: The D-N maps}, \S \ref{Sec:3.2DiscreteGraphLaplacian} \\ \hline
  {(D-1) - (D-4) }
  & \S \ref{subsection-assumptions}  &   {(E-1)} & \S \ref{section-scatteringproof} \\ \hline  
  \end{tabular}
  \end{center}
  \vspace{4mm}

\hspace{-15mm}
 \begin{tabular}{|cc|cc|cc|cc|cc|} \hline
 \multicolumn{10}{|c|}{\textbf{Notations}} \\\hline
  $h_e$ &  (\ref{edge Schrodinger})  & $C_v$ &(\ref{GeneralizedKirchhoff})  & $d_v$ &(\ref{Definedv}) & $\phi_{e0}(z,\lambda)$   &  (\ref{phie0}) & $\phi_{e1}(z,\lambda)$ &(\ref{phie1}) \\ 
 $r_e(\lambda)$ & (\ref{Definerelambda})   & 
 $\widehat{\Delta}_{\mathcal V,\lambda}$  & (\ref{reducedvertexLaplacian})& $\widehat{Q}_{\mathcal V,\lambda}$&  (\ref{reducedvertexpotential})&$\widehat{T}_{\mathcal V,\lambda}$  & (\ref{DefineTVlambdaf(v)=1/dvsume(0)=vintIedz})&$\Lambda_{\mathcal V}(\lambda)$  & (\ref{S3DNforHmathcalVlambda}) \\ 
 $\Lambda_{\mathcal E}(\lambda)$  &(\ref{DefineLambdaElamnda}) &$\ell_{\mathcal E}, V_{\mathcal E}(z)$  & (\ref{S3EquilateralCond})& $\kappa_{\mathcal V}$&  (\ref{S3Equivertexpotentialcond})&$\mathcal U_{\mathcal V}$ & (\ref{DefineMathcalUmathcalV})& $\widehat{\Delta}_{\Gamma_0}$ &  (\ref{Laplaciandef-3})\\ 
$ \mathcal T_1$ &  (D-1) & $\mathcal T_0$  &  (D-2)&  $\widehat P_{ext}$& (\ref{Pextdefine}) & $\simeq$ & (\ref{Equationsimmomentum})&  $\Sigma$ & (\ref{DefineSigma}) \\ 
$E(\lambda)$ &(\ref{DefineE(lambda)}) & $\sigma^{(0)}(h^{(0)})$    & (\ref{Definsigma0h0})& $\sigma^{(0)}(- \widehat{\Delta}_{\mathcal V})$& (\ref{Definesigam0deltaV})  & $\sigma^{(0)}_{\mathcal T}$ & (\ref{Definesigma0tau})& $\mathcal T$& (\ref{Definetau0tau1sp0h0)}) \\ \hline
  \end{tabular}

\bigskip
The work of P.E. was supported by the Czech Science Foundation within the project 21-07129S and by the EU project ${\rm CZ}.02.1.01/0.0/0.0/16$\underline{ }$ 019/0000778$. H.I.  is supported by Grant-in-Aid for Scientific Research (C) 20K03667 Japan Society for the Promotion of Science. They are indebted to these supports.


\section{Metric graph and the associated discrete operator}
\label{Metricgraphsection}

Rephrasing the treatment of a Schr\"odinger operator, with or without a potential, on a \emph{metric} graph to the analogous problem on a \emph{combinatorial} (or discrete) graph is a well-known procedure that has been discussed in many papers, e.g. \cite{BoEgRu15,Cattaneo97, Exner97, Pankrashkin13}. We repeat it here mainly to fix notations.
Let $\Gamma = \{\mathcal V, \mathcal E\}$ be a metric graph with the vertex set $\mathcal V$ and edge set $\mathcal E$.
Note that for the metric graph, an edge $e \in \mathcal E$ is a segment between two vertices while for the discrete graph,  an edge is a pair of vertices. To avoid the complexity of notation, we use the same symbol $e_{vw}$  for an edge with endpoints $v, w$ for both graphs, often omitting  $v, w$. However, we will make a distinction between them in the arguments in \S2 {following} Definition \ref{DefinitionhatDeltaVlambda} and those in \S \ref{s: The D-N maps}. 
{For $v, w \in \mathcal V$, we say that $v$ and $w$ are adjacent, denoted by $v \sim w$, if there exists an edge having $v$ and $w$ as its endpoints.  For a subset 
$A \in \mathcal V$ or $\mathcal E$, $v \sim A$ and $A \sim v$ mean that $v$ is adjacent to some $w \in A\cap \mathcal V$. In particluar, for an edge $e \in \mathcal E$ and $v \in \mathcal V$, $e \sim v$ means that $v$ is an end point of $e$.}
The degree of a vertex $v \in \mathcal V$ is defined as
\begin{equation}
d_v := {\rm deg}\, v = \sharp\{e \in \mathcal E\, ; \, e \sim v\}.
\label{Definedv}
\end{equation}
Recall that for adjacent $v, v' \in \mathcal V$, 
the edge joining $v$ and $v'$ is unique by assumption.
For a function $\hat f = \{\hat f_e\}_{e\in\mathcal E}$ on $\Gamma$, with $\hat f_e : I_e \to {\bf C}$, and $e \in \mathcal E$ with $e \sim v$, we define
\begin{equation}
\frac{d \hat f}{d \nu_e}(v) := \hat f'_e(v).
\label{Defined/dzef(v)=f'e(v)}
\end{equation}
When computing the right-hand side, we parametrize $e$  as $e(z),\ z\in [0,\ell_e]$ with $e(0) = v$,
 and the boundary derivative is taken in the \emph{outward} direction with respect to $v$, see  \cite[Sec. I.4]{BerkolaikoKuchment2013}. Equivalently, the boundary derivatives can be written as
\begin{equation*}
\begin{split}
\frac{d \hat f}{d\nu_e}(e(0)) = \hat f_e'(0), \quad \frac{d \hat f}{d \nu_e}(e(\ell_e)) = - \hat f_e'(\ell_e),
\end{split}
\nonumber
\end{equation*}
where $\ell_e$ is the length of the edge $e$.
For the sake of brevity, we use the following shorthand notation:
$$ 
\int_e \hat u = \int_0^{\ell_e}
\hat u(z)dz.
$$ 
Then the following Green's formula holds:
$$ 
-\int_e(\hat u')'\hat w = \frac{d \hat u}{d \nu_e}\hat w\Big|_{e(0)} + \frac{d \hat u}{d \nu_e}\hat w\Big|_{e(\ell_e)} +
\int_e \hat u'\hat w'.
$$
For an edge $e \in \mathcal E$, let $L^2(e)$ be the set of all $L^2$-functions on $e$, conventionally understood as equivalence classes of functions coinciding a.e., and put
$$
L^2(\mathcal E) = {\mathop\bigoplus}_{e\in \mathcal E}\, L^2(e).
$$
For $\hat u = \{\hat u_e\}_{e\in\mathcal E}$ and $\hat w \in \{\hat w_e\}_{e\in\mathcal E}$, let $(\hat u,\hat w)_{\mathcal E}$ be the inner product:
$$
(\hat u,\hat w)_{\mathcal E} = \sum_{e\in\mathcal E}(\hat u_e,\hat w_e)_{e}
= \sum_{e\in\mathcal E}\int_e\hat u_e\overline{\hat w_e}.
$$
The Sobolev spaces are defined by
$$
H^m(\mathcal E) = {\mathop\bigoplus}_{e\in \mathcal E}\, H^m(e).
$$
Note that different conventions are used and sometimes the definition may involve the continuity at the vertices, see \cite[Def.~I.3.6]{BerkolaikoKuchment2013}.

Given a real-valued function $V_e \in L^1(e)$ on each $e \in \mathcal E$, we define a multiplication operator $V$ by
$$
\big(V\hat u\big)_e(z) = V_e(z)\hat u_e(z).
$$
Let $C_v$ be a real-valued function on $\mathcal V$. Throughout the paper we impose the following requirements:

\medskip
\noindent
(M-1) \hskip 20mm $ 0 < \inf_e\ell_e \leq \sup_e\ell_e < \infty, $

\smallskip
\noindent
(M-2) \hskip 20mm
$ \sup_{v \in \mathcal V}{d_v} < \infty,$

\smallskip
\noindent
(M-3) \hskip 20mm $ \sup_{e\in \mathcal E}\|V_e\|_{L^1(e)} < \infty, $

\smallskip
\noindent
(M-4) \hskip 20mm $ V_e(z) = V_e(\ell_e - z)$, 

\smallskip
\noindent
(M-5) \hskip 20mm $ \sup_{v \in \mathcal V} |C_v| < \infty$.

\medskip
\noindent
Naturally all of these requirements except the symmetry condition (M-4) are satisfied automatically if the graph $\Gamma$ is finite. We define the operator $\widehat H_{\mathcal E}$ by 
\begin{equation} \label{def-eLaplace}
\big(\widehat H_{\mathcal E}\hat u\big)_e(z)  = - \hat u''_e(z) + V_e(z)\hat u_e(z)
\end{equation}
acting on $I_e$, with the domain consisting of functions
 \begin{equation}
 \hat u \in D(\widehat H_{\mathcal E}) \Longleftrightarrow
 \left\{
 \begin{split}
 &  \hat u  \in H^2(\mathcal E), \quad \hat u \in C(\Gamma),\\
 & \sum_{e\sim v}\hat u'_e(v) =   C_v\hat u(v),  \quad
 v \in \mathcal V.
 \end{split}
 \right.
\label{S2DomainH}
 \end{equation}
Here in the first line of the right-hand side, $\hat u \in C(\Gamma)$ means that $\hat u_e(v) = \hat u_{e'}(v)$ if $v \sim e$, $v \sim e'$ and that 
 $\hat u$, thus defined globally on $\mathcal E$, is continuous  {on the whole graph $\Gamma$.
It is straightforward to check that $\widehat H_{\mathcal E}$ is self-adjoint.

Let $\lambda \in \mathbb{C}\setminus\mathbb{R}$. For any  edge $e \in \mathcal E$, let $\phi_{e0}(z,\lambda)$  and $\phi_{e1}(z,\lambda)$ be the solutions of $- \phi'' + V_e\phi = \lambda \phi$ on $I_e$ satisfying the boundary conditions
\begin{align}
 \phi_{e0}(0,\lambda) = 0,& \quad  \phi_{e0}'(0,\lambda) = 1, 
 \label{phie0}\\[.5em]
 \phi_{e1}(\ell_e,\lambda) = 0,& \quad  \phi_{e1}'(\ell_e,\lambda) = -1.
 \label{phie1}
 \end{align}
Note that $\phi_{e1}(z,\lambda) = \phi_{e0}(\ell_e-z,\lambda)$ by the symmetry condition (M-4). Let $r_e(\lambda)$ be the Green operator of $- d^2/dz^2 + V_e(z) - \lambda$ on $e$ with the Dirichlet boundary condition:
\begin{equation}
\begin{split}
r_e(\lambda)\hat f_e & = \Big(- \frac{d^2}{dz^2} + V_e(z) - \lambda\Big)^{-1}\hat f_e  = \int_{I_e}r_e(z,z',\lambda)\hat f_e(z')dz',
\end{split}
\label{Definerelambda}
\end{equation}
where the integral kernel is given by
\begin{align*}
r_e(z,z',\lambda) = & - \frac{1}{W_e(z',\lambda)}\left\{
\begin{array}{ll}
\phi_{e0}(z,\lambda)\phi_{e1}(z',\lambda), & \quad 0 < z < z', \\
\phi_{e1}(z,\lambda)\phi_{e0}(z',\lambda), & \quad 0 < z' < z,
\end{array}
\right. \\[.5em]
W_e(z,\lambda) =  &\: \phi_{e0}(z,\lambda)\phi_{e1}'(z,\lambda) - \phi_{e0}'(z,\lambda)\phi_{e1}(z,\lambda).
\end{align*}
Let $\hat u = (\widehat H_{\mathcal E} - \lambda)^{-1}\hat f$. Then on each edge $e$, the function $\hat u_e(z,\lambda)$ can be written as
 \begin{equation}
 \hat u_e(z,\lambda) = c_e(\ell_e,\lambda)\frac{\phi_{e0}(z,\lambda)}{\phi_{e0}(\ell_e,\lambda)} + c_e(0,\lambda)\frac{\phi_{e1}(z,\lambda)}{\phi_{e1}(0,\lambda)}
 + r_e(\lambda)\hat f_e,
 \label{S2Formulahatuezlambda}
 \end{equation}
where the constants $c_e(\ell_e,\lambda), c_e(0,\lambda)$ are determined by the $\delta$-coupling condition (\ref{GeneralizedKirchhoff}). Since
$\phi_{e0}'(0,\lambda) = 1$ and $\phi_{e1}'(0,\lambda) = - \phi_{e0}'(\ell_e,\lambda)$, we infer that
\begin{equation*}
\begin{split}
\frac{d}{dz}r_e(\lambda)\hat f_e\Big|_{z=0} & = -
\int_{I_e}
\frac{\phi_{e1}(z',\lambda)}{W_e(z',\lambda)}\hat f_e(z')\,dz',
\end{split}
\nonumber
\end{equation*}
and consequently we have
\begin{equation*}
\hat u_e'(0,\lambda) = \frac{1}{\phi_{e0}(\ell_e,\lambda)}\Big(c_e(\ell_e,\lambda) - \phi'_{e0}(\ell_e,\lambda)c_e(0,\lambda)\Big) - \int_{I_e}\frac{\phi_{e1}(z',\lambda)}{W_e(z',\lambda)}\hat f_e(z')\,dz'.
\nonumber
\end{equation*}
Since $\hat u_e(0,\lambda)=c_e(0,\lambda)$, the $\delta$-coupling condition (\ref{GeneralizedKirchhoff}) can be rewritten as
\begin{equation}
\begin{split}
\sum_{e(0) = v}
& \left(\frac{1}{\phi_{e0}(\ell_e,\lambda)}
\Big(c_e(\ell_e,\lambda) - \phi_{e0}'(\ell_e,\lambda)c_e(0,\lambda)\Big) - \frac{C_v}{d_v}c_e(0,\lambda)\right)\\
& = \sum_{e(0) = v}
\int_{I_e}\frac{\phi_{e1}(z',\lambda)}{W_e(z',\lambda)}\hat f_e(z')\,dz'.
\label{S2deltacouplingrewritten}
\end{split}
\end{equation}
To make the dependence on the edge parametrization more visible, we alternatively write $\hat f_e(e(z))$ instead of a function $\hat f_e(z)$ on $I_e$.

From here until the end of \S \ref{s: The D-N maps}, we distinguish the edges in the metric graph and those of the discrete graph, denoting the edges and the functions on the former by $\underline e$, $\hat{\underline u}$, ${\hat{\underline{u}}}_{\underline e}$, and those for the discrete graph by $e$, $\hat u$ and $\hat u_e$.

 \begin{definition}
 \label{DefinitionhatDeltaVlambda}
The weighted discrete graph Laplacian 
$\widehat\Delta_{\mathcal V,\lambda} :  \ell^2(\mathcal V) \to \ell^2(\mathcal V)$, where  $\ell^2(\mathcal V) =  {{\mathbb C}}^{\sharp\mathcal V}$,  on $\mathcal V$, associated with the Schr\"odinger operator on $\Gamma$ specified by \eqref{edge Schrodinger} and \eqref{GeneralizedKirchhoff}, acts on a function $\hat u(v)$ on $\mathcal V$ as
 \begin{equation}
 \begin{split}
 \big(\widehat\Delta_{\mathcal V,\lambda}\hat u\big)(v) & =  \frac{1}{d_v}\sum_{e(0) = v,\, e\in \mathcal{E}}\frac{1}{\phi_{e0}(\ell_e,\lambda)} \hat u(e(\ell_e))  \\
  & = \frac{1}{d_v}\sum_{w\sim v,\, w\in \mathcal{V}}\frac{1}{\phi_{e0}(w,\lambda)} \hat u(w).
  \end{split}
  \label{reducedvertexLaplacian}
 \end{equation}
We introduce the discrete scalar potential $\widehat Q_{\mathcal V,\lambda} = \{\widehat Q_{v,\lambda}\}_{v \in \mathcal V}$ by
 \begin{equation}
 \widehat Q_{v,\lambda} =  \frac{1}{d_v}\sum_{e \sim v,\, e\in \mathcal{E}}\frac{\phi_{e0}'(\ell_e,\lambda)}{\phi_{e0}(\ell_e,\lambda)} + \frac{C_v}{d_v}.
 \label{reducedvertexpotential}
 \end{equation}
Note that $e(0) = v$ and $e(\ell_e) = w$ hold in the definitions (\ref{reducedvertexLaplacian}) and (\ref{reducedvertexpotential}).
  \end{definition}

Furthermore, defining
 \begin{equation}
 \big(\widehat T_{\mathcal V,\lambda}\underline{\hat f}\big)(v) :=
 \frac{1}{d_v}\sum_{\underline e(0) = v}\int_{I_{\underline e}}\frac{\phi_{\underline e 0}(z,\lambda)}{\phi_{\underline e 0}(\ell_{\underline e},\lambda)}\underline{\hat f}_{\underline e}(z)\,dz,
\label{DefineTVlambdaf(v)=1/dvsume(0)=vintIedz}
 \end{equation}
 we can rewrite the coupling condition (\ref{S2deltacouplingrewritten}) in the following way.

 \begin{lemma}
 \label{S2LemmaDeltamathcalVmathcalV=Tlambda}
 The $\delta$-coupling condition (\ref{GeneralizedKirchhoff}) can be expressed as
 \begin{equation}
 \left(- \widehat\Delta_{\mathcal V,\lambda} +
 \widehat Q_{\mathcal V,\lambda}\right)\hat{\underline u}(v) = \widehat T_{\mathcal V,\lambda}\underline{\hat f}(v), \quad v \in \mathcal V.
 \label{EqinLemma2.2}
 \end{equation}
 \end{lemma}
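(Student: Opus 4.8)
The plan is to read off \eqref{EqinLemma2.2} as a direct rewriting of the identity \eqref{S2deltacouplingrewritten}, which already expresses the $\delta$-coupling condition \eqref{GeneralizedKirchhoff} for $\hat{\underline u}=(\widehat H_{\mathcal E}-\lambda)^{-1}\underline{\hat f}$; the only ingredients beyond it are the definitions \eqref{reducedvertexLaplacian}, \eqref{reducedvertexpotential}, \eqref{DefineTVlambdaf(v)=1/dvsume(0)=vintIedz} and two elementary facts about the edge solutions. The first step is to identify the constants $c_{\underline e}(0,\lambda),c_{\underline e}(\ell_{\underline e},\lambda)$ in \eqref{S2deltacouplingrewritten}: since $\hat{\underline u}\in D(\widehat H_{\mathcal E})\subset C(\Gamma)$, evaluating \eqref{S2Formulahatuezlambda} at the two endpoints of $\underline e$ and using $\phi_{\underline e0}(0,\lambda)=0$, $\phi_{\underline e1}(\ell_{\underline e},\lambda)=0$ and the Dirichlet property of $r_{\underline e}(\lambda)$ shows $c_{\underline e}(0,\lambda)=\hat{\underline u}(v)$ whenever $\underline e(0)=v$ and $c_{\underline e}(\ell_{\underline e},\lambda)=\hat{\underline u}(w)$ with $w=\underline e(\ell_{\underline e})$.

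Substituting this into \eqref{S2deltacouplingrewritten} and dividing by $d_v$, the terms carrying $c_{\underline e}(\ell_{\underline e},\lambda)/\phi_{\underline e0}(\ell_{\underline e},\lambda)$ assemble into $(\widehat\Delta_{\mathcal V,\lambda}\hat{\underline u})(v)$ by \eqref{reducedvertexLaplacian}, while the terms proportional to $\hat{\underline u}(v)$ --- those carrying $\phi_{\underline e0}'(\ell_{\underline e},\lambda)/\phi_{\underline e0}(\ell_{\underline e},\lambda)$ together with the $d_v$ copies of $C_v/d_v$ --- assemble into $\widehat Q_{v,\lambda}\hat{\underline u}(v)$ by \eqref{reducedvertexpotential}. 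Hence the left side of \eqref{S2deltacouplingrewritten}, divided by $d_v$, equals $(\widehat\Delta_{\mathcal V,\lambda}-\widehat Q_{\mathcal V,\lambda})\hat{\underline u}(v)$. For the right side I would use that $W_{\underline e}(z,\lambda)=\phi_{\underline e0}(z,\lambda)\phi_{\underline e1}'(z,\lambda)-\phi_{\underline e0}'(z,\lambda)\phi_{\underline e1}(z,\lambda)$ is independent of $z$ (both $\phi_{\underline e0}$ and $\phi_{\underline e1}$ solve $-\phi''+V_{\underline e}\phi=\lambda\phi$), so evaluating at $z=\ell_{\underline e}$ via \eqref{phie1} gives $W_{\underline e}(z,\lambda)\equiv-\phi_{\underline e0}(\ell_{\underline e},\lambda)$; thus the source term of \eqref{S2deltacouplingrewritten}, divided by $d_v$, is $-\frac{1}{d_v}\sum_{\underline e(0)=v}\int_{I_{\underline e}}\frac{\phi_{\underline e1}(z',\lambda)}{\phi_{\underline e0}(\ell_{\underline e},\lambda)}\underline{\hat f}_{\underline e}(z')\,dz'$. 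The reflection identity $\phi_{\underline e1}(z',\lambda)=\phi_{\underline e0}(\ell_{\underline e}-z',\lambda)$ from (M-4), combined with the substitution $z'\mapsto\ell_{\underline e}-z'$ on each edge, rewrites this as $-\widehat T_{\mathcal V,\lambda}\underline{\hat f}(v)$ in the form \eqref{DefineTVlambdaf(v)=1/dvsume(0)=vintIedz}. Equating the two sides and multiplying by $-1$ yields \eqref{EqinLemma2.2}.

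I do not expect a genuine obstacle: the statement is a bookkeeping reformulation. The one place that repays attention is the identification of the source term with $\widehat T_{\mathcal V,\lambda}\underline{\hat f}$, where one must keep careful track of the sign coming from the boundary derivative of the Dirichlet Green operator (already incorporated in \eqref{S2deltacouplingrewritten}), the sign of the constant Wronskian, and the change of edge parametrization $z'\mapsto\ell_{\underline e}-z'$ --- it is exactly this reflection that forces the use of the symmetry hypothesis (M-4). Conversely, since a function $\hat{\underline u}$ of the form \eqref{S2Formulahatuezlambda} is continuous on $\Gamma$ precisely when the constants $c_{\underline e}$ agree at shared endpoints, each step above is reversible, so \eqref{EqinLemma2.2} is in fact equivalent to \eqref{GeneralizedKirchhoff}.
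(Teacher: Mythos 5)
Your proposal is correct and follows essentially the paper's own route: the paper gives no separate proof of Lemma \ref{S2LemmaDeltamathcalVmathcalV=Tlambda}, since it is precisely the rewriting of \eqref{S2deltacouplingrewritten} that you carry out --- identify $c_{\underline e}(0,\lambda)$, $c_{\underline e}(\ell_{\underline e},\lambda)$ with the vertex values of $\hat{\underline u}$ by continuity, divide by $\mp d_v$, and recognize the definitions \eqref{reducedvertexLaplacian}, \eqref{reducedvertexpotential}, \eqref{DefineTVlambdaf(v)=1/dvsume(0)=vintIedz}, using that the Wronskian is constant and equals $-\phi_{\underline e0}(\ell_{\underline e},\lambda)$. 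The only point to watch is that your substitution $z'\mapsto\ell_{\underline e}-z'$ reflects the argument of $\underline{\hat f}_{\underline e}$ along with the kernel, so the identification with the literal formula \eqref{DefineTVlambdaf(v)=1/dvsume(0)=vintIedz} rests on the paper's edge-orientation convention (each edge read from $v$, with (M-4) relating the two parametrizations) --- the same tacit convention the paper itself uses in stating the lemma, so this is not a gap relative to the paper's level of rigor.
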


\medskip

\noindent Assuming that the equation (\ref{EqinLemma2.2}) is solvable, we write $\hat {\underline u} = \{\hat{\underline  u}_{\underline e}\}_{{\underline e}\in \mathcal E}$ in the form of (\ref{S2Formulahatuezlambda}) with
$c_{\underline e}(0,\lambda)$, $c_{\underline e}(\ell_{\underline e},\lambda)$ being the vertex values of $\hat {\underline u}(v)$ at $v = {\underline e}(0)$ and $v = {\underline e}(\ell_{\underline e})$, respectively. Then we have
 $$ 
 \hat {\underline u}\big|_{\mathcal V} = \big(- \widehat\Delta_{\mathcal V,\lambda} +
 \widehat Q_{\mathcal V,\lambda}\big)^{-1}\widehat T_{\mathcal V,\lambda}\underline{\hat f}.
 $$ 
 Note further that the adjoint operator $(\widehat T_{\mathcal V,\lambda})^{\ast}$ acts as
\begin{equation}
 \begin{split}
\big( (\widehat T_{\mathcal V,\lambda})^{\ast}\,\hat{\underline g}\big)_{\underline e}(z) & =
 \sum_{v = {\underline e}(0)}\frac{1}{d_v}\frac{\phi_{{\underline e}0}(z,\overline{\lambda})}{\phi_{{\underline e}0}(\ell_{\underline e},\overline{\lambda})}\,\underline{\hat g}(v)  \\
 & =  \frac{1}{d_{{\underline e}(0)}}\frac{\phi_{{\underline e}0}(z,\overline{\lambda})}{\phi_{{\underline e}0}(\ell_{\underline e},\overline{\lambda})}\underline{\hat g}({\underline e}(0)) + \frac{1}{d_{{\underline e}(\ell_{\underline e})}}
 \frac{\phi_{{\underline e}1}(z,\overline{\lambda})}{\phi_{{\underline e}1}(0,\overline{\lambda})}\,\underline{\hat g}({\underline e}(\ell_{\underline e})),
 \end{split}
\label{DefineTVlambdaasthatg=frac1dVphi/phig(0)}
 \end{equation}
where in the first line we consider both orientations of the edge $\underline e$, while in the second line we fix one orientation. Now we define the operator $r_{\mathcal E}(\lambda)$ on $\mathcal{E}$ by
 $$ 
 r_{\mathcal E}(\lambda)\underline{\hat f} = r_{\underline e}(\lambda)\underline{\hat f}_{\underline e} \quad \text{on}\;\; {\underline e},
 $$ 
and we arrive at the following Krein-type formula expressing the resolvent through its comparison to that of the Dirichlet-decoupled graph.

\begin{lemma}
\label{S2LemmaRElambdaformula}
The resolvent $\widehat R_{\mathcal E}(\lambda) = (\widehat H_{\mathcal E} - \lambda)^{-1}$ is expressed as
$$ 
\widehat R_{\mathcal E}(\lambda) = (\widehat T_{\mathcal V, \overline{\lambda}})^{\ast}\big(- \widehat\Delta_{\mathcal V,\lambda} +
 \widehat Q_{\mathcal V,\lambda}\big)^{-1}\widehat T_{\mathcal V,\lambda} + r_{\mathcal E}(\lambda).
$$ 
\end{lemma}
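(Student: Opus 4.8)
The plan is to verify that the operator on the right-hand side, call it $\widehat S(\lambda) := (\widehat T_{\mathcal V, \overline{\lambda}})^{\ast}\big(- \widehat\Delta_{\mathcal V,\lambda} + \widehat Q_{\mathcal V,\lambda}\big)^{-1}\widehat T_{\mathcal V,\lambda} + r_{\mathcal E}(\lambda)$, sends an arbitrary $\underline{\hat f} \in L^2(\mathcal E)$ to a function $\hat{\underline u} \in D(\widehat H_{\mathcal E})$ solving $(\widehat H_{\mathcal E} - \lambda)\hat{\underline u} = \underline{\hat f}$; since $\widehat H_{\mathcal E}$ is self-adjoint and $\lambda \in \mathbb C \setminus \mathbb R$, the resolvent is the unique such solution operator, so this identification suffices. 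First I would set $\hat{\underline u} := \widehat S(\lambda)\underline{\hat f}$ and compute $\hat{\underline u}$ edgewise. On each edge $\underline e$, the term $r_{\mathcal E}(\lambda)\underline{\hat f}$ contributes $r_{\underline e}(\lambda)\underline{\hat f}_{\underline e}$, which is the Dirichlet solution on $I_{\underline e}$ of $(-d^2/dz^2 + V_{\underline e} - \lambda)(\cdot) = \underline{\hat f}_{\underline e}$ vanishing at both endpoints; the term $(\widehat T_{\mathcal V, \overline{\lambda}})^{\ast}(\cdots)$ is, by the explicit formula \eqref{DefineTVlambdaasthatg=frac1dVphi/phig(0)} for $(\widehat T_{\mathcal V,\overline\lambda})^\ast$, a linear combination of the homogeneous solutions $\phi_{\underline e 0}$ and $\phi_{\underline e 1}$. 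Hence $\hat{\underline u}_{\underline e}$ is automatically a solution of $-\hat{\underline u}_{\underline e}'' + V_{\underline e}\hat{\underline u}_{\underline e} - \lambda \hat{\underline u}_{\underline e} = \underline{\hat f}_{\underline e}$ on each edge; this already matches the form \eqref{S2Formulahatuezlambda}, the only content being the specific coefficients $c_{\underline e}(0,\lambda)$, $c_{\underline e}(\ell_{\underline e},\lambda)$.

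Next I would identify those coefficients. Evaluating \eqref{DefineTVlambdaasthatg=frac1dVphi/phig(0)} at $z = 0$: since $\phi_{\underline e 0}(0,\lambda) = 0$ and $\phi_{\underline e 1}(0,\lambda)/\phi_{\underline e 1}(0,\lambda) = 1$, and the Dirichlet term $r_{\underline e}(\lambda)\underline{\hat f}_{\underline e}$ vanishes at $z=0$, one gets $\hat{\underline u}_{\underline e}(0) = \tfrac{1}{d_{\underline e(0)}}\,\hat{\underline g}(\underline e(0))$ where $\hat{\underline g} := (- \widehat\Delta_{\mathcal V,\lambda} + \widehat Q_{\mathcal V,\lambda})^{-1}\widehat T_{\mathcal V,\lambda}\underline{\hat f}$. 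Wait — here I must be careful: comparing with the definition of $\widehat T_{\mathcal V,\lambda}$ and the discussion preceding the lemma, the correct bookkeeping is that $\hat{\underline u}|_{\mathcal V} = \hat{\underline g}$, i.e. the factor $1/d_v$ and the summation over both orientations in \eqref{DefineTVlambdaasthatg=frac1dVphi/phig(0)} are exactly arranged so that the two edge-endpoint contributions combine to give the single vertex value; so I would carefully match the normalization, using that $\phi_{\underline e 1}(z,\lambda) = \phi_{\underline e 0}(\ell_{\underline e}-z,\lambda)$ from (M-4), to conclude $c_{\underline e}(0,\lambda) = \hat{\underline u}_{\underline e}(0) = \hat{\underline g}(\underline e(0))$ and likewise $c_{\underline e}(\ell_{\underline e},\lambda) = \hat{\underline g}(\underline e(\ell_{\underline e}))$. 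In particular $\hat{\underline u}$ is continuous at every vertex, being equal to $\hat{\underline g}(v)$ from every incident edge, so $\hat{\underline u} \in C(\Gamma)$, and $\hat{\underline u} \in H^2(\mathcal E)$ since $\underline{\hat f} \in L^2(\mathcal E)$ and $V_{\underline e} \in L^1$.

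Finally I would check the $\delta$-coupling condition \eqref{GeneralizedKirchhoff}. By construction $\hat{\underline g}$ solves \eqref{EqinLemma2.2}, namely $(- \widehat\Delta_{\mathcal V,\lambda} + \widehat Q_{\mathcal V,\lambda})\hat{\underline g} = \widehat T_{\mathcal V,\lambda}\underline{\hat f}$; by Lemma \ref{S2LemmaDeltamathcalVmathcalV=Tlambda} this is precisely the rewritten form \eqref{S2deltacouplingrewritten} of the $\delta$-coupling condition for the function $\hat{\underline u}$ of the form \eqref{S2Formulahatuezlambda} with vertex values $\hat{\underline g}$. Hence $\hat{\underline u}$ satisfies $\sum_{\underline e \sim v}\hat{\underline u}_{\underline e}'(v) = C_v \hat{\underline u}(v)$ at each $v$, so $\hat{\underline u} \in D(\widehat H_{\mathcal E})$ and $(\widehat H_{\mathcal E} - \lambda)\hat{\underline u} = \underline{\hat f}$. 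Uniqueness of the resolvent then gives $\widehat R_{\mathcal E}(\lambda) = \widehat S(\lambda)$. I expect the main obstacle to be purely bookkeeping: tracking the $1/d_v$ weights, the two edge orientations in \eqref{DefineTVlambdaasthatg=frac1dVphi/phig(0)}, and the sign conventions for outward normal derivatives so that the combination of $r_{\underline e}(\lambda)$'s boundary derivative with the homogeneous-solution derivatives reproduces exactly \eqref{S2deltacouplingrewritten} — there is no analytic difficulty once the indices are aligned, and one should double-check solvability of \eqref{EqinLemma2.2} (invertibility of $- \widehat\Delta_{\mathcal V,\lambda} + \widehat Q_{\mathcal V,\lambda}$ for $\lambda \notin \mathbb R$), which follows since otherwise $\widehat H_{\mathcal E}$ would have a non-real eigenvalue, contradicting self-adjointness.
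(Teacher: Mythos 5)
Your overall route is the same as the paper's: the lemma is obtained there exactly by combining the edgewise representation \eqref{S2Formulahatuezlambda}, the rewritten coupling condition \eqref{S2deltacouplingrewritten} (Lemma \ref{S2LemmaDeltamathcalVmathcalV=Tlambda}), and the adjoint formula \eqref{DefineTVlambdaasthatg=frac1dVphi/phig(0)}; the paper gives no separate proof beyond this assembly. However, your closing argument for the solvability of \eqref{EqinLemma2.2} contains a genuine gap. The operator $-\widehat\Delta_{\mathcal V,\lambda}+\widehat Q_{\mathcal V,\lambda}$ has non-real coefficients, so it is a bounded but \emph{non-self-adjoint} operator on $\ell^2(\mathcal V)$, and in \S \ref{Metricgraphsection} the graph is allowed to be infinite. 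Your argument (``otherwise $\widehat H_{\mathcal E}$ would have a non-real eigenvalue'') at best rules out a nonzero kernel -- and even that requires checking that the edgewise extension of a kernel vector lies in $H^2(\mathcal E)\cap D(\widehat H_{\mathcal E})$, which needs uniform two-sided bounds of the type \eqref{S5u0u1andUL2eareequivalent} over all edges. It does not give surjectivity or a bounded inverse: a non-self-adjoint operator on an infinite-dimensional $\ell^2(\mathcal V)$ can be injective without being boundedly invertible. This is precisely the point the paper itself flags: it explicitly ``admits Lemma \ref{S2LemmaRElambdaformula} as a formal formula'' and postpones the existence of $(-\widehat\Delta_{\mathcal V,\lambda}+\widehat Q_{\mathcal V,\lambda})^{-1}$ to \S \ref{SectionQunatumGraphSpectralTheory}, where it is controlled by writing $-\widehat\Delta_{\mathcal V,\lambda}+\widehat Q_{\mathcal V,\lambda}=\frac{\sqrt{\lambda}}{\sin\sqrt{\lambda}}\bigl(-\widehat\Delta_{\mathcal V}-E(\lambda)+\widehat W_{\mathcal V,\lambda}\bigr)$ and invoking the limiting absorption principle and a Neumann series (cf. Lemma \ref{LemmahatDeltaVlambda-1}). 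Your uniqueness argument for the resolvent is fine; the missing ingredient is a genuine proof that the inverse exists as a bounded operator, for which spectral information about $-\widehat\Delta_{\mathcal V}$ relative to $E(\lambda)$ (not merely self-adjointness of $\widehat H_{\mathcal E}$) must be used. For a \emph{finite} graph your injectivity-implies-invertibility reasoning would suffice, but that is not the setting of this lemma.

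A secondary caution: the coefficient identification you hedge on (``the normalizations are exactly arranged so that\dots'') is exactly where the nontrivial bookkeeping sits. As written, evaluating \eqref{DefineTVlambdaasthatg=frac1dVphi/phig(0)} at $z=0$ produces the weight $1/d_{\underline e(\ell_{\underline e})}$ times the value at the \emph{far} endpoint, so the matching of $(\widehat T_{\mathcal V,\overline\lambda})^{\ast}$ against the homogeneous part of \eqref{S2Formulahatuezlambda} requires carefully combining the two edge orientations, the reflection identity $\phi_{\underline e1}(z,\lambda)=\phi_{\underline e0}(\ell_{\underline e}-z,\lambda)$ from (M-4), and the $1/d_v$ factors also hidden in \eqref{DefineTVlambdaf(v)=1/dvsume(0)=vintIedz} and \eqref{S2deltacouplingrewritten}; asserting that it ``works out'' is not yet a verification, and you should carry it through explicitly (a one-edge example is a good sanity check).
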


\medskip

\noindent Let us note here that for $\lambda \notin \mathbb{R}$, the coefficients of $\widehat\Delta_{\mathcal V,\lambda}$ and $\widehat Q_{\mathcal V,\lambda}$ are not real and hence the existence of the inverse $(- \widehat\Delta_{\mathcal V, \lambda} + \widehat Q_{\mathcal V, \lambda})^{-1}$ is not obvious. We postpone its justification until \S \ref{SectionQunatumGraphSpectralTheory}, and admit Lemma \ref{S2LemmaRElambdaformula} as a formal formula for the moment.


\section{Inverse boundary value problem for a finite graph}
\label{SectionFinitegraphInverseproblem}
\subsection{The D-N maps}
\label{s: The D-N maps}

 In this section, we consider a finite graph $\Gamma = \{\mathcal V,\mathcal E\}$ with boundary $\partial\mathcal V$ and assume that

  \medskip
 \noindent
\textbf{(C-1)}\ \ {$\Gamma$ consists of two parts called boundary $\partial\mathcal V$ and interior $\mathcal V^{o}$ whose vertex sets are disjoint; each boundary vertex is connected to only one interior vertex. }

\medskip
\noindent Note that, topologically speaking, the notion of the graph boundary is not trivial;
here we use the freedom to determine it \emph{ad hoc} to suit our purposes.
 
Let $\widehat{H}_{\mathcal{E}}$ be the quantum graph Schr{\"o}dinger operator on the finite graph $\Gamma$  as in the previous section with Dirichlet boundary condition on the boundary $\partial \mathcal V$. 
We put
 \begin{equation}
{\sigma'} :=  \Big(\bigcup_{{\underline e}\in \mathcal E}\sigma(h_{\underline e})\Big) \cup
 \left\{{\lambda \in {\mathbb C}}\, ; \, \det(- \widehat \Delta_{\mathcal V,\lambda} + \widehat Q_{\mathcal V,\lambda}) = 0\right\},
 \label{DefineSigma}
 \end{equation}
which is discrete in ${\mathbb C}$, as $\Gamma$ is a finite graph.
Note $\sigma(\widehat H_{\mathcal E}) \subset \sigma'$. Let $ h_{\underline e}$ be the differential operator on $\underline e$ as in (\ref{edge Schrodinger}). Then
 for any $\lambda \not\in \sigma(\widehat H_{\mathcal E})$ and given boundary data $\underline{\hat f}$,  there is a unique solution $\hat{\underline u} = \{\hat{\underline u}_{\underline e}\}_{{\underline e}\in\mathcal E}$ to the equation
 \begin{equation}
 \left\{
 \begin{split}
& (h_{\underline e} - \lambda)\hat{\underline u}_e = 0 \quad {\rm on} \quad  \forall {\underline e} \in \mathcal E,\\
&
 \hat{\underline u} = \underline{\hat f} \quad {\rm on} \quad \partial\mathcal V, \\
 &  \delta {\text {\rm -coupling  condition}}\   (\ref{GeneralizedKirchhoff}).
 \end{split}
 \right.
 \label{S3BVPgraph}
\end{equation}
Here, as in (\ref{S2DomainH}), $\hat{\underline u}$ is assumed  to be in $C(\Gamma)$.
Using the solution $\hat{\underline u}$, we define the D-N map $\Lambda_{\mathcal E}(\lambda) : \mathbb{C}^m \to \mathbb{C}^m,\: m = \sharp\partial\mathcal V$, by
 \begin{equation}
 \Lambda_{\mathcal E}(\lambda) : \underline{\hat f} \to \hat{\underline u}'_{\underline e}(v), \quad {\underline e}(0) = v \in \partial\mathcal V.
 \label{DefineLambdaElamnda}
 \end{equation}
Note that $\hat{\underline u} = \{\hat{\underline  u}_e\}_{{\underline e}\in \mathcal E}$ is the solution to the edge Schr{\"o}dinger equation (\ref{S3BVPgraph}) if and only if $\hat{\underline u}\big|_{\mathcal V}$ is the solution to the vertex Schr{\"o}dinger equation (\ref{BVPDiscrete}) ( cf. \cite{Exner97}).

Under the  Dirichlet boundary condition on the boundary $\underline e(0)$ and $\underline e(\ell_{\underline e})$,  $h_{\underline e}$ has
discrete spectrum, and for any $\lambda \not\in \cup_{{\underline e}\in \mathcal E}\sigma(h_{\underline e})$, we have $\phi_{{\underline e}0}(\ell_{\underline e},\lambda) \neq 0$. Hence the weighted discrete Laplacian (\ref{reducedvertexLaplacian}) is well defined.
We consider the boundary value problem for the corresponding Schr\"odinger-type operator $\widehat H_{\mathcal V,\lambda} =: - \widehat \Delta_{\mathcal V,\lambda} + \widehat Q_{\mathcal V,\lambda}$ on the vertex set $\mathcal V$ with the boundary value $\hat f$ on $\partial\mathcal V$, namely
  \begin{equation}
  \left\{
  \begin{split}
& \left(- \widehat\Delta_{\mathcal V,\lambda} +
 \widehat Q_{\mathcal V,\lambda}\right)\hat u(v) = 0, \quad v \in \mathcal V^{o} = \mathcal V\setminus\partial\mathcal V,\\
 & \hat u(v) = \hat f(v), \quad v \in \partial\mathcal V.
 \end{split}
 \right.
 \label{BVPDiscrete}
 \end{equation}
Using the solution $\hat u_{\mathcal V}$, which depends also on $\lambda$ and is denoted by $\hat u_{\mathcal V}(v,\lambda)$, we next define the D-N map for $\widehat H_{\mathcal V,\lambda}  : \mathbb{C}^m \to \mathbb{C}^m$ by
\begin{equation}
\Lambda_{\mathcal V}(\lambda) : \hat f \to
\frac{1}{\phi_{e0}(w,\lambda)} {\hat u_{\mathcal V}(w,\lambda)}, \quad
w = e(\ell_e), \quad v =  e(0) \in \partial\mathcal V.
\label{S3DNforHmathcalVlambda}
\end{equation}
 
Therefore, $\Lambda_{\mathcal E}(\lambda)$ and $\Lambda_{\mathcal V}(\lambda)$ are meromorphic functions of $\lambda$ with poles in the discrete set $\Sigma$. Recall that given a subset $A \subset \mathcal V$ and an edge $e \in \mathcal E$ {(or $\underline e$)}, we say that $e$ is adjacent to $A$, denoted as $e \sim A$ or $A \sim e$, if $e(0) \in A$ and $e(\ell_e) \not\in A$.

 \begin{lemma} \label{edge-vertex-equiv}
 \label{DNmathcalE=DNmathcalV}
 Assuming that we know $\ell_{\underline e}$ and $V_{\underline e}(z)$ for all $\underline e$ adjacent to $\partial\mathcal V$, then
 $\Lambda_{\mathcal E}(\lambda)$ and $\Lambda_{\mathcal V}(\lambda)$ determine each other for any $\lambda \not\in \sigma'$.
 \end{lemma}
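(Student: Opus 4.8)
The plan is to establish the equivalence by exploiting the explicit formula \eqref{S2Formulahatuezlambda} representing an edge solution of $(h_{\underline e}-\lambda)\hat{\underline u}_{\underline e}=0$ in terms of its two vertex values, since on an edge with no source term the Green-operator contribution $r_{\underline e}(\lambda)\hat f_{\underline e}$ vanishes. Thus for a solution of the boundary value problem \eqref{S3BVPgraph} we have, on each edge $\underline e$ with $\underline e(0)=v$ and $\underline e(\ell_{\underline e})=w$,
\begin{equation*}
\hat{\underline u}_{\underline e}(z,\lambda) = \hat{\underline u}(w)\,\frac{\phi_{\underline e 0}(z,\lambda)}{\phi_{\underline e 0}(\ell_{\underline e},\lambda)} + \hat{\underline u}(v)\,\frac{\phi_{\underline e 1}(z,\lambda)}{\phi_{\underline e 1}(0,\lambda)}.
\end{equation*}
As already recorded after \eqref{S2Formulahatuezlambda}, the $\delta$-coupling condition at interior vertices is then exactly \eqref{EqinLemma2.2} with $\hat f=0$, i.e. $(-\widehat\Delta_{\mathcal V,\lambda}+\widehat Q_{\mathcal V,\lambda})\hat{\underline u}|_{\mathcal V}=0$ on $\mathcal V^o$. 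Hence $\hat{\underline u}$ solves \eqref{S3BVPgraph} with boundary data $\underline{\hat f}$ if and only if the trace $\hat{\underline u}|_{\mathcal V}$ solves the discrete problem \eqref{BVPDiscrete} with the same boundary data — this is precisely the correspondence quoted after \eqref{DefineLambdaElamnda}. Note that by (C-1) every boundary vertex $v\in\partial\mathcal V$ has a unique incident edge $\underline e$, and we are assuming $\ell_{\underline e}$ and $V_{\underline e}$ are known for all such edges, so $\phi_{\underline e0}(z,\lambda)$, $\phi_{\underline e1}(z,\lambda)$, $\phi_{\underline e0}'(z,\lambda)$ are known functions on each boundary edge.

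It remains to compute how the two D-N maps transform into one another along such a boundary edge $\underline e$ with $\underline e(0)=v\in\partial\mathcal V$, $\underline e(\ell_{\underline e})=w\in\mathcal V^o$. Differentiating the displayed formula at $z=0$ and using $\phi_{\underline e0}'(0,\lambda)=1$ together with $\phi_{\underline e1}'(0,\lambda)=-\phi_{\underline e0}'(\ell_{\underline e},\lambda)$ (a consequence of (M-4), cf.\ the remark after \eqref{phie1}), one gets
\begin{equation*}
\big(\Lambda_{\mathcal E}(\lambda)\underline{\hat f}\big)(v) = \hat{\underline u}_{\underline e}'(0,\lambda) = \frac{\hat{\underline u}(w)}{\phi_{\underline e0}(\ell_{\underline e},\lambda)} - \phi_{\underline e0}'(\ell_{\underline e},\lambda)\,\hat f(v) = \big(\Lambda_{\mathcal V}(\lambda)\underline{\hat f}\big)(v) - \phi_{\underline e0}'(\ell_{\underline e},\lambda)\,\hat f(v),
\end{equation*}
using the definition \eqref{S3DNforHmathcalVlambda} of $\Lambda_{\mathcal V}(\lambda)$ for the first term. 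Therefore $\Lambda_{\mathcal E}(\lambda)$ and $\Lambda_{\mathcal V}(\lambda)$ differ by the explicit, known diagonal matrix with entries $\phi_{\underline e0}'(\ell_{\underline e},\lambda)$ indexed by boundary vertices $v$ with incident edge $\underline e$; since this correction term is computable from the prescribed data $\ell_{\underline e}$, $V_{\underline e}$, each of the two maps determines the other. The identity is first derived for $\lambda\notin\sigma(\widehat H_{\mathcal E})$, where the solution of \eqref{S3BVPgraph} exists and is unique, and extends to all $\lambda\notin\sigma'$ by meromorphic continuation, both sides being meromorphic in $\lambda$ with poles confined to $\Sigma$ as noted before the lemma.

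The only genuinely delicate point is bookkeeping rather than analysis: one must keep the parametrization of each boundary edge consistent — orienting $\underline e$ so that $\underline e(0)=v\in\partial\mathcal V$ — and match the outward normal derivative convention \eqref{Defined/dzef(v)=f'e(v)} against the sign conventions built into \eqref{phie0}–\eqref{phie1} and into \eqref{S3DNforHmathcalVlambda}. Once the solution on the (known) boundary edges is written out explicitly in terms of vertex values, the equivalence is a one-line linear-algebra statement: the two D-N maps are conjugate up to an additive, explicitly known operator supported on $\partial\mathcal V$. I do not anticipate any obstruction beyond this careful sign/parametrization tracking.
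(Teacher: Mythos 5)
Your proposal is correct and follows essentially the same route as the paper: write the edge solution on the (known) boundary edges in terms of its vertex values via $\phi_{\underline e0},\phi_{\underline e1}$, use the equivalence of the edge boundary value problem \eqref{S3BVPgraph} with the vertex problem \eqref{BVPDiscrete}, and conclude that $\Lambda_{\mathcal E}(\lambda)$ and $\Lambda_{\mathcal V}(\lambda)$ differ only by quantities computable from $\ell_{\underline e}$, $V_{\underline e}$ on the edges adjacent to $\partial\mathcal V$. One minor slip: since $\phi_{\underline e1}(0,\lambda)=\phi_{\underline e0}(\ell_{\underline e},\lambda)$, the correction term in your displayed identity should read $-\frac{\phi_{\underline e0}'(\ell_{\underline e},\lambda)}{\phi_{\underline e0}(\ell_{\underline e},\lambda)}\,\hat f(v)$ rather than $-\phi_{\underline e0}'(\ell_{\underline e},\lambda)\,\hat f(v)$, which does not affect the conclusion because this factor is still explicitly known.
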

\begin{proof}
Given the solution $\hat{\underline u}$  to (\ref{S3BVPgraph}), the corresponding $\hat{\underline  u}\big|_{\mathcal V}$ solves (\ref{BVPDiscrete}). Conversely, given the solution $\hat u_{\mathcal V}$ of (\ref{BVPDiscrete}), we define $\hat{\underline u}$ by
$$
\hat u_{\underline e}(z) = c_{\underline e}(\ell_{\underline e},\lambda)\frac{\phi_{{\underline e}0}(z,\lambda)}{\phi_{{\underline e}0}(\ell_{\underline e},\lambda)} + c_{\underline e}(0,\lambda)\frac{\phi_{{\underline e}1}(z,\lambda)}{\phi_{{\underline e}1}(0,\lambda)},
$$
where on the edge with the initial vertex $v = {\underline e}(0) \in \partial \mathcal V$, we put
$$
c_{\underline e}(0,\lambda) = \underline{\hat f}(v).
$$ 
The function $\hat u$ defined in this way solves (\ref{S3BVPgraph}). The D-N map for $\widehat H_{\mathcal E}$ is
$$
 \Lambda_{\mathcal E}(\lambda) :
\underline{\hat f} \to c_{\underline e}(\ell_{\underline e},\lambda)\frac{1}{\phi_{{\underline e}0}(\ell_{\underline e},\lambda)} +
\underline{\hat f}(v)\frac{\phi_{{\underline e}1}'(0,\lambda)}{\phi_{{\underline e}1}(0,\lambda)}, \quad v = {\underline e}(0) \in \partial\mathcal V.
$$
The D-N map for $\widehat H_{\mathcal V,\lambda}$ is, by (\ref{S3DNforHmathcalVlambda}), {taking $w = {\underline e}(\ell_{\underline e})$}, 
\begin{equation}
 \Lambda_{\mathcal V}(\lambda) :
\underline{\hat f} \to \frac{1}{\phi_{{\underline e}0}(\ell_{\underline e},\lambda)}
\left(
c_{\underline e}(\ell_{\underline e},\lambda) +
\frac{\underline{\hat f}(v)}{\phi_{{\underline e}1}(0,\lambda)}
\right).
\label{S2DNmapforwidehatDeltamathcalVlambda}
\end{equation}
Since we know $\phi_{{\underline e}0}(z,\lambda)$, $\phi_{{\underline e}1}(z,\lambda)$ for edges $\underline e$ adjacent to $\partial\mathcal V$, the knowledge of the D-N maps for both the $\widehat H_{\mathcal E}$ and $\widehat H_{\mathcal V,\lambda}$ is thus equivalent to that of the initial value problem or the two-point boundary value problem for $h_{\underline e} - \lambda$ on each edge $\underline e$. Consequently, the two D-N maps are equivalent.
\end{proof}

\subsection{A reminder: inverse problem for the discrete graph Laplacian.}
\label{Sec:3.2DiscreteGraphLaplacian}
To make this paper self-contained, let us recall a result obtained in \cite{BILL} as follows.  
 We say that the collection $\mathbb{G}= \{G, \partial G, E, \mu, g\}$ is a weighted discrete graph with boundary, if it satisfies the following conditions.
\begin{itemize}
\item {\it $\{G\cup \partial G,E\}$ is an undirected simple discrete graph, where $G \cup \partial G$ is the set of vertices and $E$ is the set of edges. Assume that $G\cap\partial G = \emptyset$. We call $G$ the interior of the graph, and call $\partial G$ the boundary of the graph. }
\item {\it $\mu : G \cup \partial G \to \mathbb{R}_+$ is a weight function on vertices.}
\item {\it $g : E \to \mathbb{R}_+$ is a weight function on edges.}
\end{itemize}
We say $\mathbb{G}$ is finite (resp. connected) if $\{G\cup \partial G,E\}$ is finite (resp. connected).
When the weights $\mu,g$ are not relevant in a specific context, we write $\{G,\partial G,E\}$ for short.
In \S \ref{Sec:3.2DiscreteGraphLaplacian} and \S \ref{section-scatteringproof}, we use $x, y, z$ to
refer {to} vertices in $G$.

Given a subset $S \subset G$, we say that $x_0 \in S$ is an {\it extreme point} of $S$ with respect to $\partial G$ if
$$
\exists z \in \partial G  \textrm{ such that }  d(x_0,z) < d(x,z), \ \forall x \in S, \ x \neq x_0,
$$
where $d(x,y)$ is the distance of $x, y \in G\cup\partial G$ understood as the minimum number of edges forming a path connecting the two points $x,y$. The following {\it Two-Points Condition} for $\{G,\partial G,E\}$ is imposed:

\medskip
\noindent
\textbf{(C-2)} {\it For any subset $S \subset G$ with $\sharp S \geq 2$, there exist at least two extreme points of $S$ with respect to $\partial G$. }

\medskip

We consider the set of points adjacent to the boundary defined as
$$ 
N(\partial G) = \{x \in G\, ; \, \exists z \in \partial G, \textrm{ such that } x \sim z\} \cup \partial G.
$$ 
We say that two weighted graphs with boundary $\mathbb{G}$, $\mathbb{G}'$ are {\it boundary isomorphic} if there exists a bijection $\Phi_0 : N(\partial G) \to N(\partial G')$ with the following properties.

\smallskip
(i) $\Phi_0\big|_{\partial G} : \partial G \to \partial G'$ is bijective.

  \smallskip
  (ii) For any $z \in \partial G,\, y \in N(\partial G)$ the equivalence
  $ y \sim z \Longleftrightarrow \Phi_0(y) \sim'\Phi_0(z)$ holds.

\smallskip
\noindent
  The graph Laplacian $\Delta_G$ is defined by
  $$ 
  \left(\Delta_Gu\right)(x) = \frac{1}{\mu_x}\sum_{y\sim x, y \in G\cup \partial G}g_{xy}\big(u(y) - u(x)\big),
  x \in G,
  $$ 
 and the Neumann derivative at the boundary is defined by
  \begin{equation}
  \left(\partial_{\nu}u\right)(z) = \frac{1}{\mu_z}\sum_{x\sim z, x \in G}g_{xz}\big(u(x) - u(z)\big), \quad
  z \in \partial G.
\label{S3partialnudefine}
  \end{equation}
Moreover, adding a potential function $q$ on $G$ to $\Delta_G$, we can define the D-N map in the same way as in the previous section.

\smallskip
  The following result is valid, cf. Theorems~1 and 2 in \cite{BILL}.

\begin{theorem}
\label{TheoremBILL}
Let $\mathbb{G} = \{G, \partial G, E, \mu, g\}$ and $\mathbb{G}' = \{G', \partial G', E', \mu', g'\}$ be two finite weighted graphs with boundary satisfying (C-1), (C-2), and let $q, q'$ be real-valued potential functions on $G, G'$. Suppose $\mathbb{G}$ and $\mathbb{G}'$ are boundary isomorphic via $\Phi_0$, and their D-N maps coincide for all energies. Then, there exists a bijection $\Phi : G \cup \partial G \to G'\cup\partial G'$ such that

\smallskip
\noindent
(1) \ $\Phi\big|_{\partial G} = \Phi_0\big|_{\partial G}$.

\smallskip
\noindent
(2) \
$x \sim y \Longleftrightarrow \Phi(x) \sim' \Phi(y)$, $\quad \forall\, x, y \in G\cup \partial G$,

\smallskip
\noindent
where $x' \sim' y'$ means that $x', y'$ are adjacent in $G'\cup\partial G'$.

\smallskip
Identifying vertices of $\mathbb{G}$ with those of $\mathbb{G}'$ by this bijection, assume furthermore that $\mu_z = \mu_z'$, $g_{xz} = g'_{xz}$ for all $z \in \partial G$, $x \in G$. Then we have

\smallskip
\noindent
(3) If $\mu = \mu'$, then $g = g'$, $q = q'$.

\smallskip
\noindent
(4) If $q = q' = 0$, then $\mu = \mu'$ and $g = g'$.

\smallskip
\noindent
 In particular, if $\mu(v) = {\rm deg}\,v$ and $\mu'(v') = {\rm deg}\,v'$ holds for all $v\in G$ and $v'\in G'$, respectively, then $g = g'$, $q = q'$.
\end{theorem}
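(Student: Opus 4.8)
\emph{Overview and Step 1 (from the D-N map to boundary walk sums).}
The plan is to run a discrete analogue of the boundary control method: first convert the D-N map at all energies into combinatorial ``boundary walk data'', and then peel off the vertices of $G$ in order of increasing distance to $\partial G$, recovering at each step the new vertex, its adjacencies, and the local weights and potential; the Two-Points Condition (C-2) is exactly what guarantees the peeling never gets stuck. Concretely, write $H_G=-\Delta_G+q$ with Dirichlet condition on $\partial G$, let $R(\lambda)=(H_G-\lambda)^{-1}$ act on the interior, and use a Schur-complement (Krein-type) identity of the same shape as Lemma~\ref{S2LemmaRElambdaformula} to write $\Lambda_{\mathbb G}(\lambda)=\Lambda_0-b^{\ast}R(\lambda)b$, where $\Lambda_0$ and the boundary-coupling operator $b$, which sends $z\in\partial G$ to $g_{\iota(z)z}\,\delta_{\iota(z)}$ at the unique interior neighbour $\iota(z)$ of $z$ (using (C-1)), are explicit once the boundary weights are given. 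Expanding $R(\lambda)=-\sum_{n\ge0}\lambda^{-n-1}H_G^n$ for $|\lambda|$ large shows that the Taylor expansion of $\Lambda_{\mathbb G}(\lambda)$ at $\lambda=\infty$, which the hypothesis furnishes, determines every matrix entry $\langle\delta_{\iota(z)},H_G^n\delta_{\iota(z')}\rangle$ for $z,z'\in\partial G$ and every $n$ (up to the a priori positive factors $g_{\iota(z)z}g_{\iota(z')z'}$; for parts (1)--(2), where only the graph structure is sought, only the non-vanishing pattern of these entries is needed, so no knowledge of the boundary weights is required there). Since $H_G^n$ is supported on vertex pairs at distance $\le n$ and each such entry is a sum over length-$n$ walks of products of (symmetrised) edge weights and diagonal potential terms, this is equivalent to knowing all these weighted boundary walk sums. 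Alternatively one may pass to the discrete wave equation with boundary sources and use a Blagovestchenskii-type identity to recover interior inner products of waves.

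\emph{Step 2 (peeling off the graph).}
We induct on a growing set $U\subseteq G\cup\partial G$ with $\partial G\subseteq U$, on which an adjacency-preserving bijection $\Phi$ onto the corresponding subset of the primed graph has been built, together with $\mu,g,q$ on $U$ and its incident edges; initially $U=\partial G$, $\Phi=\Phi_0$ there, and the known data are the boundary weights. If $U\ne G\cup\partial G$, put $S=(G\cup\partial G)\setminus U\subseteq G$; by (C-2), $S$ has at least two extreme points with respect to $\partial G$. Fix one, $x_0$, with a witness $z\in\partial G$ such that $r:=d(x_0,z)<d(x,z)$ for every other $x\in S$. Connectedness forces every neighbour of $x_0$ at distance $r-1$ from $z$ to lie in $U$; moreover every closed walk at $z$ of length $<2r$ stays inside $U$, and a closed walk at $z$ of length exactly $2r$ either stays in $U$ or consists of a geodesic $z\to x_0$ followed by a geodesic $x_0\to z$. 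Hence, subtracting from the known order-$2r$ walk sum at $z$ the part coming from $U$, one isolates a sum of \emph{strictly positive} multiples of weight products along geodesic pairs through $x_0$; running over all $n$, all boundary witnesses, and the second extreme point guaranteed by (C-2), the boundary walk sums detect $x_0$, its full neighbourhood $\{y\in U:\ y\sim x_0\}$, and the associated weights. The identical computation applied to the primed boundary data reconstructs the corresponding vertex of the primed graph, so $\Phi$ extends to $x_0$ with $x_0\sim y\Longleftrightarrow\Phi(x_0)\sim'\Phi(y)$. Adjoining $x_0$ to $U$ and iterating — legitimate because (C-2) holds for every subset and $\sharp S$ strictly decreases — terminates and yields the bijection $\Phi$ with properties (1) and (2).

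\emph{Step 3 (weights, potential, and the corollary).}
The order-$2r$ walk sums isolated above involve, as the only unknowns at the step that adjoins $x_0$, the edge weights $g_{x_0y}$ with $y\in U$ adjacent to $x_0$ together with $\mu_{x_0}$; the diagonal entries $(H_G)_{x_0x_0}$, which is where $q(x_0)$ enters, become accessible once the whole neighbourhood of $x_0$ (including its $S$-side) has been peeled. If $\mu$ is known — in particular when $\mu(v)=\deg v$, which is already determined by the recovered adjacency — each order-$2r$ relation gives the $g_{x_0y}$ directly and the diagonal data then give $q(x_0)$; this proves (3) and the final assertion. If instead $q\equiv0$, the unknowns at $x_0$ are $\mu_{x_0}$ and the $g_{x_0y}$, and positivity together with the walk sums at orders $\ge 2r$ and at a \emph{second} witness of $x_0$ — here is precisely where one uses that $S$ has two extreme points — yields enough independent relations to solve for all of them, proving (4).

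\emph{Main obstacle.}
The delicate part is Step 2: one must show that at the critical order the ``new'' portion of the boundary walk sum is a sum with strictly positive coefficients over a family of walks determined by $U$ and the single vertex $x_0$ alone, so that it can be separated from the known part and genuinely detects $x_0$ together with its adjacencies; and one must verify that the resulting relations pin the weights down (and, when $q\equiv0$, also $\mu$) with no residual layer-by-layer scaling ambiguity, and that the reconstruction is independent of the choices of extreme point, so that it produces the \emph{same} graph from the same data. The Two-Points Condition is exactly the geometric hypothesis that makes all of this work — it forbids an interior ``pocket'' of vertices equidistant from $\partial G$ that the boundary could not tell apart. The complete argument is carried out in \cite{BILL}.
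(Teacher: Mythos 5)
The paper itself does not prove Theorem \ref{TheoremBILL}: it is imported from \cite{BILL} (``cf.\ Theorems 1 and 2 in \cite{BILL}''), where it is obtained from Neumann boundary spectral data by a boundary-control type argument (cf.\ the surrounding Remark and Lemma \ref{ND}, which convert the N-D/D-N maps at all energies into boundary spectral data). Your proposal is therefore not the paper's route; it is an independent sketch of a different mechanism --- large-$\lambda$ expansion of the Schur complement, boundary walk sums, and layer-by-layer peeling --- and the point at which you defer (``the complete argument is carried out in \cite{BILL}'') does not repair it, because \cite{BILL} does not carry out the walk-sum peeling you describe. So the crucial steps of your route are established nowhere, and they are exactly the ones you flag as the ``main obstacle.''

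Concretely: (a) In Step 2 the subtraction of ``the part coming from $U$'' is not available at the moment you need it. Walks of length $\le 2r$ confined to $U$ pick up the diagonal entries $\frac{1}{\mu_x}\sum_{y\sim x}g_{xy}+q(x)$ at vertices $x\in U$, and these involve the potential on $U$ and the weights of edges from $U$ into the not-yet-recovered set $S$; your induction hypothesis (``$\mu,g,q$ on $U$ and its incident edges'') presupposes knowledge of edges whose far endpoints have not yet been identified, so as stated the induction is circular unless one first shows that the full diagonal entries on $U$ are recoverable as numbers --- an interleaved recovery that is precisely the nontrivial content. Moreover, positivity isolates the new contribution only at the single critical order $2r$ (there all steps are off-diagonal and the signs cancel in pairs); at higher orders the off-diagonal entries $-g_{xy}/\mu_x$ and the arbitrary-sign potential permit cancellations, so ``running over all $n$'' needs justification. (b) In Step 1 the factors multiplying the resolvent entries are not only the given boundary weights: writing the D-N map via the interior block $H_{GG}$ one gets entries proportional to $g_{\iota(z)z}\,g_{\iota(z')z'}\,\mu_z^{-1}\mu_{\iota(z')}^{-1}\,[(H_{GG}-\lambda)^{-1}]_{\iota(z)\iota(z')}$, and the interior weights $\mu_{\iota(z')}$ are unknown, which matters for (3)--(4) even if not for the adjacency pattern. (c) In Step 3 the claim that the relations at orders $\ge 2r$ and at a second witness ``yield enough independent relations'' to determine $\mu_{x_0}$ and the $g_{x_0y}$ when $q\equiv 0$ is the actual uniqueness statement (note the gauge $g\to cg$, $\mu\to c\mu$ is fixed only through the prescribed boundary weights), and it is asserted, not proven. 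If you intend a proof by citation --- which is all the present paper does --- you should invoke the argument of \cite{BILL} as it stands (boundary spectral data plus unique continuation and the Two-Points Condition), rather than attribute to it a walk-counting reconstruction it does not contain.
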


\begin{remark}
Let us add three remarks. \\
\noindent
(1)  The theorems in \cite{BILL} that we refer to were formulated in terms of Neumann boundary spectral data; however, the claims hold for the Dirichlet boundary spectral data as well with a minor modification of the proof.
 \\
 \noindent
 (2) Under the conditions (C-1), (C-2), the Neumann boundary spectral data determine the N-D maps for all energies, that is, the N-D map of $-\Delta_G - \lambda$ for all $\lambda$,  and vice versa, see Lemma \ref{ND} below.
In the same way, the Dirichlet boundary spectral data and the D-N maps for all energies determine each other.
\\
\noindent
(3) We can replace the assumption (C-1) by 

\smallskip
\noindent
{\bf (C-1)'} For any $z \in \partial G$ and any $x, y \in G$, if $ x \sim z$, $y \sim z$, then $x \sim y$.

\smallskip
\noindent
cf. \cite{BILL}. Inspecting Figures \ref{Periodic hexagonal lattice} -- \ref{Perturbed triangular lattice} in \S \ref{InverseScatteringDiscrete} below, we see that (C-1) is satisfied for the hexagonal lattice, but not, e.g., for the triangular lattice. The latter, however, is covered by (C-1)'.  All the arguments below work under the assumption (C-1)'
with minor modification. For the sake of simplicity, we adopt (C-1) in this paper.
\end{remark}

                                                                                                                                                                                                                                                                                                                                                                                                                                                                                                                                                                                                                                                                                                                                                                                                                                                                                                                                                                                                                                                                                                                                                                                                                                                                                                                                                                                                                                \section{Equilateral graphs}
Suppose we are given a {finite} quantum graph $\Gamma = \{\mathcal V, \mathcal E\}$ satisfying
 (C-1), (C-2). We further assume that there exist a number $\ell_{\mathcal E}$ and a function $V_{\mathcal E}(z)$ such that
\begin{equation}
\ell_e = \ell_{\mathcal E}, \quad V_e(z) = V_{\mathcal E}(z), \quad \forall e \in \mathcal E.
\label{S3EquilateralCond}
\end{equation}
Moreover, assume that
\begin{equation}
k_{\mathcal V} := \frac{C_v}{d_v} \ {\rm is}  \ {\rm independent} \ {\rm of} \ v \in \mathcal V.
\label{S3Equivertexpotentialcond}
\end{equation}
Let $\phi_{0}(z,\lambda)$ and $\phi_{1}(z,\lambda)$ be $\phi_{e0}(z,\lambda)$, $\phi_{e1}(z,\lambda)$  in \S  \ref{Metricgraphsection}. {By (\ref{reducedvertexLaplacian}) and  (\ref{reducedvertexpotential})},  the discrete graph Laplacian $\widehat\Delta_{\mathcal V,\lambda}$ and the vertex potential $\widehat{Q}_{\mathcal V,\lambda}$ can be rewritten as 
\begin{equation} 
\big(\widehat\Delta_{\mathcal V,\lambda}\hat u\big)(v) = \frac{1}{d_v} \frac{1}{\phi_0(\ell_{\mathcal E},\lambda)}
\sum_{w \sim v}\hat u(w), \quad v \in \mathcal V,
\label{S3DNforequilateralgraph}
\end{equation}
\begin{equation} \label{potential-equilateral}
\widehat{Q}_{\mathcal V,\lambda} = \frac{1}{\phi_0(\ell_{\mathcal E},\lambda)}E_{\mathcal E}(\lambda),
\quad
E_{\mathcal E}(\lambda) = \phi_0'(\ell_{\mathcal E},\lambda) + k_{\mathcal V}\phi_0(\ell_{\mathcal E},\lambda).
\end{equation}
Thus \eqref{S3DNforequilateralgraph} and \eqref{potential-equilateral} differ by a multiplicative constant $\phi_0(\ell_{\mathcal E},\lambda)$ from the discrete operator with the graph Laplacian $(\widehat \Delta_{\mathcal V}\hat u)(v):= \frac{1}{d_v}\sum_{w\sim v} \hat u(w)$ and potential $E_ {\mathcal E}(\lambda)$. 

This amounts to considering  a graph $\widetilde{\Gamma}$ with the same edge set $\mathcal E$ and the vertex set $\mathcal V$ as our original $\Gamma$, and 
{$\mu_v = d_v$, $g_{vw}=1$.}
We let $\lambda$ vary and use analytic continuation: if we are given the D-N map for the original quantum graph $\Gamma$ for all energies, we can obtain the D-N map of the above discrete operator $\widehat \Delta_{\mathcal V}$ for all energies, and, \emph{mutatis mutandis}, the Dirichlet boundary spectral data for $\widehat \Delta_{\mathcal V}$ under the conditions (C-1), (C-2). Note that the D-N map for the operator $\widehat \Delta_{\mathcal V}$ acts as $\hat u(v) {\mapsto} \hat u(w)$, $\,w \sim v \in \partial{\mathcal V}$, $w \in \mathcal V$; by (\ref{S2DNmapforwidehatDeltamathcalVlambda}) it can be computed from the D-N map of $\widehat{\Delta}_{\mathcal V,\lambda}$ if we know $\phi_0(z,\lambda)$, i.e. $\ell_{\mathcal E}$ and $V_{\mathcal E}(z)$.

Suppose now that we are given two such graphs $\widetilde\Gamma = \{\mathcal V, \mathcal E\}$ and $\widetilde\Gamma' = \{\mathcal V',\mathcal E'\}$. Applying then Theorem \ref{TheoremBILL} with $\mu_x = d_x$, $g_{xy} = 1$, we infer that there is a bijection
$$ 
\Phi : \widetilde\Gamma \to \widetilde\Gamma'
\label{S3PhiBijection}
$$ 
preserving the graph structure. Setting $v' = \Phi(v)$, we conclude that
$$ 
d_v = d_{v'}, \quad \forall v \in \mathcal V,
$$ 
and consequently
$$ 
C_v = C'_{v'}, \quad \forall v \in \mathcal V.
$$ 
In this way, we have proven the following theorem:
\begin{theorem}
\label{qunatumgraphstrutureTheoremBVP}
Let   $\Gamma = \{\mathcal V, \mathcal E\}$ and $\Gamma' = \{\mathcal V',\mathcal E'\}$ be two {finite} quantum graphs satisfying assumptions (C-1), (C-2), (\ref{S3EquilateralCond}), (\ref{S3Equivertexpotentialcond}) and $\ell_{\mathcal E} = \ell_{\mathcal E'}, V_{\mathcal E}(z) = V_{\mathcal E'}(z)$,  $k_{\mathcal V}= k_{\mathcal V'}$. 
Suppose that the D-N maps for the Schr{\"o}dinger operator for the two quantum graphs coincide for all energies.
Then there is a bijection $\Phi : \Gamma \to \Gamma'$ preserving the graph structure, and $d_v = d_{v'}$, $C_v = C'_{v'}$ hold for all $v \in \mathcal V$ and $v' = \Phi(v)$.
\end{theorem}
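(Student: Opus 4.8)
The plan is to push the assertion through the discrete reduction set up in \S \ref{Metricgraphsection} and then invoke Theorem~\ref{TheoremBILL}. First I would use the equilateral hypotheses (\ref{S3EquilateralCond}) and (\ref{S3Equivertexpotentialcond}) to rewrite, via (\ref{reducedvertexLaplacian}) and (\ref{reducedvertexpotential}), the reduced vertex operator $-\widehat\Delta_{\mathcal V,\lambda}+\widehat Q_{\mathcal V,\lambda}$ of Definition~\ref{DefinitionhatDeltaVlambda} in the form (\ref{S3DNforequilateralgraph})--(\ref{potential-equilateral}), that is, as the scalar $\phi_0(\ell_{\mathcal E},\lambda)$ times the $\lambda$-independent weighted discrete Schr\"odinger operator $-\widehat\Delta_{\mathcal V}+E_{\mathcal E}(\lambda)$ on the graph $\widetilde\Gamma$ with $\mu_v=d_v$, $g_{vw}=1$, where $E_{\mathcal E}(\lambda)=\phi_0'(\ell_{\mathcal E},\lambda)+k_{\mathcal V}\phi_0(\ell_{\mathcal E},\lambda)$. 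Since $\ell_{\mathcal E}=\ell_{\mathcal E'}$ and $V_{\mathcal E}=V_{\mathcal E'}$ are known, the transfer functions $\phi_0(z,\lambda)$, $\phi_1(z,\lambda)$, hence $\phi_0(\ell_{\mathcal E},\lambda)$ and $E_{\mathcal E}(\lambda)$, are explicitly known and common to both graphs, and $\widetilde\Gamma$, $\widetilde\Gamma'$ inherit (C-1), (C-2) from $\Gamma$, $\Gamma'$.

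Next I would transfer the data. By Lemma~\ref{edge-vertex-equiv}, knowing the quantum-graph D-N map $\Lambda_{\mathcal E}(\lambda)$ for all $\lambda\notin\sigma'$ is equivalent to knowing the discrete D-N map $\Lambda_{\mathcal V}(\lambda)$; and since multiplying the interior equation in (\ref{BVPDiscrete}) by the nonzero scalar $\phi_0(\ell_{\mathcal E},\lambda)$ leaves its solution unchanged, $\Lambda_{\mathcal V}(\lambda)$ is, up to the explicit factor displayed in (\ref{S2DNmapforwidehatDeltamathcalVlambda}), exactly the D-N map of $-\widehat\Delta_{\mathcal V}+E_{\mathcal E}(\lambda)$ on $\widetilde\Gamma$. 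Hence equality of the two quantum-graph D-N maps for all energies gives equality of the D-N maps of $-\widehat\Delta_{\mathcal V}+E_{\mathcal E}(\lambda)$ and $-\widehat\Delta_{\mathcal V'}+E_{\mathcal E}(\lambda)$ for all $\lambda$. I would then treat $E_{\mathcal E}(\lambda)$ as a free spectral parameter: $E_{\mathcal E}$ is a non-constant entire function of $\lambda$ (for instance $E_{\mathcal E}(\lambda)\to\infty$ as $\lambda\to-\infty$ along $\mathbb R$), so its range contains a nonempty open subset of $\mathbb C$; matching the D-N maps of $-\widehat\Delta_{\mathcal V}$ and $-\widehat\Delta_{\mathcal V'}$ at every spectral parameter in that open set and continuing meromorphically shows these agree at every spectral parameter in $\mathbb C$. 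By item~(2) of the Remark following Theorem~\ref{TheoremBILL}, $\widetilde\Gamma$ and $\widetilde\Gamma'$ then have the same Dirichlet boundary spectral data.

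Now I would apply Theorem~\ref{TheoremBILL} with $\mu_x=d_x$, $g_{xy}=1$ and zero potential, all the $\lambda$-dependence having been absorbed into the spectral parameter, after identifying $\partial\mathcal V$ with $\partial\mathcal V'$ (implicit in ``the D-N maps coincide'') and checking the boundary isomorphism, which under (C-1) amounts only to the easily verified fact that the partition of $\partial\mathcal V$ according to shared interior neighbours is read off from the D-N map. The theorem produces a bijection $\Phi:\mathcal V\cup\partial\mathcal V\to\mathcal V'\cup\partial\mathcal V'$ with $\Phi|_{\partial\mathcal V}=\Phi_0$ and $x\sim y\Leftrightarrow\Phi(x)\sim'\Phi(y)$; since the graphs are simple this induces a bijection of edge sets, so $\Phi:\Gamma\to\Gamma'$ preserves the graph structure. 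Preservation of adjacency gives $d_v=d_{\Phi(v)}$ for every $v$, whence by (\ref{S3Equivertexpotentialcond}) and $k_{\mathcal V}=k_{\mathcal V'}$ we get $C_v=k_{\mathcal V}d_v=k_{\mathcal V'}d_{v'}=C'_{v'}$, completing the proof.

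The step I expect to be the main obstacle is the passage from ``the D-N maps coincide for all energies $\lambda$'' to ``the D-N maps of the constant-coefficient discrete operators coincide at all spectral parameters'': one has to confirm that $E_{\mathcal E}$ is genuinely non-constant and to handle carefully the poles arising both from the Dirichlet eigenvalues of the edge operators (where $\phi_0(\ell_{\mathcal E},\lambda)=0$) and from the zeros of $\det(-\widehat\Delta_{\mathcal V,\lambda}+\widehat Q_{\mathcal V,\lambda})$, i.e.\ to argue entirely on the complement of the discrete set $\sigma'$ and then invoke meromorphic continuation. Everything else --- the algebraic rewriting, the use of Lemma~\ref{edge-vertex-equiv}, and the verification that the ad hoc boundary meets (C-1)--(C-2) so that Theorem~\ref{TheoremBILL} applies --- is routine bookkeeping.
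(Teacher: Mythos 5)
Your proposal is correct and follows essentially the same route as the paper: the equilateral hypotheses reduce the vertex operator to $\phi_0(\ell_{\mathcal E},\lambda)\bigl(-\widehat\Delta_{\mathcal V}+E_{\mathcal E}(\lambda)\bigr)$ with $\mu_v=d_v$, $g_{vw}=1$, the D-N data are transferred through Lemma~\ref{edge-vertex-equiv} and analytic continuation in the spectral parameter $E_{\mathcal E}(\lambda)$, and Theorem~\ref{TheoremBILL} then yields the structure-preserving bijection, from which $d_v=d_{v'}$ and $C_v=k_{\mathcal V}d_v=C'_{v'}$ follow. Your added care about the non-constancy of $E_{\mathcal E}$, the exceptional set $\sigma'$, and the boundary identification only makes explicit what the paper treats as routine.
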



\section{Inverse scattering for the discrete Hamiltonian}
\label{InverseScatteringDiscrete}

It is known that the potential of the discrete Schr\"odinger operator on periodic square 
or hexagonal lattices
can be uniquely recovered from the knowledge of the scattering matrix of all energies, see \cite{A1,IK}. Furthermore, the forward and inverse scattering problems have been considered for infinite graphs that are local perturbations of periodic lattices in \cite{AIM16,AIM18}. For several standard types of lattices, it was shown in \cite{AIM18} that the scattering matrix for the discrete Schr\"odinger operator on locally perturbed lattices determines the Dirichlet-to-Neumann map for the discrete Schr\"odinger equation on the perturbed subgraph. In this section, we apply Theorem \ref{TheoremBILL} to recover the potential on locally perturbed lattices, as well as to recover the structure of the perturbed subgraph (see Theorem \ref{mainscattering}). This result may be applied, in particular, to probe graphene defects from the knowledge of the scattering matrix, see Figures \ref{Periodic hexagonal lattice} and \ref{Perturbed hexagonal lattice}.

\begin{figure}[hbtp]
\centering
\includegraphics[width=5cm]{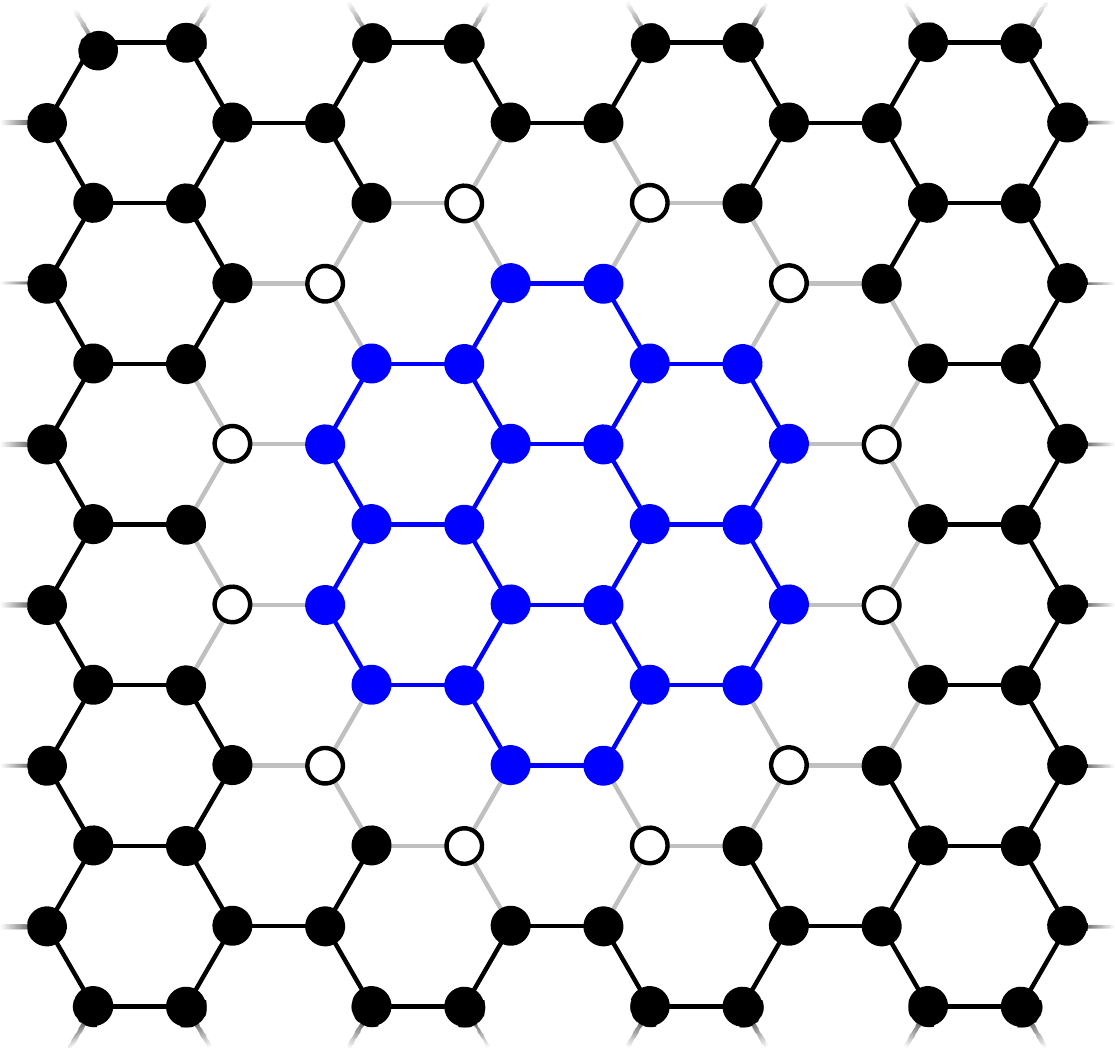}
\caption{Periodic hexagonal lattice. The white vertices are considered to be the boundary vertices for the subgraph of the blue (interior) vertices.}
\label{Periodic hexagonal lattice}
\end{figure}
 
\begin{figure}[hbtp]
\centering
\includegraphics[width=5cm]{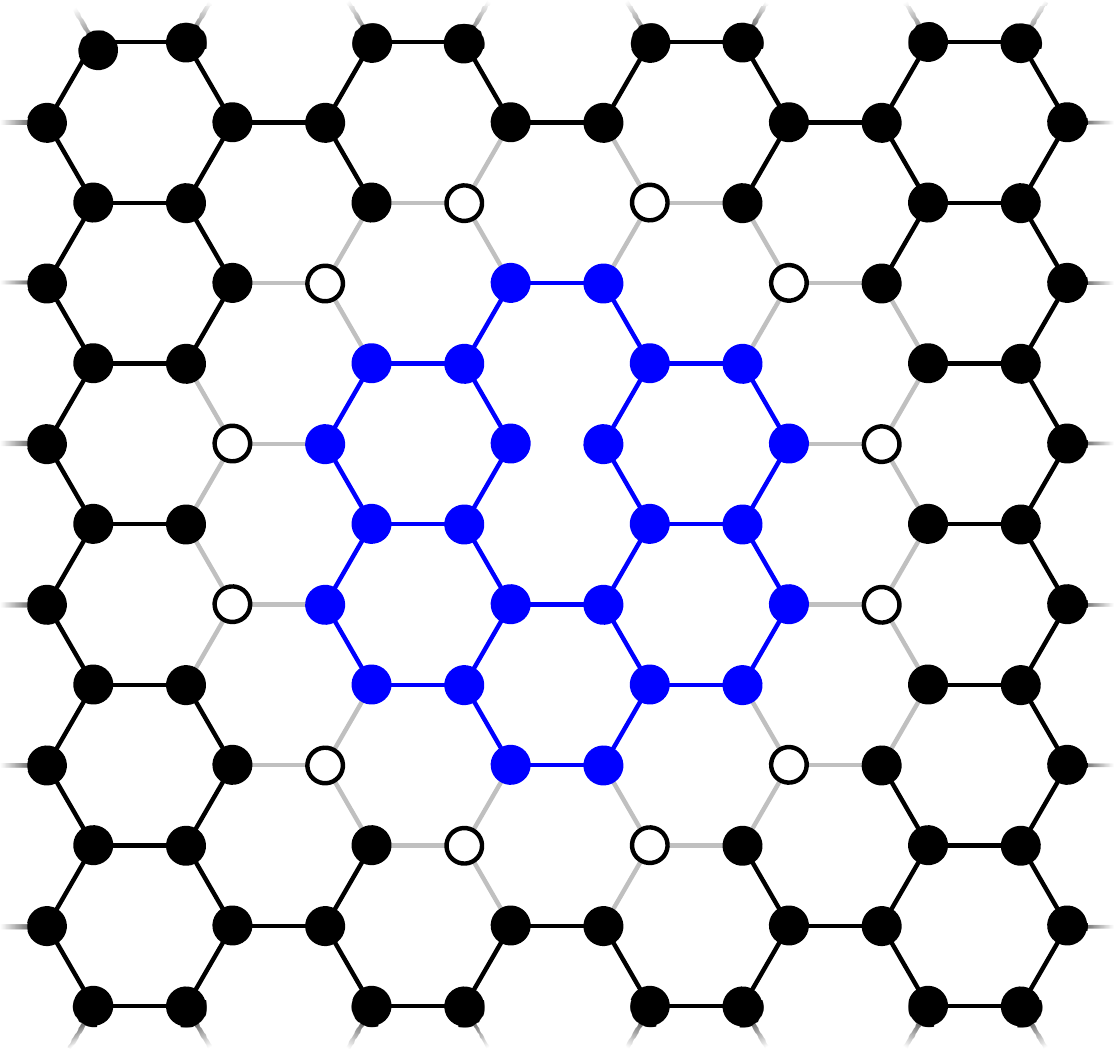}
\caption{A hexagonal lattice of Figure \ref{Periodic hexagonal lattice} with one blue edge removed. By Theorem \ref{mainscattering}, the exact structure of such graphs and the potential can be uniquely recovered from the scattering matrix.}
\label{Perturbed hexagonal lattice}
\end{figure}
\begin{figure}[hbtp]
\centering
\includegraphics[width=6cm]{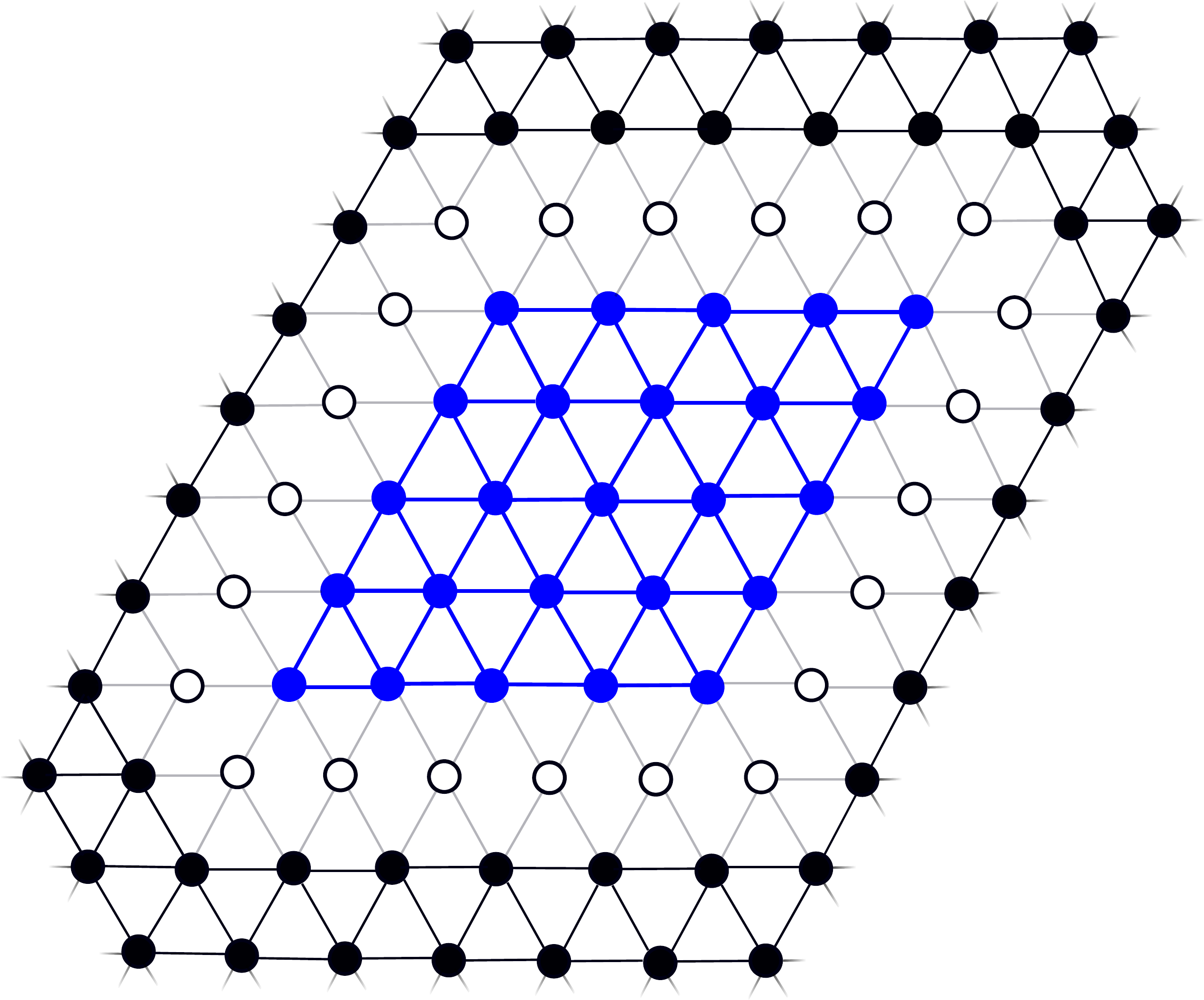}
\caption{A  triangular lattice satisfying (C-1)'.}
\label{Triangular lattice}
\end{figure}
\begin{figure}[hbtp]
\centering
\includegraphics[width=6cm]{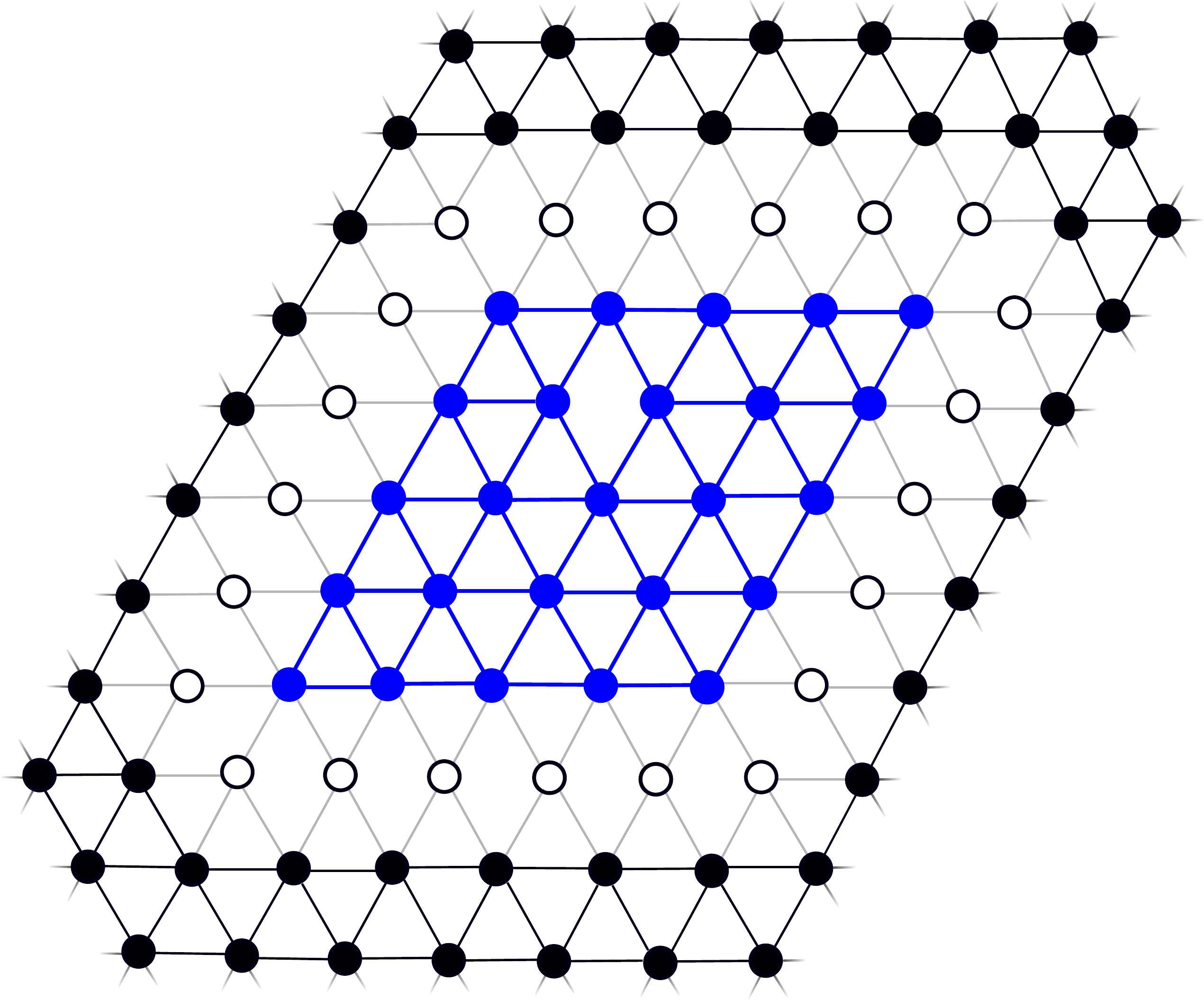}
\caption{A triangular lattice satisfying (C-1)' of Figure \ref{Triangular lattice} with one blue edge removed.  By Theorem \ref{mainscattering}, the exact structure of such graphs and the potential can be uniquely recovered from the scattering matrix.}
\label{Perturbed triangular lattice}
\end{figure}
\subsection{Periodic lattices and local perturbations}
\label{subsection-assumptions}
To begin with, we review a framework of the scattering theory on perturbed periodic lattices used in \cite{AIM16, AIM18}. 
 A periodic graph in $\mathbb{R}^d$ is a triple $\Gamma_0 = \{\mathcal L_0, \mathcal V_0, \mathcal E_0\}$, 
 where $\mathcal E_0$ is the edge set, and $\mathcal L_0$ is a lattice of rank $d$ in $\mathbb{R}^d$ with a basis ${\bf v}_j, j= 1,\cdots,d$, in other words
\begin{equation}
\mathcal L_0 = \big\{{\bf v}(n) : \, n \in
\mathbb{Z}^d\big\}, \quad
{\bf v}(n) = \sum_{j=1}^dn_j{\bf v}_j, \quad n =(n_1,\cdots,n_d) \in \mathbb{Z}^d.
\label{Definev(n)}
\end{equation}
The vertex set $\mathcal V_0$ is defined by
\begin{equation}
\mathcal V_0 = \bigcup_{j=1}^s \big(p_j + \mathcal L_0\big),
\label{S5VertexsetV0}
\end{equation}
where $p_j$, $j = 1,\cdots,s$, are points in $\mathbb{R}^d$ satisfying $p_i - p_j \not\in \mathcal L_0 \  {\rm if}\ i\neq j$. 
We assume that the degree of vertices are equal for all vertices $v \in \mathcal V_0$ and denote it by ${\rm deg}_{\mathcal V_0}$.
From (\ref{S5VertexsetV0}), we know that any function $\widehat f$ on $\mathcal V_0$ can be written as
$\widehat f(n) = (\widehat f_1(n),\cdots,\widehat f_s(n)), \  n \in \mathbb{Z}^d$, where $\widehat f_j(n)$ is  a function on $p_j + \mathcal L_0$. Hence the associated Hilbert space is $\ell^2(\mathcal V_0) = \ell^2(\mathbb{Z}^d)^s$, and it is unitarily equivalent to $L^2({\mathbb{T}}^d)^s$, where ${\mathbb{T}}^d$ is the flat torus $\mathbb{R}^d/(2\pi\mathbb{Z})^d$, by means of the  discrete Fourier transformation
\begin{equation}
(\mathcal U_{\mathcal V}\widehat f)(x) = \sqrt{\rm deg_{\mathcal V_0}}\,(2\pi)^{-d/2}\sum_{n\in{\mathbb Z}^d}\widehat f(n)\,e^{in\cdot x}, 
\quad {x \in {\mathbb T}^d},
\label{DefineMathcalUmathcalV}
\end{equation}
The Laplacian $\widehat \Delta_{\Gamma_0}$ on the lattice $\Gamma_0$ is defined by
\begin{equation}\label{Laplaciandef-3}
  \big(\widehat{\Delta}_{\Gamma_0} u\big)(v) = \frac{1}{{\rm deg}_{\mathcal{V}_0}}\sum_{w\in \mathcal{V}_0,e_{vw}\in \mathcal{E}_0} u(w),\quad v\in \mathcal{V}_0,
\end{equation}
{where, $e_{vw}$ denotes an edge $\in \mathcal E_0$ with end points $v, w \in \mathcal V_0$,}
 and we will use the symbol $\widehat H_0 = - \widehat\Delta_{\Gamma_0}$.

On the torus ${\mathbb{T}}^d = \mathbb{R}^d/(2\pi \mathbb{Z})^d$, the Floquet image of the Laplacian $\widehat{H}_0$ is an $s\times s$ matrix operator $H_0(x)$, where $x\in \mathbb{T}^d$ is the quasimomentum variable. We denote the matrix by $H_0$; its entries are trigonometric functions. Let $\lambda_1(x) \leq \cdots \leq \lambda_s(x)$ be the eigenvalues of $H_0(x)$. We put
\begin{align*}
p(x,\lambda) \, :=\,
\det(H_0(x) - \lambda),& \quad M_{\lambda} \,:=\, \{x \in {\mathbb{T}}^d : \,
p(x,\lambda) = 0\},
\\[.3em]
M_{\lambda,j} \,:=\, \{x \in {\mathbb{T}}^d :\, \lambda_j(x) = \lambda\},& \quad  M_{\lambda}^\mathbb{C} \,:=\, \{z \in \mathbb{C}^d/(2\pi\mathbb{Z})^d :\, p(z,\lambda) = 0\},
\\[.3em]
M_{\lambda,reg}^\mathbb{C} \,:=\,
\{z \in M_{\lambda}^\mathbb{C} :\, \nabla_zp(z,\lambda) \neq 0\},& \quad
M_{\lambda,sng}^\mathbb{C} \,:=\, \{z \in M_{\lambda}^\mathbb{C} :\, \nabla_zp(z,\lambda) = 0\}.
\end{align*}
In the spirit of \S\ref{s: The D-N maps}, we define
\begin{equation} \label{def-boundaryGamma0}
\partial_{\Gamma_0} \Omega :=\big\{v\in \mathcal{V}_0  \setminus \Omega\,\big|\, e_{vw} \in \mathcal{E}_0 \textrm{ for some }w\in \Omega \big\}.
\end{equation}
 
We impose the following assumptions on the periodic lattice $\Gamma_0$.

\medskip
\noindent
\textbf{(D-1)}
 {\it There exists a subset $\mathcal{T}_1\subset \sigma(H_0)$ such that for $\lambda\in \sigma(H_0)\setminus\mathcal{T}_1$, $M_{\lambda,sng}^{\mathbb{C}}$ is discrete, and each connected component of $M_{\lambda,reg}^{\mathbf{C}}$ intersects with $\mathbb{T}^d$, the intersection being a $(d-1)$-dimensional real analytic submanifold of $\mathbb{T}^d$.}

\smallskip
\noindent
\textbf{(D-2)} {\it There exists a finite set $\mathcal{T}_0\subset \sigma(H_0)$ such that}
$$
M_{\lambda,i}\cap M_{\lambda,j}=\emptyset\quad \textrm{ if }\:i\neq j \; \textrm{ and }\: \lambda\in \sigma(H_0)\setminus\mathcal{T}_0.
$$
\noindent
\textbf{(D-3)} {\it \  $\nabla_x p(x,\lambda)\neq 0\,$ holds on $M_{\lambda}$ for $\lambda\in \sigma(H_0)\setminus\mathcal{T}_0$.}

\smallskip
\noindent
\textbf{(D-4)}\, {\it The last assumption consists of two requirements: (i) On the unperturbed lattice $\Gamma_0$, there exist finite connected subsets $\{\Omega_k\}_{k=1}^{\infty}$ of $\mathcal{V}_0$ such that $\Omega_k\subset \Omega_{k+1},\,\mathcal{V}_0=\cup_{k=1}^{\infty}\Omega_k$, and the triple $(\Omega_k,\partial_{\Gamma_0}\Omega_k,\mathcal{E}_0)$ satisfies assumptions (C-1), (C-2) for all $k$, and (ii) the unique continuation from infinity holds on $\Omega_k^{ext} := \mathcal V_0\setminus \Omega_k$ for all $k$.}

\medskip
Assumption (D-4) requires a little explanation. For a subset $U \subset \mathcal V_0$ satisfying 
$\sharp (\mathcal V_0\setminus U) < \infty$, by the unique continuation from infinity on $U$, we mean the following claim. If $\hat u$ satisfies $(- \widehat{\Delta}_{\Gamma_0}- \lambda)\hat u = 0$ on $U$ for some $\lambda$ and $\hat u = 0$ near infinity, then  $\hat u$ vanishes on  whole $U$. Namely, if $\hat u$ satisfies $(- \widehat{\Delta}_{\Gamma_0}- \lambda)\hat u = 0$ on $U$ and $\hat u = 0$ on $|v| > R$ for some $R > 0$, then $\hat u = 0$ on $U$. 

On the other hand, the unique continuation from the boundary in the finite domain $\Omega_i$ follows from the first part of (D-4). 
Namely, if $(- \widehat{\Delta}_{\Gamma_0} - \lambda)\hat u = 0$ in $\Omega_i$ 
and $\hat u = \partial_{\nu}\hat u = 0$ on $\partial_{\Gamma_0}\Omega_i$, then $\hat u = 0$ in $\Omega_i$. This claim also holds for $- \widehat{ \Delta}_{\Gamma_0} + q(v)$ with any potential $q$, see Lemma 3.5 in \cite{BILL} or Lemma 2.4 in \cite {BILL2}.}

In particular, part (i) of (D-4)  implies the unique continuation property on $\mathcal V_0$ from infinity. 

\begin{lemma} 
\label{uc-infinity}
If part (i) of (D-4) is satisfied for $\Gamma_0$, then the unique continuation from infinity
holds for the unperturbed  equation $(-\widehat{\Delta}_{\Gamma_0} -\lambda)\hat u=0$ on $\Gamma_0$.
\end{lemma}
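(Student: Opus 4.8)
The plan is to reduce the unique continuation from infinity on the full periodic graph $\Gamma_0$ to the finite-domain unique continuation from the boundary that is already contained in part (i) of (D-4). Concretely, suppose $\hat u$ solves $(-\widehat{\Delta}_{\Gamma_0}-\lambda)\hat u=0$ on all of $\mathcal V_0$ and $\hat u$ vanishes outside a ball $\{|v|>R\}$. First I would fix an exhaustion index $k$ large enough that $\{|v|\le R\}\subset \Omega_k$; this is possible because $\mathcal V_0=\bigcup_k\Omega_k$ with $\Omega_k\subset\Omega_{k+1}$. Then $\hat u$ vanishes identically on the exterior piece $\Omega_k^{ext}=\mathcal V_0\setminus\Omega_k$, and in particular $\hat u=0$ on a neighborhood of $\partial_{\Gamma_0}\Omega_k$ inside $\mathcal V_0\setminus\Omega_k$.

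The key step is then to transfer this exterior vanishing into Cauchy data vanishing on $\partial_{\Gamma_0}\Omega_k$ for the interior problem. Since $\hat u$ equals zero at every vertex of $\partial_{\Gamma_0}\Omega_k$ and at every vertex adjacent to $\partial_{\Gamma_0}\Omega_k$ that lies outside $\Omega_k$, the discrete Neumann derivative $\partial_\nu\hat u$ at each boundary vertex — which by \eqref{S3partialnudefine} is a weighted combination of $\hat u$ at the boundary vertex and at its interior-side neighbors — together with the equation $(-\widehat{\Delta}_{\Gamma_0}-\lambda)\hat u=0$ evaluated just outside the boundary forces both $\hat u=0$ and $\partial_\nu\hat u=0$ on $\partial_{\Gamma_0}\Omega_k$. (Here one uses that $(\Omega_k,\partial_{\Gamma_0}\Omega_k,\mathcal E_0)$ satisfies (C-1), so each boundary vertex has a single interior neighbor, making the bookkeeping of which neighbors are "interior-side" versus "exterior-side" clean.) Now invoke the unique continuation from the boundary in the finite domain $\Omega_k$, which, as noted in the excerpt immediately after (D-4), follows from the first part of (D-4) (this is Lemma 3.5 in \cite{BILL}): from $(-\widehat{\Delta}_{\Gamma_0}-\lambda)\hat u=0$ in $\Omega_k$ and $\hat u=\partial_\nu\hat u=0$ on $\partial_{\Gamma_0}\Omega_k$ we conclude $\hat u=0$ in $\Omega_k$.

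Finally, combining $\hat u=0$ on $\Omega_k$ with $\hat u=0$ on $\Omega_k^{ext}$ gives $\hat u\equiv 0$ on $\mathcal V_0$, which is exactly the unique continuation from infinity for the unperturbed equation. The main obstacle I anticipate is the second step: making precise that exterior vanishing of $\hat u$ genuinely yields \emph{both} components of the discrete Cauchy data on $\partial_{\Gamma_0}\Omega_k$, rather than just $\hat u|_{\partial_{\Gamma_0}\Omega_k}=0$. This requires carefully using the graph equation at the exterior neighbors of the boundary and the structural hypothesis (C-1) (or (C-1)$'$) to ensure no "interior information" leaks through edges one has not accounted for; once the Cauchy data vanish, the rest is a direct citation of the finite-domain result. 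One should also remark that the argument is uniform in $k$, so it is enough that it works for all sufficiently large $k$, which is all (D-4)(i) provides.
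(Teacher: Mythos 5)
Your proposal is correct and follows essentially the same route as the paper: pick $\Omega_k$ containing the support of $\hat u$ via (D-4)(i), use the exterior vanishing together with the lattice equation to show that both $\hat u$ and $\partial_\nu\hat u$ vanish on $\partial_{\Gamma_0}\Omega_k$, and then invoke the finite-domain unique continuation from Cauchy data (Lemma 3.5 in \cite{BILL}, Lemma 2.4 in \cite{BILL2}) under (C-1), (C-2). The only point to sharpen is that the equation should be evaluated at the boundary vertices $z\in\partial_{\Gamma_0}\Omega_k$ themselves (which lie just outside $\Omega_k$), where $\sum_{x\sim z}\hat u(x)=-\lambda\,\mathrm{deg}(z)\,\hat u(z)=0$ and the exterior neighbors contribute nothing, rather than at their further-exterior neighbors, where the equation is trivially $0=0$; this is exactly the computation in the paper's proof.
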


\begin{proof}
If a solution $\hat u$ is finitely supported in $\mathcal V_0$, we can find $\Omega_k$ such that ${\rm supp}\,( \hat u)\subset \Omega_k$ by assumption (i) of (D-4). Then $\hat u$ vanishes outside $\Omega_k$ on the unperturbed lattice $\Gamma_0$ for some $k$. By definition (\ref{Laplaciandef-3}), we know for any $z\in \partial_{\Gamma_0}\Omega_k$,
\begin{align*}
\sum_{x\sim z, x\in \Omega_i}\big( \hat{u}(x)- \hat{u}(z)\big)
&= \sum_{x\sim z, x\in \mathcal{V}_0} \big(\hat{u}(x)- \hat{u}(z)\big) \\
&= {\rm deg}_{\mathcal{E}_0}(z)\, \widehat{\Delta}_{\Gamma_0} \hat{u}(z) =-{\rm deg}_{\mathcal{E}_0}(z)\,\lambda \hat{u}(z)=0.
\end{align*}
This indicates that $\hat u$ is a solution of the equation (\ref{Schro}) on $\big(\Omega_k,\partial_{\Gamma_0}\Omega_k,\mathcal{E}_0 \big)$ satisfying simultaneously the Dirichlet and Neumann boundary conditions. Hence $\hat u$ vanishes everywhere by Lemma 2.4 in \cite {BILL2}, provided that the subgraph $\big(\Omega_k,\partial_{\Gamma_0}\Omega_k,\mathcal{E}_0 \big)$ satisfies the assumptions (C-1) and (C-2).
\end{proof}

The assumption  (D-2) implies that the eigenvalues $\lambda_j(x)$ are simple for $\lambda \not\in \mathcal T_0$. 
For $\lambda \not\in \mathcal T_1$, (D-1) guarantees the Rellich type theorem (cf. Theorems 5.1 and 5.7 in \cite{AIM16}). Therefore, (D-1) and (D-4) yield the non-existence of embedded eigenvalues for $H_0(x)$ and its perturbation for the  energy $\lambda \not\in 
\mathcal T_0 \cup\mathcal T_1$.

For the square, triangular, hexagonal, Kagome, and diamond lattices, as well as for subdivisions of square lattices, 
the subset $\mathcal T_1$ is finite. On the other hand, for the ladder and `layered' graphite lattices, $\mathcal T_1$ fills closed intervals, cf.~ \S 5 in \cite{AIM16}.

By virtue of Proposition 1.10 in \cite {BILL}, our result applies to several standard types of lattices and their perturbations. As for examples illustrating (i) of (D-4), see Example \ref{ExampleTwoPointsCond} of the present  paper. The unique continuation from infinity on 
$\Omega_i^{ext}$ is seen to be satisfied for e.g.  the square, hexagonal, triangular lattices by directly examining the figures.

{Referring to the papers \cite{AIM16}, \cite{AIM18}, we note that their authors employed four assumptions, (A-1)--(A-4), of which the first three coincided with (D-1)--(D-3) above. The fourth assumption there, (A-4), follows from part (i) of (D-4) by Lemma \ref{uc-infinity}.

\smallskip
Now let us consider an infinite connected graph $\Gamma = \{\mathcal V, \mathcal E\}$, which is a local (meaning compactly supported) perturbation of the periodic lattice $\Gamma_0 = \{\mathcal L_0, \mathcal V_0, \mathcal E_0\}$ satisfying the assumptions (D-1)--(D-4) above. We assume that the lattice $\Gamma_0$ is perturbed only in a finite subset $\Omega\subset \mathcal V_0$ and the potential function is also supported in $\Omega$. Later we will further assume (C-1) and (C-2) 
for the perturbed part in $\Omega$.
Lemma \ref{uc-infinity} then holds also for the perturbed system by the same proof, see Lemma \ref{uc-infinity-perturbed}.

Let $\{G,E_{pert}\}$ be a finite connected graph which is a perturbation of the subgraph $\big\{\Omega,\:\big\{e_{vw}\in \mathcal{E}_0: v, w\in \Omega \big\}\big\}$ of $\Gamma_0$. Without loss of generality, we may assume $\Omega$ is chosen sufficiently large so that the perturbation does not remove the vertices (of $\Omega$) which are connected to the subgraph boundary $\partial_{\Gamma_0} \Omega$. We add an unperturbed layer of edges to $E_{pert}$ defining
\begin{equation} \label{def-GammaE}
E \,:=\, E_{pert}\cup \big\{ e_{vw}\in \mathcal{E}_0 \,\big|\, v\in \Omega, w\in \partial_{\Gamma_0}\Omega \big\}.
\end{equation}
Then the weighted graph
\begin{equation}\label{def-G-lattice}
\mathbb{G}_{\Gamma}:=\{G,\partial_{\Gamma_0}\Omega ,E,\mu,g\},
\end{equation}
where $\mu= \{\mu_v\, ; \, v \in G\}$, $g = \{g_{vw}\, ; \, v, w \in G, \ v \sim w\}$ are the vertex weight and edge weight, fits into our setting for finite graphs studied in \cite{BILL}. For the scattering problem in this section, we set 
$$\partial G=\partial_{\Gamma_0}\Omega.$$ 
Observe that the edges connecting $\partial G$ and $G$ are known, and that by construction there are no edges between vertices in $\partial G$.

In particular, we can simply choose the perturbed vertex set $\Omega$ to be $\Omega_k$ for some $k$ as assumed in part (i) of (D-4).
We define the following sets:
\begin{align}\label{intext}
&\mathcal{V}_{int} \,:=\, G\cup\partial G,\quad {\mathcal V^\circ_{int}} \,:=\, G,\quad \partial \mathcal{V}_{int} \,:=\, \partial G;\nonumber \\
&\mathcal{V}_{ext} \,:=\, \mathcal{V} \setminus G,\quad {\mathcal V^{\circ}_{ext}} \,:=\, (\mathcal{V} \setminus G) \setminus \partial G,\quad \partial \mathcal{V}_{ext}  \,:=\,  \partial G.
\end{align}
Then the unique continuation from infinity holds on $\mathcal{V}_{ext}$ due to part (ii) of (D-4). Hence $\mathcal{V}_{int}$ and $\mathcal{V}_{ext}$ satisfy assumptions (B-1)--(B-3) imposed in \cite{AIM18}, and consequently, the Hilbert space $\ell^2(\mathcal V)$ admits an orthogonal decomposition
$$
\ell^2(\mathcal V) = \ell^2({\mathcal V^\circ_{ext})}\oplus \, \ell^2({\mathcal V_{int}}).
$$
Denote by $\widehat P_{ext}$ the orthogonal projection:
\begin{equation}
\widehat P_{ext} : \ell^2(\mathcal V) \to \ell^2({\mathcal V^{\circ}_{ext}}).
\label{Pextdefine}
\nonumber
\end{equation}
The Laplacian ${\widehat \Delta}_{\Gamma}$ on the graph $\Gamma$ is defined in analogy with (\ref{Laplaciandef-3}), replacing $\mathcal{V}_0,\mathcal{E}_0$ by $\mathcal{V},\mathcal{E}$.
Adding then a bounded self-adjoint perturbation of $\widehat V$, which is assumed to vanish on $\mathcal V_{ext}$, we consider Hamiltonian $\widehat H$ of the form
\begin{equation}
\widehat H = - \widehat\Delta_{\Gamma} + \widehat V 
{: \ell^2(\mathcal V) \to \ell^2(\mathcal V)}.
\nonumber
\end{equation}

\smallskip
Note that in the forward scattering problem, following the arguments of \cite{AIM16} and those from \S 2--\S 5 of \cite{AIM18}, one can allow arbitrary structure modification on the finite part of the graph.


\subsection{Spectral representation and the S-matrix}
\label{susubsectionspectralrepreSmatrix}
Let us keep reviewing the needed results from \cite{AIM16} and \cite{AIM18}. In general, scattering is a time-dependent phenomenon, and the S-matrix is defined through the wave operators. However, it has the stationary counterpart which we employ here. Let us recall how it looks for a Schr\"odinger operator in ${\mathbb R}^n$. We introduce a Banach space $\mathcal B({\mathbb R}^n)^{\ast}$ consisting of $L^2_{loc}({\mathbb R}^n)$ functions $f(x)$ such that
\begin{equation}
\|f\|^2_{{\mathcal B}({\mathbb R}^n)^\ast} := \sup_{R>1}\frac{1}{R}\int_{|x|<R}|f(x)|^2dx < \infty,
\end{equation}
which is the dual space of the Banach space $\mathcal B({\mathbb R}^n)$ defined as follows,
\begin{equation}
\|f\|_{\mathcal B({\mathbb R}^n)} = \sum_{j=0}^{\infty}R_j
\left(\int_{\Omega_j}|f(x)|^2dx\right)^{1/2} < \infty,
\label{DefinemathcalBRn}
\end{equation}
where $R_j = 2^j$ and $\Omega_j = \{x \in {\mathbb R}^d\, ; \, R_{j-1} \leq |x| < R_j\}$; for $j=0$ we put $R_{-1}:=0$. These spaces give rise to a rigged structure of $L^2({\mathbb R}^n)$, namely
$$
\mathcal B \subset L^2({\mathbb R}^n) \subset \mathcal B^{\ast}
$$
with continuous inclusions. Given $u, v \in \mathcal B({\mathbb R}^n)^{\ast}$, we define
\begin{equation} \label{simeq}
u \simeq v \Longleftrightarrow \lim_{R\to\infty}\frac{1}{R}\int_{|x|<R} |u(x) - v(x)|^2dx = 0.
\end{equation}
We consider the Helmholtz equation
\begin{equation}
(- \Delta + V(x) - \lambda)u = 0 \quad {\rm in} \quad {\mathbb R}^n,
\label{C4SRnSchrodingerEq}
\end{equation}
where $\lambda > 0$ and $V(x)$ is a real function decaying sufficiently rapidly at infinity. Then, for any $\phi^{in} \in L^2(S^{n-1})$, there exist a unique $u \in \mathcal B({\mathbb R}^n)^{\ast}$ satisfying (\ref{C4SRnSchrodingerEq}) and $\phi^{out} \in L^2(S^{n-1})$ such that
\begin{equation}
u \simeq \frac{e^{i\sqrt{\lambda}r}}{r^{(n-1)/2}}\phi^{out} - \frac {e^{-i\sqrt{\lambda}r}}{r^{(n-1)/2}}\phi^{in}.
\end{equation}
The operator
$$
S(\lambda) : L^2(S^{n-1}) \ni \phi^{in} \to \phi^{out} \in L^2(S^{n-1})
$$
is unitary and can be identified, up to a unitary operator, with the on-shell S-matrix obtained by the direct-integral decomposition of the scattering operator defined in the time-dependent theory.

As for scattering on perturbed periodic lattices, in some cases one can argue in the same way as above, e.g., when a square lattice is concerned \cite{IsoMo15}. However, to deal with general lattices, it is more convenient to pass the problem on the torus by the discrete Fourier transform and to observe the singularities of solutions to the Helmholtz equation.

On the torus ${\mathbb T}^d$, the counterpart of the above space $\mathcal B({\mathbb R}^n)^{\ast}$ is defined  as follows. Let $\phi$ be a distribution on ${\mathbb  T}^d$. Multiplying it by a smooth cut-off function, passing to the Fourier transform in the appropriate local chart, and denoting the resulting function by $\widetilde\phi$, we define $\mathcal B({\mathbb T}^d)^{\ast}$ to be the set of distributions such that
\begin{equation}
\sup_{R>1}\frac{1}{R}\int_{|\xi|<R}|\widetilde\phi(\xi)|^2d\xi < \infty\,;
\end{equation}
for two distributions $\phi, \psi$ on ${\mathbb T}^d$, $\phi \simeq \psi$ means
\begin{equation}
\frac{1}{R}\int_{|\xi| < R}|\widetilde\phi(\xi) - \widetilde\psi(\xi)|^2d\xi \to 0 \quad {\rm as}  \quad {R \to \infty.}
\label{Equationsimmomentum}
\end{equation}
{We also define the space $\mathcal B({\mathbb T}^d)$  similarly to (\ref{DefinemathcalBRn}). See \S 4 of \cite{AIM16} and \S 2.4 of \cite{AIM18}.} 

Assume that the unperturbed periodic lattice $\Gamma_0$ satisfies the above assumptions (D-1)--(D-4). The spectral representation of $H_0$ is nothing but the diagonalization of $H_0(x)$. Let  $P_j(x)$ be the eigenprojection associated with the eigenvalue $\lambda_j(x)$. Let $I_j = \{\lambda_j(x)\, ; \, x \in \mathbb{T}^d\}\setminus \mathcal T_0$, and
\begin{equation}
M_{\lambda,j} = \left\{
\begin{split}
 & \{x \in \mathbb{T}^d\, ; \, \lambda_j(x) = \lambda\}, \quad \lambda \in I_j, \\
& {\emptyset}, \quad \lambda \not\in I_j.
\end{split}
\right.
\end{equation}
For
$\lambda \in \sigma(H_0)\setminus\mathcal T_0$, we have $M_{\lambda,i} \cap M_{\lambda,j} = \emptyset$ if $i \neq j$, hence each of them is a $C^{\infty}$-submanifold of ${\mathbb T}^d$.
We define the Hilbert spaces ${\bf h}_{\lambda,j}$ equipped with the inner product
$$
(\psi,\phi)_{L^2(M_{\lambda,j})} = \int_{M_{\lambda,j}}P_j(x)\psi(x)\cdot \overline{\phi(x)}\frac{dM_{\lambda,j}}{|\nabla\lambda_j(x)|},
$$
and put
\begin{equation}
{\bf h_{\lambda}} = {\bf h}_{\lambda,1} \oplus \cdots \oplus {\bf h}_{\lambda,s}.
\end{equation}
For $f \in \mathcal B({\mathbb T}^d)$, we define
\begin{equation}
\mathcal F_{0,j}(\lambda)f =  P_j(x)f(x)\big|_{M_{\lambda,j}}
\end{equation}
and
\begin{equation}
\mathcal F_0(\lambda)f = (\mathcal F_{0,1}(\lambda)f,\dots,\mathcal F_{0,s}(\lambda)f);
\end{equation}
in the spirit of the above orthogonal sum, we often write the right-hand side as $\sum_{j=1}^s\mathcal F_{0,j}(\lambda)f$.
Then the operators
\begin{equation}
\mathcal F_0(\lambda) \in {\bf B}(\mathcal B({\mathbb T}^d)\, ; \, {\bf h}_{\lambda})
\end{equation}
provide us with a spectral representation (or a generalized Fourier transformation) associated with $H_0$. It is related to the resolvent of $H_0$ in the following way,
\begin{equation}
(H_0 - \lambda \mp i0)^{-1}f \simeq \sum_{j=1}^s\frac{\mathcal F_{0,j}(\lambda)f}{\lambda_j(x) - \lambda \mp i0}, \quad f \in \mathcal B({\mathbb T}^d),
\end{equation}
{where the relation $\simeq$ is defined by \eqref{Equationsimmomentum}}. This shows that the generalized Fourier transform can be associated with the singular part of the resolvent of $H_0$ on the torus, which in turn describes the behavior at infinity of the resolvent of $\widehat H_0$ in the lattice space. Compared with the case of ${\mathbb R}^n$, the lattice and the torus here can be matched off against the position space and the momentum space, respectively.

The same fact  holds for the perturbed operator $\widehat H = - \widehat\Delta_{\Gamma} + \widehat V$ on $\ell^2(\mathcal V)$. One can easily check that $\sigma_e(\widehat H) = \sigma(\widehat H_0) = \sigma(H_0)$, and furthermore, that $\sigma_p(\widehat H)\cap \sigma_e(\widehat H)$ is discrete in $\sigma_e(H_0)\setminus \mathcal T_0$ with possible accumulation points in $\mathcal T_0$ only \cite[Lemma~7.5]{AIM16}. In the following we consider $\lambda \in \sigma(H_0)\setminus(\mathcal T_0\cup\sigma_p(\widehat H))$. Define $\mathcal B = \mathcal B(\mathcal V)$ and ${\mathcal B^{\ast}} = \mathcal B(\mathcal V)^{\ast}$ as direct sums,
\begin{equation}
\mathcal B(\mathcal V) = \mathcal B(\mathcal V_{ext}) \oplus \ell^2({{\mathcal V}^{\circ}_{int}}), \quad \mathcal B(\mathcal V)^{\ast} = \mathcal B(\mathcal V_{ext})^{\ast} \oplus \ell^2({{\mathcal V}^{\circ}_{int}}),
\end{equation}
where the spaces $\mathcal B(\mathcal V_{ext})$ and $\mathcal B(\mathcal V_{ext})^{\ast}$ are defined on the torus in the way described above\footnote{More explicitly, the norm of ${\mathcal B(\mathcal V_0)}^{\ast}$ is defined by
$
\|\widehat u\|^2_{{\mathcal B(\mathcal V_0)}^{\ast}}= \sup_{R>1}
\frac{1}{R}\sum_{|n|<R}|\widehat u(n)|^2,
$
while in the case of $\mathcal V_{ext}$, the sum ranges over vertices of the set $\mathcal V_{ext}$ only.}.
Denoting $\widehat R(z) := (\widehat H - z)^{-1}$ and assuming $\lambda \in \sigma_e(\widehat H)\setminus (\mathcal T_0 \cup \sigma_p(\widehat H))$, we have
\begin{equation}
\widehat R(\lambda \pm i0) \in \bf{B}(\mathcal B\, ; \, \mathcal B^{\ast}).
\end{equation}
The generalized Fourier transformation $\mathcal F_{\pm}(\lambda)$ associated with $\widehat H$ is given by\footnote{To get \eqref{genFourier}, we employ the resolvent equation, cf. the argument preceding Theorem~\ref{EigenfunctionExpansionWholespace} in \S \ref{SubsecSpectralRepresentation2}, in particular, the formula (\ref{DefineFjlambdaperturbed}).}
\begin{equation} \label{genFourier}
\mathcal F_{\pm}(\lambda) = \mathcal F_0(\lambda)\,\mathcal U_{\mathcal V}\widehat Q_1(\lambda \pm i0)\,\mathcal U_{\mathcal V}^{\ast},
\end{equation}
where
\begin{equation}
\widehat Q_1(z) = \widehat P_{ext} + \widehat K_1\widehat R(z),
\quad \widehat K_1 = \widehat H_0\widehat P_{ext} - \widehat P_{ext}\widehat H.
\end{equation}
It is related to the resolvent in the following way, see Theorems~7.7 and 7.15 in \cite{AIM16}:
\begin{theorem}
Let $\lambda \in \sigma_e(\widehat H)\setminus \big(\mathcal T_0 \cup \sigma_p(\widehat H)\big)$. For $f \in \mathcal B$ we have the relation
\begin{equation}
\mathcal U_{\mathcal V}\widehat P_{ext}\widehat R(\lambda \pm i0)f \simeq \sum_{j=1}^s\frac{\mathcal F_{\pm, j}(\lambda)f}{\lambda_j(x) - \lambda \mp i0}.
\end{equation}
\end{theorem}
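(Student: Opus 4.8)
The plan is to derive the relation from the corresponding one for the unperturbed operator $\widehat H_0$, which has been recorded above, by means of an exact resolvent identity, and then to pass to the boundary values $z\to\lambda\pm i0$ via the limiting absorption principle.

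\emph{Step 1: a resolvent identity.} Write $\widehat R_0(z):=(\widehat H_0-z)^{-1}$ on $\ell^2(\mathcal V_0)$, and use the canonical inclusion $\ell^2(\mathcal V_{ext})\hookrightarrow\ell^2(\mathcal V_0)$, which is legitimate since $\mathcal V_{ext}$ carries no perturbation. The first step is to prove
\[
\widehat P_{ext}\widehat R(z)\,=\,\widehat R_0(z)\,\widehat Q_1(z),\qquad z\notin{\mathbb R}.
\]
This is purely algebraic: starting from $\widehat Q_1(z)=\widehat P_{ext}+\widehat K_1\widehat R(z)$ with $\widehat K_1=\widehat H_0\widehat P_{ext}-\widehat P_{ext}\widehat H$, one multiplies on the left by $\widehat R_0(z)$ and inserts $\widehat R_0(z)\widehat H_0=I+z\widehat R_0(z)$ and $\widehat H\widehat R(z)=I+z\widehat R(z)$; all the $z$-dependent terms cancel, leaving $\widehat R_0(z)\widehat Q_1(z)=\widehat R_0(z)\widehat P_{ext}+\widehat P_{ext}\widehat R(z)-\widehat R_0(z)\widehat P_{ext}=\widehat P_{ext}\widehat R(z)$. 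The only point requiring care is that $\widehat K_1$ is of finite rank with range supported in a neighbourhood of the interface $\partial G$ — because $\widehat H$ and $\widehat H_0$ agree off the perturbation and $\widehat P_{ext}$ differs from the identity only near $\partial G$ — so that both sides are well defined; this is exactly the framework of \cite{AIM16,AIM18}.

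\emph{Step 2: conjugation and limiting absorption.} Conjugating by the discrete Fourier transform and using the Floquet--Bloch identity $\mathcal U_{\mathcal V}\widehat R_0(z)\mathcal U_{\mathcal V}^{\ast}=(H_0-z)^{-1}$, the previous identity becomes
\[
\mathcal U_{\mathcal V}\widehat P_{ext}\widehat R(z)\,=\,(H_0-z)^{-1}\,\mathcal U_{\mathcal V}\widehat Q_1(z),\qquad z\notin{\mathbb R}.
\]
Now fix $\lambda\in\sigma_e(\widehat H)\setminus(\mathcal T_0\cup\sigma_p(\widehat H))$ and let $z\to\lambda\pm i0$. Since $\widehat R(\lambda\pm i0)\in{\bf B}(\mathcal B;\mathcal B^{\ast})$ and $\widehat K_1$ has finitely supported range, $\widehat Q_1(\lambda\pm i0)f=\widehat P_{ext}f+\widehat K_1\widehat R(\lambda\pm i0)f$ lies in $\mathcal B(\mathcal V_0)$ for every $f\in\mathcal B$, and hence $g_{\pm}:=\mathcal U_{\mathcal V}\widehat Q_1(\lambda\pm i0)f\in\mathcal B({\mathbb T}^d)$. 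Applying the spectral representation of $H_0$ recorded above to $g_{\pm}$,
\[
(H_0-\lambda\mp i0)^{-1}g_{\pm}\,\simeq\,\sum_{j=1}^s\frac{\mathcal F_{0,j}(\lambda)g_{\pm}}{\lambda_j(x)-\lambda\mp i0},
\]
while by the very definition \eqref{genFourier} one has $\mathcal F_{0,j}(\lambda)g_{\pm}=\mathcal F_{\pm,j}(\lambda)f$. Letting $z\to\lambda\pm i0$ in the displayed identity of this step and combining with the last display yields the asserted relation.

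\emph{The main obstacle.} The genuinely delicate part is not the algebra but the passage to the real axis: one has to verify that $z\mapsto\mathcal U_{\mathcal V}\widehat Q_1(z)$ extends continuously from $\{\pm{\rm Im}\,z>0\}$ up to each $\lambda\in\sigma_e(\widehat H)\setminus(\mathcal T_0\cup\sigma_p(\widehat H))$, as a bounded map into $\mathcal B({\mathbb T}^d)$, and that the equivalence $\simeq$ is preserved under this limit, so that the identity established for $z\notin{\mathbb R}$ survives. Both of these are provided by the limiting absorption machinery of \cite{AIM16} (see Theorems~7.7 and 7.15 there); granting them, the remainder is bookkeeping.
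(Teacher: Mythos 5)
Your argument is correct: the algebraic identity $\widehat P_{ext}\widehat R(z)=\widehat R_0(z)\widehat Q_1(z)$ checks out, and combining it with the limiting absorption principle and the unperturbed relation $(H_0-\lambda\mp i0)^{-1}g\simeq\sum_j\mathcal F_{0,j}(\lambda)g/(\lambda_j(x)-\lambda\mp i0)$ is exactly the intended route. The paper itself offers no proof here — it quotes Theorems 7.7 and 7.15 of \cite{AIM16} — but your derivation matches that source's method and is the same resolvent-equation argument the paper carries out explicitly for the quantum-graph analogue in Theorem \ref{ResolventExpansionPerturbedCase}, so this is essentially the same approach.
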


As in the case of ${\mathbb R}^n$, the S-matrix is defined by means of the Helmholtz equation.

\begin{theorem}
\label{TheoremHelmhotzeqmathcalV}
(1) For any solution $\widehat u \in \widehat{\mathcal B}$ to the equation
\begin{equation}
(\widehat H - \lambda)\widehat u = 0,
\end{equation}
there exist unique pair of vectors $\phi^{in}, \phi^{out} \in {\bf h}_{\lambda}$ such that
\begin{equation}
\mathcal U_{\mathcal V}\widehat P_{ext}\widehat u \simeq \sum_{j=1}^s \frac{1}{2\pi i}\left(\frac{\phi^{out}_j}{\lambda_j(x) - \lambda - i0} - \frac{\phi^{in}_j}{\lambda_j(x) - \lambda + i0}\right).
\label{Expansionofuforvertexcase}
\end{equation}
Moreover, the operator $S(\lambda) \in \bf{B}({\bf h}_{\lambda}\, ; \, {\bf h}_{\lambda})$ defined by
\begin{equation}
S(\lambda) = 1 - 2\pi iA(\lambda),
\label{S4.2Smatrix=1-2piAlambda}
\end{equation}
where
\begin{equation}
A(\lambda) := \mathcal F_+(\lambda)\,\mathcal U_{\mathcal V}\widehat K_2\,\mathcal U_{\mathcal V}^{\ast}\mathcal F_0(\lambda)^{\ast}, \quad \widehat K_2 := \widehat H\widehat P_{ext} - \widehat P_{ext}\widehat H_0,
\label{S4.2DefineA(lambda)}
\end{equation}
is unitary on ${\bf h}_{\lambda}$, and satisfies
\begin{equation} \label{onshell}
\phi^{out} = S(\lambda)\phi^{in}.
\end{equation}
 (2)
For any $\phi^{in} \in {\bf h}_{\lambda}$, there is a unique $\widehat u \in \widehat{\mathcal B}$ and $\phi^{out} \in {\bf h}_{\lambda}$ such that
\begin{equation}
(\widehat H - \lambda)\widehat u = 0,
\end{equation}
and relations \eqref{Expansionofuforvertexcase}, \eqref{onshell} are satisfied.
\end{theorem}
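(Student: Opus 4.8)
The plan is to derive Theorem~\ref{TheoremHelmhotzeqmathcalV} from three ingredients already prepared above, all valid for $\lambda\in\sigma_e(\widehat H)\setminus(\mathcal T_0\cup\sigma_p(\widehat H))$ under assumptions (D-1)--(D-4): the limiting absorption principle $\widehat R(\lambda\pm i0)\in\mathbf{B}(\mathcal B;\mathcal B^\ast)$, the spectral representations $\mathcal F_0(\lambda),\mathcal F_\pm(\lambda)$ together with the resolvent expansion of the preceding theorem, and the Rellich-type uniqueness theorem for $\widehat H_0$. Conceptually this is the adaptation to $\widehat H=-\widehat\Delta_\Gamma+\widehat V$ of the stationary scattering theory of \cite{AIM16} (see also \cite{AIM18}), so several steps reduce to citing those results.

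\emph{Step 1 (existence of the expansion in part (1)).} Given a solution $\widehat u\in\widehat{\mathcal B}$ of $(\widehat H-\lambda)\widehat u=0$, put $w:=\widehat P_{ext}\widehat u$. Using $\widehat K_1=\widehat H_0\widehat P_{ext}-\widehat P_{ext}\widehat H$ and $(\widehat H-\lambda)\widehat u=0$ one gets $(\widehat H_0-\lambda)w=\widehat K_1\widehat u$, and the right-hand side is finitely supported since $\widehat K_1$ is localized on the finite interface between $\mathcal V_{int}$ and $\mathcal V_{ext}$ and $\widehat V$ vanishes on $\mathcal V_{ext}$. Transporting to the torus by $\mathcal U_{\mathcal V}$ and decomposing by the eigenprojections $P_j(x)$, the problem reduces to the scalar equations $(\lambda_j(x)-\lambda)P_j\widetilde w=P_j\widetilde g$ with $\widetilde g$ smooth. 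By (D-1)--(D-3) the sheets $M_{\lambda,j}$ are pairwise disjoint smooth hypersurfaces with $\nabla\lambda_j\neq 0$, so along each $M_{\lambda,j}$ the solution $P_j\widetilde w$ has, modulo the equivalence $\simeq$ of \eqref{Equationsimmomentum}, the structure $\tfrac{1}{2\pi i}\big(\tfrac{\phi^{out}_j}{\lambda_j(x)-\lambda-i0}-\tfrac{\phi^{in}_j}{\lambda_j(x)-\lambda+i0}\big)$ plus a remainder that is $\simeq 0$, the pair $\phi^{out}_j,\phi^{in}_j\in\mathbf{h}_{\lambda,j}$ being obtained by restricting the relevant numerators to $M_{\lambda,j}$. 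Summing over $j$ yields \eqref{Expansionofuforvertexcase}; this is precisely the micro-local analysis near the Fermi surface of \cite{AIM16}, which I would quote rather than repeat.

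\emph{Step 2 (uniqueness and the formula for $S(\lambda)$).} Uniqueness of the pair $(\phi^{in},\phi^{out})$ attached to a given $\widehat u$ follows because the $(+i0)$- and $(-i0)$-type singularities along $M_\lambda$ are independent in the $\simeq$ topology, so two admissible pairs must agree. If moreover $\phi^{in}=\phi^{out}=0$, then $w\simeq 0$, hence $\mathcal F_0(\lambda)w=0$; by the Rellich-type theorem (valid for $\lambda\notin\mathcal T_1$ under (D-1)) $w$ is finitely supported, and the unique continuation from infinity on $\mathcal V_{ext}$ (part (ii) of (D-4), cf.\ Lemmas~\ref{uc-infinity} and~\ref{uc-infinity-perturbed}) then forces $w=0$. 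Consequently $\widehat u$ is supported in the finite set $\mathcal V_{int}$, so $\widehat u\in\ell^2(\mathcal V)$ solves $\widehat H\widehat u=\lambda\widehat u$; since $\lambda\notin\sigma_p(\widehat H)$ we get $\widehat u=0$, which in particular proves uniqueness. To identify $S(\lambda)$, write the solution carrying the datum $\phi^{in}$ through the outgoing resolvent: it is the free generalized eigenfunction $\mathcal U_{\mathcal V}^\ast\mathcal F_0(\lambda)^\ast\phi^{in}$ corrected by the same outgoing combination $-\widehat R(\lambda+i0)\widehat K_2(\,\cdot\,)$ that produces $\mathcal F_\pm(\lambda)$ in \eqref{genFourier}. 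Inserting the resolvent expansion of the preceding theorem, the incoming part reproduces $\phi^{in}$ and the outgoing part equals $\phi^{in}-2\pi i A(\lambda)\phi^{in}$ with $A(\lambda)$ as in \eqref{S4.2DefineA(lambda)}; hence $\phi^{out}=S(\lambda)\phi^{in}$ with $S(\lambda)=1-2\pi iA(\lambda)$, which is \eqref{S4.2Smatrix=1-2piAlambda}.

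\emph{Step 3 (unitarity, part (2), and the main obstacle).} Unitarity of $S(\lambda)$ I would obtain from the Parseval identities for $\mathcal F_\pm(\lambda)$: both are partial isometries with range $\mathbf{h}_\lambda$ related by $\mathcal F_-(\lambda)=S(\lambda)\mathcal F_+(\lambda)$, and completeness of the generalized eigenfunction expansion --- which holds because $\widehat H$ has no singular continuous spectrum and $\sigma_p(\widehat H)$ is discrete in $\sigma_e(H_0)\setminus\mathcal T_0$ --- forces $S(\lambda)$ to be unitary; alternatively a flux computation via Green's formula on the torus gives $S(\lambda)^\ast S(\lambda)=S(\lambda)S(\lambda)^\ast=1$ directly. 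For part (2), given $\phi^{in}\in\mathbf{h}_\lambda$ I would define $\widehat u$ by the formula of Step 2, check $\widehat u\in\widehat{\mathcal B}$ from $\widehat R(\lambda+i0)\in\mathbf{B}(\mathcal B;\mathcal B^\ast)$ (using that $\widehat K_2\mathcal U_{\mathcal V}^\ast\mathcal F_0(\lambda)^\ast\phi^{in}$ is finitely supported), $(\widehat H-\lambda)\widehat u=0$ from $\widehat K_2=\widehat H\widehat P_{ext}-\widehat P_{ext}\widehat H_0$ together with $(\widehat H_0-\lambda)\mathcal U_{\mathcal V}^\ast\mathcal F_0(\lambda)^\ast\phi^{in}=0$, and the asymptotics by the computation of Step 2. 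I expect the only genuinely hard point to be Step 1 --- turning the behaviour at infinity of a $\mathcal B^\ast$-solution into well-defined $\phi^{in},\phi^{out}\in\mathbf{h}_\lambda$ with an $\simeq 0$ remainder --- which is where (D-1)--(D-3) (smoothness and disjointness of the sheets $M_{\lambda,j}$, $\nabla p\neq 0$, simplicity of the bands off $\mathcal T_0$), the Rellich-type uniqueness theorem and the limiting absorption principle are all used essentially; the identification $S(\lambda)=1-2\pi iA(\lambda)$, its unitarity, and part (2) are then essentially algebraic manipulations with $\mathcal F_0$, $\mathcal F_\pm$, $\widehat K_1$, $\widehat K_2$, $\widehat Q_1$ and the resolvent identity.
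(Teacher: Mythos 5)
Your proposal is correct, but note that the paper does not actually prove this theorem: it is quoted as part of the review of the stationary scattering framework of \cite{AIM16,AIM18} (Theorems 7.7 and 7.15 of \cite{AIM16} and the S-matrix construction there), applicable here because (D-1)--(D-4) imply the assumptions (A-1)--(A-4) of those papers. Your sketch --- limiting absorption principle, the generalized Fourier transforms $\mathcal F_0,\mathcal F_\pm$ and the resolvent expansion for existence of \eqref{Expansionofuforvertexcase}, the Rellich-type theorem plus unique continuation from infinity for uniqueness, and the interface-operator computation yielding $S(\lambda)=1-2\pi i A(\lambda)$ --- is precisely the argument of that source, so it follows essentially the same route the paper relies on (the only cosmetic discrepancy being that $\widehat u\in\widehat{\mathcal B}$ in the statement should read $\widehat u\in\widehat{\mathcal B}^{\ast}$, cf.\ Theorem \ref{HelmhlotzEqWholespace}).
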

The operator $S(\lambda)$ is the S-matrix for our perturbed lattice, in the physics literature usually referred to as the on-shell S-matrix.

\subsection{The S-matrix and Dirichlet-to-Neumann map}
Now we consider eigenvalue equations separately on $\mathcal{V}_{ext}$ and $\mathcal{V}_{int}$, assuming that (B-1)--(B-3) of \cite{AIM18} are satisfied. Suppose that there is no perturbation outside $\mathcal{V}_{int}$ and that the potential is also supported in $\mathcal{V}_{int}$ only. For $\lambda \in \sigma_e(\widehat H_{ext})\setminus (\mathcal{T}_0\cup\mathcal{T}_1)$, there exists a unique solution $\widehat u^{(\pm)}_{ext} \in  \widehat{\mathcal B}^{\ast}$ to the following equation,
\begin{equation}
\left\{
\begin{split}
(- \widehat\Delta_{\Gamma_0} - \lambda)\widehat u^{(\pm)}_{ext} = 0 \quad
{\rm in} \quad {\mathcal V^{\circ}_{ext}}, \\
\widehat u^{(\pm)}_{ext} = \widehat f \quad {\rm on} \quad \partial\mathcal V_{ext},
\end{split}
\right.
\nonumber
\end{equation}
satisfying the radiation condition\footnote{We speak here of the discrete analogue of the usual radiation condition, see Section 2.6 in \cite{AIM18}.} (outgoing for $\widehat u^{(+)}_{ext}$ and incoming for $\widehat u^{(-)}_{ext}$). We define the {\it exterior D-N map} $\Lambda^{(\pm)}_{ext}(\lambda)$ by
\begin{equation}
\Lambda^{(\pm)}_{ext}(\lambda)\widehat f = - \partial_{\nu}^{\mathcal V_{ext}}\widehat u^{(\pm)}_{ext}\Big|_{\partial{\mathcal V}_{ext}},
\nonumber
\end{equation}
where the normal derivative of a function $u$ at $z\in \partial \mathcal{V}_{ext}$ in $\mathcal V^{\circ}_{ext}$ is defined by
\begin{equation}\label{Neumanndef-3}
\begin{split}
& \big(\partial_{\nu}^{\mathcal V_{ext}} u\big)(z) := -\frac{1}{{\rm deg}^{ext}_{\mathcal{E}}(z)}\sum_{x\in \mathcal V^{\circ}_{ext},\{x,z\}\in \mathcal{E}} u(x)\, ,\\
&{\rm deg}_{\mathcal{E}}^{ext}(z) :=\sharp\{x\in\mathcal V^{\circ}_{ext}\,:\,\{x,z\}\in \mathcal{E}\}.
\end{split}
\end{equation}
On the other hand, for $\lambda \not\in \sigma(\widehat H_{int})$, where $\widehat H_{int}$ is $-  \widehat\Delta_{\Gamma} + \widehat V$ in $\mathcal V_{int}$ with Dirichlet boundary condition,
there exists a unique solution $\widehat u_{int}$ to the following equation,
\begin{equation}
\left\{
\begin{split}
(- \widehat\Delta_{\Gamma} + \widehat V -\lambda)\widehat u_{int} = 0 \quad {\rm in} \quad \mathcal V^{\circ}_{int} ,\\
\widehat u_{int} = \widehat f \quad {\rm on} \quad \partial{\mathcal V}_{int}.
\end{split}
\right.
\nonumber
\end{equation}
The {\it interior D-N map} $\Lambda_{int}(\lambda)$ is defined by
\begin{equation}
\Lambda_{int}(\lambda)\widehat f =  \partial_{\nu}^{\mathcal V_{int}}\widehat u_{int}\Big|_{\partial{\mathcal V}_{int}},
\nonumber
\end{equation}
where the normal derivative at $\partial \mathcal{V}_{int}$ in $\mathcal V^{\circ}_{int}$ is defined in the analogous way, replacing all the exterior sets in (\ref{Neumanndef-3}) with the respective interior ones.

We denote 
\begin{equation}
\Sigma = \partial \mathcal V_{int} = \partial \mathcal V_{ext}
\label{DefineSigma}
\end{equation}
and define the operator
\begin{equation}
B^{(\pm)}_{\Sigma}(\lambda) := \mathcal M_{int}\Lambda_{int}(\lambda) -
\mathcal M_{ext}\Lambda^{(\pm)}_{ext}(\lambda) - \widehat S_{\Sigma} - \lambda\chi_{\Sigma},
\label{DefineBpmSigmalambda}
\end{equation}
where the operators $\mathcal M_{int}$, $\mathcal M_{ext}$, $\widehat S_{\Sigma}$, $\chi_{\Sigma}$ in \eqref{DefineBpmSigmalambda} contain only information referring to $\Sigma$; for their definitions we refer to relations (3.30)-(3.33) in \cite{AIM18}.

Next, letting 
$$
\widehat u^{(\pm)} = \chi_{{\mathcal V}_{int}^{\circ}}\widehat u_{int} + \chi_{{\mathcal V}_{ext}^{\circ}}\widehat u_{ext} + \chi_{\Sigma}\widehat f
$$
where, as above,  the operators $\chi_{{\mathcal V}_{int}^{\circ}}$ and $\chi_{{\mathcal V}_{ext}^{\circ}}$ contain only information referring to ${\mathcal V}_{int}^{\circ}$ and ${\mathcal V}_{ext}^{\circ}$, 
we define another operator, $\widehat I^{(\pm)}(\lambda): \ell^2(\Sigma) \to {\bf h}_{\lambda}$, by
{(see (4.7) in \cite{AIM18}), 
\begin{equation}
\widehat I^{(\pm)}(\lambda)\widehat f := \mathcal F_0(\lambda)\mathcal U(\widehat H_0 - \lambda)\widehat P_{ext}\widehat u^{(\pm)};
\label{Ipmlabda=mathcalFH-lambdauext}
\nonumber
\end{equation}
the right-hand side of this relation shows that the action of $I^{(\pm)}(\lambda)$ depends neither on $\mathcal V_{int}$ nor on $\widehat V$, in other words, it is independent of the perturbation.

Here, an important role is played by a Rellich-type result, Theorem 5.1 in \cite{AIM16}, and the following unique continuation property: if a solution of $(-\widehat{\Delta}_{\Gamma_0}-\lambda)\hat{u}=0$ on $\mathcal{V}_0$ vanishes except for a finite number of vertices for $\lambda\in \mathbb{C}$, then this solution vanishes identically on $\mathcal{V}_0$. This is what was assumed as (A-4) in \cite{AIM16,AIM18}. 
The said Rellich-type theorem, together with the unique continuation property in the exterior domain $\mathcal V_{ext}$ (which follows from the assumption (D-4)), implies the following claim, cf.~Lemma 4.3 in \cite{AIM18}.


\begin{lemma}
Let $\lambda \in \sigma_e(\widehat H)\setminus \big(\mathcal T_0\cup\mathcal T_1\cup\sigma_p(\widehat{H})\cup\sigma(\widehat H_{int})\big)$.
Then \\
\noindent (1) the map $\ \widehat I^{(\pm)}(\lambda) : \ell^2(\Sigma) \to {\bf h}_{\lambda}$ is injective, \\
(2) its adjoint $\ \widehat I^{(\pm)}(\lambda)^{\ast}  : {\bf h}_{\lambda} \to \ell^2(\Sigma)$ is surjective.
\end{lemma}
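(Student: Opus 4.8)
The plan is to derive the injectivity (1) from a uniqueness property for radiating solutions on the unperturbed lattice $\Gamma_0$, and then to obtain (2) from (1) by a purely functional-analytic argument exploiting that $\ell^2(\Sigma)$ is finite-dimensional. The hypothesis $\lambda\notin\sigma(\widehat H_{int})$ makes $\widehat u_{int}$, hence $\widehat u^{(\pm)}$, well defined, while $\lambda\notin\mathcal T_0\cup\mathcal T_1$ makes $\widehat u^{(\pm)}_{ext}$ well defined and puts us in the range of validity of the Rellich-type theorem.

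First I would unravel the definition $\widehat I^{(\pm)}(\lambda)\widehat f = \mathcal F_0(\lambda)\mathcal U_{\mathcal V}(\widehat H_0 - \lambda)\widehat P_{ext}\widehat u^{(\pm)}$. Set $\widehat g := (\widehat H_0 - \lambda)\widehat P_{ext}\widehat u^{(\pm)}$; since $\widehat P_{ext}\widehat u^{(\pm)} = \chi_{\mathcal V^{\circ}_{ext}}\widehat u^{(\pm)}_{ext}$ and $\widehat u^{(\pm)}_{ext}$ solves $(-\widehat\Delta_{\Gamma_0}-\lambda)\widehat u^{(\pm)}_{ext}=0$ away from $\Sigma$, the source $\widehat g$ is finitely supported, concentrated near $\Sigma$. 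I would then show $\widehat P_{ext}\widehat u^{(\pm)} = (\widehat H_0 - \lambda \mp i0)^{-1}\widehat g$, the sign being the one selected by the radiation condition obeyed by $\widehat u^{(\pm)}_{ext}$: the difference of the two sides solves $(\widehat H_0-\lambda)w=0$ on all of $\mathcal V_0$ and, as a difference of two functions satisfying the same (outgoing, resp.\ incoming) radiation condition, itself satisfies that condition, so it vanishes by the uniqueness theorem for radiating solutions of $(\widehat H_0-\lambda)w=0$ — a consequence of the Rellich-type theorem together with the unique continuation property, i.e.\ precisely the point where assumptions (D-1)--(D-4) enter. Feeding this into the relation describing the asymptotics of $(\widehat H_0-\lambda\mp i0)^{-1}$ through $\mathcal F_0(\lambda)$ then identifies $\widehat I^{(\pm)}(\lambda)\widehat f = \mathcal F_0(\lambda)\mathcal U_{\mathcal V}\widehat g$ with the leading amplitude (far-field pattern) of $\widehat u^{(\pm)}_{ext}$ at infinity.

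Granting this identification, suppose $\widehat I^{(\pm)}(\lambda)\widehat f = 0$. Then $\widehat P_{ext}\widehat u^{(\pm)} \simeq 0$, hence $\widehat u^{(\pm)}_{ext} \simeq 0$ on $\mathcal V_{ext}$. Because $\lambda\notin\mathcal T_0\cup\mathcal T_1$, the Rellich-type theorem (Theorem 5.1 in \cite{AIM16}) shows that $\widehat u^{(\pm)}_{ext}$ has finite support, and the unique continuation from infinity on $\mathcal V_{ext}$ — guaranteed by part (ii) of (D-4) — then forces $\widehat u^{(\pm)}_{ext}=0$ on all of $\mathcal V_{ext}$; in particular $\widehat f = \widehat u^{(\pm)}_{ext}\big|_{\Sigma} = 0$, which is (1). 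For (2), note $\ell^2(\Sigma)=\mathbb C^{\sharp\Sigma}$ is finite-dimensional, so injectivity of $\widehat I^{(\pm)}(\lambda)$ gives $\overline{\mathrm{Ran}\,\widehat I^{(\pm)}(\lambda)^{\ast}} = \big(\ker\widehat I^{(\pm)}(\lambda)\big)^{\perp} = \ell^2(\Sigma)$, and since the range of any bounded operator with finite-dimensional target is closed, $\widehat I^{(\pm)}(\lambda)^{\ast}$ is onto.

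The step I expect to be the main obstacle is the middle one: proving that $\widehat I^{(\pm)}(\lambda)\widehat f$ is genuinely the far-field amplitude of $\widehat u^{(\pm)}_{ext}$. This requires the limiting absorption principle for $\widehat H_0$ on the torus, careful bookkeeping of the outgoing/incoming conventions attached to the two signs $\pm$, and above all the uniqueness of radiating solutions on $\Gamma_0$, which is where the structural assumptions (D-1)--(D-4) and the Rellich-type theorem are used in an essential way; this part closely parallels Lemma~4.3 of \cite{AIM18}. By contrast, the finite support of $\widehat g$, the invocations of the Rellich-type theorem and of unique continuation, and the functional-analytic passage from (1) to (2) are all routine once that identification is in hand.
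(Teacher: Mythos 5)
Your proposal is correct and takes essentially the route the paper relies on: the paper does not prove this lemma itself but cites Lemma 4.3 of \cite{AIM18}, whose argument is exactly your combination of identifying $\widehat I^{(\pm)}(\lambda)\widehat f$ with the far-field amplitude of the radiating exterior solution, then using the Rellich-type theorem (Theorem 5.1 of \cite{AIM16}) together with unique continuation on $\mathcal V_{ext}$ (part (ii) of (D-4)) to get injectivity, and finally the finite-dimensionality of $\ell^2(\Sigma)$ to deduce surjectivity of the adjoint from $\overline{\mathrm{Ran}\,\widehat I^{(\pm)}(\lambda)^{\ast}}=(\ker \widehat I^{(\pm)}(\lambda))^{\perp}$. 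These are precisely the ingredients the paper highlights in the paragraph preceding the lemma, so no gaps to flag.
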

The scattering amplitude $A(\lambda)$ is defined by (\ref{S4.2DefineA(lambda)}).
In a similar way one can define the scattering amplitude in the exterior domain which we denote as $A_{ext}(\lambda)$. These scattering amplitudes satisfy the following relation, cf.~Theorem 4.5 in \cite{AIM18}.


\begin{theorem}
\label{ThSmatrixDNmapEquiv}
Let $\lambda \in \sigma_e(\widehat H)\setminus(\mathcal T_0\cup\mathcal T_1\cup\sigma_p(\widehat{H})\cup\sigma(\widehat H_{int}))$.
Then we have
\begin{equation}
A_{ext}(\lambda) - A(\lambda) = \widehat I^{(+)}(\lambda)\big(B^{(+)}_{\Sigma}(\lambda)\big)^{-1}
\widehat I^{(-)}(\lambda)^{\ast}.
\label{Aext-A=I+MI-}
\end{equation}
\end{theorem}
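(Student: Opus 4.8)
The plan is to follow the scheme of Theorem 4.5 in \cite{AIM18}, adapted to the present situation in which the perturbation on $\mathcal V_{int}$ consists of both a structural modification of the graph and a potential $\widehat V$ supported in $\mathcal V^{\circ}_{int}$. The starting point is Theorem \ref{TheoremHelmhotzeqmathcalV}: for $\phi^{in}\in{\bf h}_\lambda$ there is a unique solution $\widehat u\in\widehat{\mathcal B}^{\ast}$ of $(\widehat H-\lambda)\widehat u=0$ with the incoming/outgoing behaviour prescribed by (\ref{Expansionofuforvertexcase}), and $\phi^{out}=S(\lambda)\phi^{in}=(1-2\pi i A(\lambda))\phi^{in}$. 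First I would insert the decomposition $\widehat u=\chi_{{\mathcal V}^{\circ}_{int}}\widehat u_{int}+\chi_{{\mathcal V}^{\circ}_{ext}}\widehat u_{ext}+\chi_{\Sigma}\widehat f$, with $\widehat f=\widehat u|_{\Sigma}$, into the equation and evaluate it separately on $\mathcal V^{\circ}_{int}$, on $\mathcal V^{\circ}_{ext}$, and on $\Sigma$. On $\mathcal V^{\circ}_{int}$ this forces $(-\widehat\Delta_{\Gamma}+\widehat V-\lambda)\widehat u_{int}=0$, so $\widehat u_{int}$ is the interior Dirichlet solution with boundary data $\widehat f$ and $\Lambda_{int}(\lambda)\widehat f$ is its Neumann trace; on $\mathcal V^{\circ}_{ext}$ it forces $(-\widehat\Delta_{\Gamma_0}-\lambda)\widehat u_{ext}=0$ together with the radiation condition, so $\widehat u_{ext}=\widehat u^{(\pm)}_{ext}$ and $\Lambda^{(\pm)}_{ext}(\lambda)\widehat f$ is its exterior Neumann trace.

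Evaluating $(\widehat H-\lambda)\widehat u=0$ at the vertices of $\Sigma$ and grouping the terms coming from interior neighbours, from exterior neighbours, and from $\Sigma$ itself produces precisely a boundary equation $B^{(\pm)}_{\Sigma}(\lambda)\widehat f=(\textrm{source})$ in $\ell^2(\Sigma)$, where $B^{(\pm)}_{\Sigma}(\lambda)$ is the operator (\ref{DefineBpmSigmalambda}) and the diagonal operators $\mathcal M_{int},\mathcal M_{ext},\widehat S_{\Sigma},\chi_{\Sigma}$ there are exactly the bookkeeping of those three groups of terms (the definitions (3.30)--(3.33) in \cite{AIM18}); in particular $\mathcal M_{int}$ and $\mathcal M_{ext}$ reconcile the two different weights, ${\rm deg}\,v$ and ${\rm deg}_{\mathcal E}^{ext}(z)$, with which the interior and exterior normal derivatives in (\ref{Neumanndef-3}) are defined. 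The source on the right-hand side is the part of $\widehat u$ carried in from infinity; by the definition (\ref{Ipmlabda=mathcalFH-lambdauext}) of $\widehat I^{(\pm)}(\lambda)$ and a discrete Green's identity on $\mathcal V_{ext}$ one identifies it with $\widehat I^{(-)}(\lambda)^{\ast}\phi^{in}$, up to the explicit constant produced by the factor $\frac{1}{2\pi i}$ in (\ref{Expansionofuforvertexcase}).

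Next I would compare $A(\lambda)$ with the exterior scattering amplitude $A_{ext}(\lambda)$, defined analogously to $A(\lambda)$ in (\ref{S4.2DefineA(lambda)}) but with the exterior eigenvalue problem for $-\widehat\Delta_{\Gamma_0}$ on $\mathcal V^{\circ}_{ext}$ in place of the full one. Because the operators $\widehat K_2=\widehat H\widehat P_{ext}-\widehat P_{ext}\widehat H_0$ and $\widehat K_1$ are supported in a fixed neighbourhood of $\Sigma$, the difference $A_{ext}(\lambda)-A(\lambda)$, applied to $\phi^{in}$, is determined entirely by the boundary value $\widehat f$ on $\Sigma$ of the full solution $\widehat u$; and by (\ref{Ipmlabda=mathcalFH-lambdauext}), $\widehat I^{(+)}(\lambda)$ converts such a boundary value $\widehat f\in\ell^2(\Sigma)$ into the corresponding contribution to the outgoing amplitude in ${\bf h}_\lambda$. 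Solving the boundary equation as $\widehat f=\big(B^{(+)}_{\Sigma}(\lambda)\big)^{-1}\widehat I^{(-)}(\lambda)^{\ast}\phi^{in}$ and feeding the result into $\widehat I^{(+)}(\lambda)$ then gives $\big(A_{ext}(\lambda)-A(\lambda)\big)\phi^{in}=\widehat I^{(+)}(\lambda)\big(B^{(+)}_{\Sigma}(\lambda)\big)^{-1}\widehat I^{(-)}(\lambda)^{\ast}\phi^{in}$ for every $\phi^{in}\in{\bf h}_{\lambda}$, which is the asserted identity. The invertibility of $B^{(+)}_{\Sigma}(\lambda)$ on the stated set of $\lambda$, needed for $\big(B^{(+)}_{\Sigma}(\lambda)\big)^{-1}$ to be defined, follows from the preceding Lemma (cf.~Lemmas 4.3--4.4 in \cite{AIM18}): if $B^{(+)}_{\Sigma}(\lambda)\widehat f=0$, then gluing the corresponding interior and exterior solutions produces $\widehat u\in\widehat{\mathcal B}^{\ast}$ with $(\widehat H-\lambda)\widehat u=0$ and vanishing incoming part, hence $\widehat u=0$ by the Rellich-type theorem (Theorem 5.1 in \cite{AIM16}) and the absence of embedded eigenvalues guaranteed by $\lambda\notin\mathcal T_0\cup\mathcal T_1\cup\sigma_p(\widehat H)$; since $\lambda\notin\sigma(\widehat H_{int})$ the interior Dirichlet problem with data $\widehat f$ is uniquely solvable, so $\widehat f=0$.

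The step I expect to be the main obstacle is the algebra at $\Sigma$: checking that the restriction of $(\widehat H-\lambda)\widehat u=0$ to $\Sigma$, after the above decomposition, reproduces exactly the combination $\mathcal M_{int}\Lambda_{int}(\lambda)-\mathcal M_{ext}\Lambda^{(\pm)}_{ext}(\lambda)-\widehat S_{\Sigma}-\lambda\chi_{\Sigma}$, and that the accompanying source equals $\widehat I^{(-)}(\lambda)^{\ast}\phi^{in}$ with the right constant. This requires a careful account of the normalization $\frac{1}{{\rm deg}\,v}$ in $\widehat\Delta_{\Gamma}$ and of the fact that at a vertex of $\Sigma$ the neighbours split into interior and exterior ones counted with different degrees, so that the interior and exterior Neumann derivatives of (\ref{Neumanndef-3}) glue together correctly; the operators $\mathcal M_{int},\mathcal M_{ext}$ are precisely the diagonal weights that absorb this mismatch. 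Once this is pinned down, the remaining steps — the resolvent identity converting $\widehat K_2,\mathcal F_{\pm}$ into the boundary pairing, and the gluing of the interior and exterior solutions — are a routine transcription to the lattice of the corresponding continuous-case argument, carried out as in \cite{AIM18}.
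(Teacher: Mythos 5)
Your proposal is essentially the paper's own route: the paper does not reprove this identity at all, but imports it as Theorem 4.5 of \cite{AIM18} after checking that the hypotheses of that paper ((B-1)--(B-3), the Rellich-type theorem, and unique continuation in the exterior, supplied here by (D-1)--(D-4)) hold in the locally perturbed setting, and your outline is a faithful reconstruction of the AIM18 argument — decomposition of the scattering solution across $\Sigma$, the boundary equation governed by $B^{(+)}_{\Sigma}(\lambda)$, identification of the source with $\widehat I^{(-)}(\lambda)^{\ast}\phi^{in}$, invertibility of $B^{(+)}_{\Sigma}(\lambda)$ via gluing plus Rellich and $\lambda\notin\sigma(\widehat H_{int})$, and the far-field comparison through $\widehat I^{(+)}(\lambda)$. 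One clause is wrong as stated and would, if taken literally, kill the argument: restricting $(\widehat H-\lambda)\widehat u=0$ to $\mathcal V^{\circ}_{ext}$ does \emph{not} force the radiation condition, so the exterior part of the full scattering solution is not $\widehat u^{(\pm)}_{ext}$ with data $\widehat f$; it is the superposition of an incident part and a radiative part, and it is precisely the incident part that survives as the source term $\widehat I^{(-)}(\lambda)^{\ast}\phi^{in}$ in the boundary equation (with the literal identification $\widehat u_{ext}=\widehat u^{(+)}_{ext}$ the source would vanish and $\widehat f=0$ would follow). Since the rest of your sketch treats the source exactly as this incoming contribution, the slip is local and repairable, and the remaining steps coincide with the argument the paper points to.
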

By assumption, the exterior domain is free of perturbations, therefore $\Lambda_{ext}^{(\pm)}(\lambda)$ and $A_{ext}(\lambda)$ are known. By virtue of (\ref{S4.2Smatrix=1-2piAlambda}), (\ref{DefineBpmSigmalambda}) and Theorem \ref{ThSmatrixDNmapEquiv}, the S-matrix  $S(\lambda)$ and the D-N map $\Lambda_{int}(\lambda)$ determine  each other on some interval in the spectrum, and the same is true for the N-D map. Since the S-matrix, the D-N map and the N-D map are all complex analytic, this mutual determination extends from the said interval to the whole spectrum. Thus we arrive at the following claim.
\begin{theorem}
 For any $\lambda \in \sigma_e(\widehat H)\setminus(\mathcal T_0\cup \mathcal T_1\cup\sigma_p(\widehat{H})\cup \sigma(\widehat H_{int}))$, the S-matrix $S(\lambda)$ and the D-N map $\Lambda_{int}(\lambda)$ determine each other.
\end{theorem}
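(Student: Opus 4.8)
The plan is to read the mutual determination off the identity in Theorem~\ref{ThSmatrixDNmapEquiv} directly, using that everything ``exterior'' is known a priori, and to close the argument by a one-sided inversion of the operators $\widehat I^{(\pm)}(\lambda)$ together with complex analyticity in $\lambda$.

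First I would reduce the statement about the S-matrix to one about the scattering amplitude: by \eqref{S4.2Smatrix=1-2piAlambda} the correspondence $S(\lambda)\leftrightarrow A(\lambda)$ is a trivial affine bijection, so it suffices to show that $A(\lambda)$ and $\Lambda_{int}(\lambda)$ determine each other on the stated set of energies. Because both the structural modification of the lattice and the potential $\widehat V$ are supported inside $\mathcal V_{int}$, the objects $\Lambda^{(\pm)}_{ext}(\lambda)$ and $A_{ext}(\lambda)$ are known, the operators $\mathcal M_{int},\mathcal M_{ext},\widehat S_\Sigma,\chi_\Sigma$ depend only on $\Sigma$ and are therefore known, and the maps $\widehat I^{(\pm)}(\lambda):\ell^2(\Sigma)\to{\bf h}_\lambda$ are perturbation-independent (the content of the remark following the definition of $\widehat I^{(\pm)}(\lambda)$). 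Hence, rewriting \eqref{Aext-A=I+MI-} as
$$
\widehat I^{(+)}(\lambda)\big(B^{(+)}_\Sigma(\lambda)\big)^{-1}\widehat I^{(-)}(\lambda)^\ast \;=\; A_{ext}(\lambda) - A(\lambda),
$$
the right-hand side is a known function of $A(\lambda)$, hence of $S(\lambda)$.

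The crucial step is to invert this relation. For $\lambda$ outside $\mathcal T_0\cup\mathcal T_1\cup\sigma_p(\widehat H)\cup\sigma(\widehat H_{int})$, the Lemma preceding Theorem~\ref{ThSmatrixDNmapEquiv} guarantees that $\widehat I^{(+)}(\lambda)$ is injective and $\widehat I^{(-)}(\lambda)^\ast$ is surjective; since $\ell^2(\Sigma)$ is finite dimensional, $\widehat I^{(+)}(\lambda)$ then has a \emph{known} left inverse and $\widehat I^{(-)}(\lambda)^\ast$ a \emph{known} right inverse. Composing the displayed identity with these one-sided inverses recovers $\big(B^{(+)}_\Sigma(\lambda)\big)^{-1}$, hence $B^{(+)}_\Sigma(\lambda)$, uniquely from $S(\lambda)$. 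Feeding this into \eqref{DefineBpmSigmalambda} and subtracting the known exterior pieces $\mathcal M_{ext}\Lambda^{(+)}_{ext}(\lambda)$, $\widehat S_\Sigma$ and $\lambda\chi_\Sigma$ isolates $\mathcal M_{int}\Lambda_{int}(\lambda)$; provided $\mathcal M_{int}$ is invertible — to be verified from its explicit form in \cite{AIM18}, or else one uses the $(-)$ version or the associated N-D map — this yields $\Lambda_{int}(\lambda)$. The converse direction runs the steps backwards: from $\Lambda_{int}(\lambda)$ one assembles $B^{(+)}_\Sigma(\lambda)$ via \eqref{DefineBpmSigmalambda}, inverts it, sandwiches it between $\widehat I^{(\pm)}(\lambda)$, subtracts from $A_{ext}(\lambda)$ to obtain $A(\lambda)$, and recovers $S(\lambda)$ through \eqref{S4.2Smatrix=1-2piAlambda}. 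Finally, since $S(\lambda)$, $\Lambda_{int}(\lambda)$ and the associated N-D map are all meromorphic in $\lambda$ with poles confined to the discrete exceptional set, knowing one of them on any nonempty subinterval of a spectral band propagates, by analytic continuation, to all of $\sigma_e(\widehat H)\setminus(\mathcal T_0\cup\mathcal T_1\cup\sigma_p(\widehat H)\cup\sigma(\widehat H_{int}))$.

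The main obstacle I anticipate is precisely this inversion: one must be certain that the finite-rank ``sandwich'' $X\mapsto\widehat I^{(+)}(\lambda)\,X\,\widehat I^{(-)}(\lambda)^\ast$ is injective on the space of operators $\ell^2(\Sigma)\to\ell^2(\Sigma)$ to which $\big(B^{(+)}_\Sigma(\lambda)\big)^{-1}$ belongs, which is exactly what injectivity of $\widehat I^{(+)}(\lambda)$ and surjectivity of $\widehat I^{(-)}(\lambda)^\ast$ provide, and that $\mathcal M_{int}$ (or the relevant combination appearing in \eqref{DefineBpmSigmalambda}) is invertible, so that $\mathcal M_{int}\Lambda_{int}(\lambda)$ genuinely determines $\Lambda_{int}(\lambda)$. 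A secondary bookkeeping point is to intersect all the exceptional sets so that $\big(B^{(+)}_\Sigma(\lambda)\big)^{-1}$, the one-sided inverses of $\widehat I^{(\pm)}(\lambda)$, and $\mathcal M_{int}^{-1}$ exist simultaneously; the stated exclusion of $\mathcal T_0\cup\mathcal T_1\cup\sigma_p(\widehat H)\cup\sigma(\widehat H_{int})$ should suffice, but this must be confirmed against the definitions in \cite{AIM18}.
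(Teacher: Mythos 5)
Your proposal is correct and follows essentially the same route as the paper, which likewise combines \eqref{S4.2Smatrix=1-2piAlambda}, the known exterior data, the injectivity/surjectivity of $\widehat I^{(\pm)}(\lambda)$ to invert the sandwich in \eqref{Aext-A=I+MI-}, the definition \eqref{DefineBpmSigmalambda} to pass between $B^{(+)}_{\Sigma}(\lambda)$ and $\Lambda_{int}(\lambda)$, and analyticity in $\lambda$; in fact you spell out the inversion step in more detail than the paper's own (very terse) argument, and your caveats about $\mathcal M_{int}$ and the exceptional set are exactly the points settled by the definitions in \cite{AIM18}.
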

Let us remark that the definition of the normal derivative  used in \cite{AIM18} differs from the present one given by (\ref{S3partialnudefine}), adopted from \cite{BILL}, by a constant only. Hence the corresponding Neumann-to-Dirichlet maps determine each other.

Note further that the formula (\ref{Aext-A=I+MI-}) is a discrete analogue of the one derived by Isakov and Nachman in \cite{IsaNach} for the Schr{\"o}dinger operator in $\mathbb{R}^n$. For the discrete problem, it provides us with a constructive route from the S-matrix to the corresponding D-N map.

\smallskip

\subsection{The inverse scattering problem}
\label{section-scatteringproof}

The aim of this subsection is to show that the graph structure and the potential can be uniquely recovered from the knowledge of the scattering matrix at all energies for the discrete Schr\"odinger operator.

First of all, let us recall the definition of the Neumann-to-Dirichlet (N-D) map for a  finite weighted graph with boundary, $\mathbb{G}=\{G,\partial G,E,\mu,g\}$.
Let $q$ be a real-valued potential function on $G$, and denote by $\{\lambda_k\}_{k=1}^{N}$ the Neumann eigenvalues, with the multiplicity taken into account, of the discrete Schr\"odinger operator $-\Delta_G+q$, where $N=\sharp G$. We consider the following equation:
\begin{equation}\label{Schro}
    \begin{cases}
      (-\Delta_G+q-\lambda) u(x)=0, & x\in G,\; \lambda\in \mathbb{C},\\[.3em]
      \partial_{\nu} u \big|_{\partial G}=f,
    \end{cases}
\end{equation}
where the Neumann boundary value $\partial_{\nu}u$ was defined in \eqref{S3partialnudefine}.
For $\lambda\notin \{\lambda_k\}_{k=1}^N$, denote by $u_{\lambda}^f$ the unique solution of the equation (\ref{Schro}) with the Neumann boundary value equal to $f$. The Neumann-to-Dirichlet map $\Lambda_{\lambda}$ (at a fixed energy $\lambda$) for the equation (\ref{Schro}) is defined as $\Lambda_{\lambda}: f\mapsto u_{\lambda}^f \big|_{\partial G}$.

\begin{lemma}\label{ND}
Let $\mathbb{G}$ be a finite connected weighted graph with boundary satisfying the assumptions (C-1) and (C-2) in \S \ref{SectionFinitegraphInverseproblem}. Suppose the weights\footnote{We abuse the notation here writing $g|_{\partial G \times G}$ to indicate the weights of the edges connecting the boundary vertices with the interior vertices.} $\mu|_{\partial G}$, $g|_{\partial G\times G}$ are given. Then knowing the Neumann-to-Dirichlet map at all energies for the equation (\ref{Schro}) on $\mathbb{G}$ is equivalent to the knowledge the Neumann boundary spectral data for the discrete Schr\"odinger operator on $\mathbb G$.
\end{lemma}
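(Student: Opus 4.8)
The plan is to establish the equivalence by showing that each of the two data sets — the N-D map at all energies and the Neumann boundary spectral data — can be reconstructed from the other using the spectral decomposition of the self-adjoint operator $-\Delta_G + q$ with Neumann boundary conditions. Recall that the Neumann boundary spectral data consist of the eigenvalues $\{\lambda_k\}_{k=1}^N$ together with the boundary traces $\{\varphi_k|_{\partial G}\}$ of an orthonormal (with respect to the weighted inner product $\langle u, v\rangle_\mu = \sum_{x\in G}\mu_x u(x)\overline{v(x)}$) eigenbasis $\{\varphi_k\}$. The key analytic object is the resolvent-type formula: for $\lambda\notin\{\lambda_k\}$, the solution operator $f\mapsto u_\lambda^f$ and hence the N-D map $\Lambda_\lambda$ admits the meromorphic expansion
\begin{equation}
\Lambda_\lambda f = \sum_{k=1}^{N}\frac{1}{\lambda_k-\lambda}\,\big(\varphi_k|_{\partial G}\big)\,\langle f, \varphi_k|_{\partial G}\rangle_{\mu,\partial G},
\label{NDexpansion}
\end{equation}
valid under (C-1), which guarantees that each boundary vertex is attached to a single interior vertex, so that the Neumann boundary value $\partial_\nu u|_{\partial G}$ determines $u|_{\partial G}$ linearly and the quadratic-form machinery of \cite{BILL} applies. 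The identity \eqref{NDexpansion} is obtained by writing $u_\lambda^f = \sum_k c_k(\lambda)\varphi_k$, inserting into \eqref{Schro}, using the (weighted) Green's formula to pair against $\varphi_k$, and solving for the coefficients; the fact that $\mu|_{\partial G}$ and $g|_{\partial G\times G}$ are known is exactly what is needed to translate between $\partial_\nu u$ and the boundary term in Green's formula.

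From here the two directions are routine. Given the Neumann boundary spectral data, formula \eqref{NDexpansion} produces $\Lambda_\lambda$ for every admissible $\lambda$. Conversely, given $\Lambda_\lambda$ as a meromorphic $\mathbb{C}^m\times\mathbb{C}^m$-valued function of $\lambda$ ($m=\sharp\partial G$), its poles are located precisely at those $\lambda_k$ whose eigenfunctions have nonzero boundary trace, and the residue at such a pole is the rank-one (or finite-rank, after grouping a repeated eigenvalue) operator $\varphi\mapsto \sum_{k:\lambda_k=\mu_0}(\varphi_k|_{\partial G})\langle \varphi, \varphi_k|_{\partial G}\rangle_{\mu,\partial G}$, from which one reads off the boundary traces $\varphi_k|_{\partial G}$ (up to the usual orthogonal freedom inside an eigenspace, which is part of the data only as a spanning set anyway). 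The one subtlety is that eigenvalues whose eigenfunctions vanish on all of $\partial G$ do not appear as poles of $\Lambda_\lambda$; here one invokes the unique continuation / controllability consequence of (C-1)–(C-2) — precisely the mechanism behind Theorem \ref{TheoremBILL} and Lemma 3.5 of \cite{BILL} — to conclude that under these assumptions no Neumann eigenfunction can vanish identically on $\partial G$, so all $\lambda_k$ are recovered, with correct multiplicities, as poles. This controllability statement is the main obstacle: one must show that if $\varphi$ is a Neumann eigenfunction with $\varphi|_{\partial G}=0$, then also $\partial_\nu\varphi|_{\partial G}=0$ (automatic, since $\varphi$ satisfies the homogeneous Neumann condition), whence $\varphi$ solves the interior equation with both Cauchy data vanishing on $\partial G$ and the two-points condition forces $\varphi\equiv 0$.

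Finally, one checks the bookkeeping: multiplicities. A repeated Neumann eigenvalue $\lambda_k$ contributes a residue of rank equal to the dimension of the space of boundary traces of its eigenspace, which by the above controllability equals the full multiplicity; thus counting residue ranks recovers multiplicities, completing the passage from $\Lambda_\lambda$ back to the spectral data. Since every step — the expansion \eqref{NDexpansion}, the residue extraction, and the controllability lemma — is reversible, the two data sets determine each other, which is the assertion of the lemma. I expect the write-up to be short, the only genuine input being the non-vanishing of boundary traces of eigenfunctions, which is quoted from \cite{BILL} rather than reproved.
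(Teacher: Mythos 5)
Your overall strategy coincides with the paper's own proof: expand in the Neumann eigenbasis, observe that $\Lambda_\lambda$ is meromorphic with poles exactly at the Neumann eigenvalues, extract the boundary traces (up to the orthogonal freedom inside each eigenspace) from the residues, and use the unique continuation from the boundary under (C-1)--(C-2) (Lemma 2.4 of \cite{BILL2}) to guarantee that no eigenvalue is lost and that the residue rank equals the multiplicity. Those parts of your argument are sound and match the paper step by step.

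There is, however, one concrete error: your claimed expansion of the N-D map, $\Lambda_\lambda f=\sum_k(\lambda_k-\lambda)^{-1}(\varphi_k|_{\partial G})\langle f,\varphi_k|_{\partial G}\rangle_{\mu,\partial G}$, is not an identity. The eigenfunction expansion $u_\lambda^f=\sum_k c_k(\lambda)\varphi_k$ is valid only on the interior vertices $G$ (where $\{\varphi_k|_G\}$ is an orthonormal basis of $\ell^2(G)$); it cannot be evaluated on $\partial G$ termwise, because $u_\lambda^f$ does \emph{not} satisfy the homogeneous Neumann condition, whereas every $\varphi_k$ does. Under (C-1) the inhomogeneous condition \eqref{S3partialnudefine} at a boundary vertex $z$ with unique interior neighbour $x(z)$ gives
\begin{equation*}
u_\lambda^f(z)=u_\lambda^f(x(z))-\frac{\mu_z}{g_{x(z)z}}\,f(z),
\end{equation*}
so the true N-D map is your series \emph{plus} a $\lambda$-independent multiplication operator by $-\mu_z/g_{x(z)z}$. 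This is precisely the term the paper keeps track of through the auxiliary function $w^f$ with $\partial_\nu w^f|_{\partial G}=f$, $w^f|_G=0$ in \eqref{eq-ND}, and it is exactly where the hypothesis that $g|_{\partial G\times G}$ (and $\mu|_{\partial G}$) is known gets used --- a telltale sign of the omission is that $g$ never appears in your formula at all, although $\Lambda_\lambda$ genuinely depends on it. The omission is harmless for the direction ``N-D map $\Rightarrow$ spectral data'' (poles and residues are unchanged by a $\lambda$-independent term), but in the direction ``spectral data $\Rightarrow$ N-D map'' your formula produces the wrong operator; the fix is simply to add the known correction term, after which your argument agrees with the paper's.
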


\begin{proof}
The proof for the manifold case can be found in \cite{KKLM} or Section 4.1 of \cite{KKL}. The proof in our case, for finite graphs, is simpler. Let $\{\phi_k\}_{k=1}^N$ be a family of orthonormalized Neumann eigenfunctions of the discrete Schr\"odinger operator corresponding to eigenvalues $\{\lambda_k\}$. 
Recall from \cite{BILL} that the $L^2(G)$-inner product is defined by
$$\langle u_1,u_2 \rangle_{L^2(G)}= \sum_{x\in G} \mu_x u_1(x) u_2(x).$$
By Green's formula \cite[Lemma 2.1]{BILL}, we infer that
\begin{align*}
\big\langle (-\Delta_G+q)u_{\lambda}^f,\phi_k \big\rangle_{L^2(G)} &=\big\langle u_{\lambda}^f,(-\Delta_G+q)\phi_k \big\rangle_{L^2(G)} - \sum_{z\in \partial G} \mu_z\phi_k(z) (\partial_{\nu}u_{\lambda}^f)(z) \\
&= \lambda_k \langle u_{\lambda}^f,\phi_k \rangle_{L^2(G)} - \sum_{z\in \partial G} \mu_z\phi_k(z) f(z),
\end{align*}
which yields
$$(\lambda-\lambda_k) \langle u_{\lambda}^f,\phi_k \rangle_{L^2(G)}=- \sum_{z\in \partial G} \mu_z\phi_k(z)f(z) .$$
Now take an arbitrary real-valued function $w^f$ on $G\cup\partial G$ satisfying $\partial_{\nu} w^f|_{\partial G}=f$. Then the difference
$u_{\lambda}^f-w^f$ lies in the domain of the Neumann graph Laplacian and we have
\begin{align}
u_{\lambda}^f-w^f& = \sum_{k=1}^{N}\langle u_{\lambda}^f-w^f , \phi_k \rangle_{L^2(G)} \phi_k \nonumber \\
&= -\sum_{k=1}^{N}\frac{1}{\lambda-\lambda_k}\Big(\sum_{z\in \partial G} \mu_z\phi_k(z)f(z) \Big) \phi_k - \sum_{k=1}^{N}\langle w^f , \phi_k \rangle_{L^2(G)} \phi_k. \label{eq-ND}
\end{align}
This shows that $\Lambda_{\lambda}$ is a meromorphic operator-valued function of $\lambda$ with simple poles at $\lambda=\lambda_k$ only, and this in turn means that $\{\Lambda_{\lambda}\}$ determines the set of eigenvalues $\{\lambda_k\}$.
Moreover, the residue of $\Lambda_{\lambda}$ at $\lambda=\lambda_k$ is known as a finite-dimensional linear operator. In particular, since $\mu|_{\partial G}$ is known, the data $\{\Lambda_{\lambda}\}$ determine
$$
Q_k(z_1,z_2)=\sum_{l\in L_k}\phi_l(z_1)\phi_l(z_2), \quad \forall\, z_1,z_2\in \partial G,
$$
where $L_k=\{p_k+1,\cdots,p_k+\sharp L_k\}$, $p_k\in \mathbb{N}$,
denotes the set of integers $l$ satisfying $\lambda_l=\lambda_k$. This function $Q_k(\cdot,\cdot)$ can be viewed as an $m\times m$ matrix $Q_k$ defined by $(Q_k)_{ij}=Q_k(z_i,z_j)$, where $m=\sharp \partial G$, or in the matrix form
$$Q_k=\big(\phi_{p_k+1},\cdots,\phi_{p_k+\sharp L_k}\big)_{m\times \sharp L_k} \, \big(\phi_{p_k+1},\cdots,\phi_{p_k+\sharp L_k} \big)_{m\times \sharp L_k}^T\, .$$
By Lemma 2.4 of \cite{BILL2}, the eigenfunctions $\{\phi_l|_{\partial G}\}_{l\in L_k}$ are linearly independent on $\partial G$, hence the rank of $Q_k$ is simply $\sharp L_k$.

When the eigenvalue $\lambda_k$ is simple, the matrix $Q_k$ determines $\phi_k|_{\partial G}$ up to the sign. In general, since $Q_k$ is symmetric and positive semi-definite, it can be decomposed into $Q_k=BB^T$, where $B$ is an $m\times \sharp L_k$ matrix of rank $\sharp L_k$. Moreover, the decomposition is unique up to an $\sharp L_k\times \sharp L_k$ orthogonal matrix. Thus we take the column vectors of $B$, and they are the boundary values of orthonormalized eigenfunctions found by applying the orthogonal matrix to $\{\phi_l\}_{l\in L_k}$. This shows $Q_k$ determines the boundary values of the orthonormalized eigenfunctions (referring to the choice of $\{\phi_k\}_{k=1}^N$ we made).

To check the converse: the Neumann boundary spectral data determine the N-D map in accordance with the formula \eqref{eq-ND}. We choose $w^f$ such that $w^f|_G=0$ and $\partial_{\nu} w^f|_{\partial G}=f$ so that the last term in \eqref{eq-ND} vanishes. Since $g|_{\partial G\times G}$ is known, thus $w^f|_{\partial G}$ is uniquely determined by $f$, and consequently, the N-D map can be determined from the Neumann boundary spectral data.
\end{proof}

Without loss of generality, we assume that 
the perturbed vertex set $\Omega=\Omega_{k_0}$ for some $k_0$ as assumed in part (i) of (D-4), cf.~ \S \ref{subsection-assumptions}.
With our choice (\ref{intext}) of the domains,
Theorem \ref{ThSmatrixDNmapEquiv} and Lemma \ref{uc-infinity} yield the following statement.

\begin{cor}\label{Smatrix}
Let $\Gamma_0$ be an infinite periodic lattice satisfying assumptions (D-1)--{(D-4)}. 
Let $q$ be a finitely supported potential on $\Gamma$, and $\mathbb{G}_{\Gamma}$ be the perturbed finite subgraph given by \eqref{def-G-lattice}. Then the knowledge of the scattering matrix of the discrete Schr\"odinger operator on $\Gamma$ {at an arbitrarily fixed energy} determines the Neumann-to-Dirichlet map of the equation (\ref{Schro}) on $\mathbb{G}_{\Gamma}$ with $\mu={\rm deg}_E,\, g\equiv 1$ for {the same energy}.
\end{cor}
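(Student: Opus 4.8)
The plan is to run the identity \eqref{Aext-A=I+MI-} of Theorem~\ref{ThSmatrixDNmapEquiv} in reverse. Fix an energy $\lambda\in\sigma_e(\widehat H)\setminus(\mathcal T_0\cup\mathcal T_1\cup\sigma_p(\widehat H)\cup\sigma(\widehat H_{int}))$, whose complement in $\sigma_e(\widehat H)$ is discrete, so ``an arbitrarily fixed energy'' may be read as such a $\lambda$. From $S(\lambda)=1-2\pi iA(\lambda)$, see \eqref{S4.2Smatrix=1-2piAlambda}, the scattering amplitude $A(\lambda)$ is known. Since $\Omega=\Omega_{k_0}$ can be taken so large that the structure perturbation and the potential $q$ live in $\Omega\subset\mathcal V_{int}^{\circ}$, the exterior graph coincides with the unperturbed $\Gamma_0$; hence $A_{ext}(\lambda)$ and $\Lambda^{(\pm)}_{ext}(\lambda)$ are known, the operators $\widehat I^{(\pm)}(\lambda):\ell^2(\Sigma)\to{\bf h}_{\lambda}$ are known (the remark after (4.7) in \cite{AIM18} shows they depend neither on $\mathcal V_{int}$ nor on $\widehat V$), and the operators $\mathcal M_{int},\mathcal M_{ext},\widehat S_{\Sigma},\chi_{\Sigma}$ in \eqref{DefineBpmSigmalambda}, referring only to $\Sigma$, are known. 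Thus the left-hand side of \eqref{Aext-A=I+MI-}, namely $A_{ext}(\lambda)-A(\lambda)$, is determined by the S-matrix, i.e. we know the operator $X:=\widehat I^{(+)}(\lambda)\big(B^{(+)}_{\Sigma}(\lambda)\big)^{-1}\widehat I^{(-)}(\lambda)^{\ast}$ on ${\bf h}_{\lambda}$.

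Next I would recover $B^{(+)}_{\Sigma}(\lambda)$ from $X$ by cancellation. By the injectivity/surjectivity lemma stated just before Theorem~\ref{ThSmatrixDNmapEquiv}, $\widehat I^{(+)}(\lambda)$ is injective and $\widehat I^{(-)}(\lambda)^{\ast}$ is surjective, so the linear map $Y\mapsto\widehat I^{(+)}(\lambda)\,Y\,\widehat I^{(-)}(\lambda)^{\ast}$ on operators $Y:\ell^2(\Sigma)\to\ell^2(\Sigma)$ is injective: if $\widehat I^{(+)}(\lambda)(Y_1-Y_2)\widehat I^{(-)}(\lambda)^{\ast}=0$ then left-cancelling $\widehat I^{(+)}(\lambda)$ gives $(Y_1-Y_2)\widehat I^{(-)}(\lambda)^{\ast}=0$, and surjectivity of $\widehat I^{(-)}(\lambda)^{\ast}$ onto $\ell^2(\Sigma)$ forces $Y_1=Y_2$. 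Hence $\big(B^{(+)}_{\Sigma}(\lambda)\big)^{-1}$, and therefore $B^{(+)}_{\Sigma}(\lambda)$ itself, is determined by $X$. Solving \eqref{DefineBpmSigmalambda} for the one remaining unknown then yields $\mathcal M_{int}\Lambda_{int}(\lambda)=B^{(+)}_{\Sigma}(\lambda)+\mathcal M_{ext}\Lambda^{(+)}_{ext}(\lambda)+\widehat S_{\Sigma}+\lambda\chi_{\Sigma}$, and since $\mathcal M_{int}$ is the explicit invertible multiplication operator on $\ell^2(\Sigma)$ coming from (3.30)--(3.33) in \cite{AIM18}, we recover the interior D-N map $\Lambda_{int}(\lambda)$.

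It remains to convert $\Lambda_{int}(\lambda)$ into the Neumann-to-Dirichlet map of \eqref{Schro} on $\mathbb{G}_{\Gamma}=\{G,\partial G,E,{\rm deg}_E,1\}$ from \eqref{def-G-lattice}, which involves two routine identifications. First, writing the lattice Laplacian $\widehat\Delta_{\Gamma}$ of \eqref{Laplaciandef-3} in terms of the weighted graph Laplacian $\Delta_G$ with $\mu={\rm deg}_E$, $g\equiv1$, one has $-\widehat\Delta_{\Gamma}=-\Delta_G-{\rm id}$ on $\mathcal V_{int}^{\circ}$, so the equation $(\widehat H-\lambda)\widehat u_{int}=0$ is exactly \eqref{Schro} with potential $q$ at the shifted spectral parameter $\lambda+1$; the Dirichlet and Neumann problems, hence the corresponding boundary maps, agree after this trivial relabelling. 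Second, the normal derivative \eqref{Neumanndef-3} used to define $\Lambda_{int}(\lambda)$ differs from the $\partial_{\nu}$ of \eqref{S3partialnudefine} only by a known affine transformation at the boundary vertices (a sign change together with the same identity shift), as recorded in the remark following Theorem~\ref{ThSmatrixDNmapEquiv}; applying it turns $\Lambda_{int}(\lambda)$ into the D-N map of \eqref{Schro}. Finally, for $\lambda$ outside the additional discrete set of Neumann eigenvalues of $-\Delta_G+q$ this D-N map is invertible, and its inverse is by definition the Neumann-to-Dirichlet map $\Lambda_{\lambda}$ of \eqref{Schro} (cf. the computation in the proof of Lemma~\ref{ND}); it is therefore recovered. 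That the exterior uniqueness underlying Theorem~\ref{ThSmatrixDNmapEquiv} is available for our choice $\Omega=\Omega_{k_0}$ follows from Lemma~\ref{uc-infinity} together with its perturbed counterpart.

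The hard part is the cancellation step of the second paragraph: undoing the triple product $\widehat I^{(+)}(\lambda)(\,\cdot\,)\widehat I^{(-)}(\lambda)^{\ast}$ to pin down $\big(B^{(+)}_{\Sigma}(\lambda)\big)^{-1}$ uniquely. This rests entirely on the injectivity of $\widehat I^{(+)}(\lambda)$ and the surjectivity of $\widehat I^{(-)}(\lambda)^{\ast}$, which in turn rest on the Rellich-type theorem of \cite{AIM16} and on the unique continuation from infinity on $\mathcal V_{ext}$ supplied by part (ii) of (D-4); one must therefore keep the fixed energy away from $\mathcal T_0\cup\mathcal T_1$, where those tools degenerate, and check the invertibility of $B^{(+)}_{\Sigma}(\lambda)$ and of $\mathcal M_{int}$ in the admissible range. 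The remaining identifications -- the identity shift between $\widehat\Delta_{\Gamma}$ and $\Delta_G$, the normal-derivative constants, and the inversion of the D-N map into the N-D map -- are elementary but have to be tracked with care so as to land on precisely the weighted graph $\{G,\partial G,E,{\rm deg}_E,1\}$ named in the statement.
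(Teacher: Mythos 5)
Your proposal is correct and follows essentially the same route as the paper: it unpacks the identity of Theorem \ref{ThSmatrixDNmapEquiv} together with the injectivity/surjectivity of $\widehat I^{(\pm)}(\lambda)$, the fact that all exterior quantities and the operators on $\Sigma$ are unperturbed and hence known, and Lemma \ref{uc-infinity} (with its perturbed counterpart), to pass from $S(\lambda)$ to $\Lambda_{int}(\lambda)$ and then, via the known constant/affine discrepancies in the normal-derivative and Laplacian conventions, to the N-D map of \eqref{Schro} on $\mathbb{G}_{\Gamma}$. The paper's proof is exactly this chain stated in compressed form, so your write-up is just a more detailed version of the same argument.
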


Now we impose the following assumption on the locally perturbed lattice $\Gamma$.

\medskip
\noindent
\textbf{(E-1)} With the perturbed vertex set $\Omega=\Omega_{k_0}$ for some $k_0$ as in part (i) of (D-4), the perturbed finite subgraph $\mathbb{G}_{\Gamma}$ given by \eqref{def-G-lattice} is connected and satisfies (C-1), (C-2).
\medskip

The assumption (E-1), together with part (ii) of (D-4), implies the unique continuation from infinity for the perturbed system.

\begin{lemma} \label{uc-infinity-perturbed}
Assume (E-1) and part (ii) of (D-4) are satisfied. 
Then the unique continuation from infinity
holds for the perturbed equation $(-\widehat{\Delta}_{\Gamma} -\lambda)\hat u=0$ on $\Gamma$.
\end{lemma}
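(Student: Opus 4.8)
The plan is to imitate the proof of Lemma~\ref{uc-infinity}, inserting one preliminary use of part~(ii) of (D-4) to clear the exterior before running the finite-graph argument on the perturbed subgraph. Let $\hat u$ satisfy $(-\widehat\Delta_\Gamma-\lambda)\hat u=0$ at every vertex of $\Gamma$ and vanish for $|v|>R$; the goal is $\hat u\equiv 0$ on $\mathcal V$.

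\emph{Exterior step.} Since the perturbation $E_{pert}$ (and the potential, if present) only affects edges with both endpoints in $\Omega=\Omega_{k_0}$, while the layer of edges joining $\Omega$ to $\partial_{\Gamma_0}\Omega$ is retained unchanged in \eqref{def-GammaE}, the graph $\Gamma$ coincides with $\Gamma_0$ on $\Omega_{k_0}^{ext}=\mathcal V_0\setminus\Omega_{k_0}$: for $v\in\Omega_{k_0}^{ext}$ the incident edges, the neighbours, and the value $\widehat V(v)=0$ are the same as for $\Gamma_0$. Hence $\hat u$ solves $(-\widehat\Delta_{\Gamma_0}-\lambda)\hat u=0$ on $\Omega_{k_0}^{ext}$ and vanishes near infinity, so part~(ii) of (D-4) — unique continuation from infinity on $\Omega_{k_0}^{ext}$ — gives $\hat u=0$ on $\Omega_{k_0}^{ext}$. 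In particular $\hat u=0$ on $\partial G=\partial_{\Gamma_0}\Omega$ and on $\mathcal V^{\circ}_{ext}\subset\Omega_{k_0}^{ext}$.

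\emph{Interior step.} Now pass to the finite weighted subgraph $\mathbb G_\Gamma=\{G,\partial G,E,{\rm deg}_E,g\equiv1\}$. We already know $\hat u|_{\partial G}=0$. To recover the Neumann data, fix $z\in\partial G$; since $z\in\Omega_{k_0}^{ext}$ the equation at $z$ is the unperturbed one, which together with $\hat u(z)=0$ reads $\sum_{x\sim z,\,x\in\mathcal V_0}\hat u(x)=0$; splitting the sum according to whether $x\in\Omega_{k_0}$ and using $\hat u=0$ on $\Omega_{k_0}^{ext}$, it collapses to $\sum_{x\sim z,\,x\in G}\hat u(x)=0$, i.e. $(\partial_\nu\hat u)(z)=0$ — precisely the computation carried out in the proof of Lemma~\ref{uc-infinity}. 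Thus $\hat u|_{G\cup\partial G}$ solves the (perturbed) discrete Schr\"odinger equation on $G$ with vanishing Dirichlet and Neumann data on $\partial G$. By (E-1) the subgraph $\mathbb G_\Gamma$ satisfies (C-1), (C-2), so the unique continuation from the boundary in a finite domain — Lemma~2.4 of \cite{BILL2} (equivalently Lemma~3.5 of \cite{BILL}), which remains valid in the presence of a potential — forces $\hat u=0$ on $G\cup\partial G=\mathcal V_{int}$. Combined with the exterior step, $\hat u$ vanishes on $\mathcal V_{int}\cup\mathcal V^{\circ}_{ext}=\mathcal V$, which is the claim.

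The argument is almost entirely bookkeeping; the two points needing care are verifying that $\Gamma$ genuinely coincides with $\Gamma_0$ on $\Omega_{k_0}^{ext}$ (so that both part~(ii) of (D-4) and the boundary computation at $z\in\partial G$ are legitimate there), and checking that the finite-domain unique continuation result of \cite{BILL,BILL2} is available for $\mathbb G_\Gamma$ together with its potential — which is exactly what (E-1) is designed to guarantee. I do not anticipate a genuine obstacle: the lemma is the perturbed twin of Lemma~\ref{uc-infinity}, and the only new ingredient is the preliminary reduction to the exterior via (D-4)(ii).
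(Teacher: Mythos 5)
Your proposal is correct and follows essentially the same route as the paper: first use part (ii) of (D-4) to conclude $\hat u$ vanishes on $\mathcal V_0\setminus\Omega_{k_0}$ (legitimate because the system is unperturbed there), then repeat the boundary computation of Lemma \ref{uc-infinity} at $z\in\partial_{\Gamma_0}\Omega_{k_0}$ to get vanishing Neumann data, and invoke the finite-graph unique continuation (Lemma 2.4 of \cite{BILL2}, valid with a potential) on $\mathbb G_\Gamma$, whose hypotheses (C-1), (C-2) are supplied by (E-1). The paper compresses the second half into "the same argument as Lemma \ref{uc-infinity}", which you have simply written out in detail.
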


\begin{proof}
By assumption (E-1), the system is unperturbed outside of $\Omega_{k_0}$. If $\hat u$ vanishes near infinity, then $\hat u$ vanishes on $\mathcal{V}_0 \setminus \Omega_{k_0}$ due to part (ii) of (D-4). Then the lemma follows from the same argument as Lemma \ref{uc-infinity}.
\end{proof}

Our main result of this section is stated as follows.
 
\begin{theorem}\label{mainscattering}
Consider a periodic lattice satisfying assumptions (D-1)--(D-4),  and suppose that $\Gamma$ is an infinite graph obtained by a local perturbation of this lattice. Let the potential $q$ be finitely supported on $\Gamma$, and $\mathbb{G}_{\Gamma}$ be the perturbed finite subgraph given by \eqref{def-G-lattice} with $\mu={\rm deg}_E,\, g\equiv 1$. Assume that $\mathbb{G}_{\Gamma}$ satisfies (E-1). Then $\mathbb{G}_{\Gamma}$ and $q$ can be uniquely recovered from the knowledge of the scattering matrix for the discrete Schr\"odinger operator on $\Gamma$ for all energies.
\end{theorem}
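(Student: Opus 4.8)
The plan is to read Theorem \ref{mainscattering} as a uniqueness statement and to reduce it, through the scattering-to-boundary reduction already assembled in this section, to the finite-graph inverse result of \cite{BILL} quoted as Theorem \ref{TheoremBILL}. Concretely, suppose $\Gamma$ and $\Gamma'$ are two local perturbations of the \emph{same} known periodic lattice $\Gamma_0$, with finitely supported potentials $q,q'$, each satisfying (E-1) (after enlarging the defect region to a common $\Omega=\Omega_{k_0}$ if necessary), and suppose the on-shell S-matrices of the two discrete Schr\"odinger operators on $\Gamma$ and $\Gamma'$ coincide at all energies. The goal is to produce an adjacency-preserving bijection $\mathbb{G}_{\Gamma}\to\mathbb{G}_{\Gamma'}$ that is the identity near the boundary $\partial_{\Gamma_0}\Omega$ and under which $q=q'$.

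First I would turn scattering data into boundary data. By Corollary \ref{Smatrix} --- which itself rests on Theorem \ref{ThSmatrixDNmapEquiv} together with the unique continuation from infinity (Lemmas \ref{uc-infinity} and \ref{uc-infinity-perturbed}) --- the S-matrix at any energy $\lambda$ outside the discrete exceptional set $\mathcal T_0\cup\mathcal T_1\cup\sigma_p(\widehat H)\cup\sigma(\widehat H_{int})$ of $\sigma_e(\widehat H)$ determines the Neumann-to-Dirichlet map of \eqref{Schro} on $\mathbb{G}_{\Gamma}$ with $\mu={\rm deg}_E$, $g\equiv 1$ at that same energy; since this N-D map is a meromorphic operator-valued function of $\lambda$, knowing it on such a set determines it for all $\lambda\in\mathbb C$. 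Lemma \ref{ND} then converts the N-D map at all energies into the Neumann boundary spectral data of $\mathbb{G}_{\Gamma}$ --- here one uses that $\mu|_{\partial_{\Gamma_0}\Omega}={\rm deg}_E|_{\partial_{\Gamma_0}\Omega}$ and $g|_{\partial_{\Gamma_0}\Omega\times G}\equiv 1$ are known, which is the case because by construction $E$ already contains the unperturbed layer \eqref{def-GammaE} and the $\Omega$-vertices adjacent to $\partial_{\Gamma_0}\Omega$ are never removed by the perturbation, so the incidence structure at the boundary is that of $\Gamma_0$. The same reasoning applies to $\Gamma'$, so $\mathbb{G}_{\Gamma}$ and $\mathbb{G}_{\Gamma'}$ share the same Neumann boundary spectral data, equivalently the same N-D (and D-N) maps at all energies.

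Next I would verify the hypotheses of Theorem \ref{TheoremBILL}. By (E-1) both $\mathbb{G}_{\Gamma}$ and $\mathbb{G}_{\Gamma'}$ are finite connected graphs with boundary satisfying (C-1), (C-2). Both have boundary $\partial_{\Gamma_0}\Omega$, and --- again because the appended layer and the boundary-adjacent part of $\Omega$ are unperturbed --- the sets $N(\partial_{\Gamma_0}\Omega)$ and the boundary adjacency coincide for the two graphs, so the identity is a boundary isomorphism $\Phi_0$ with $\mu_z=\mu_z'$, $g_{xz}=g_{xz}'$ on the boundary. Theorem \ref{TheoremBILL} (in the Neumann-data formulation of \cite{BILL}, which by the remark following it is equivalent to coincidence of the N-D maps at all energies) then produces a bijection $\Phi:\mathbb{G}_{\Gamma}\to\mathbb{G}_{\Gamma'}$ restricting to $\Phi_0$ on the boundary and preserving adjacency. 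Identifying the two graphs via $\Phi$ forces ${\rm deg}\,v={\rm deg}\,\Phi(v)$, hence $\mu=\mu'$ because $\mu={\rm deg}_E$ on both sides; the final (``in particular'') clause of Theorem \ref{TheoremBILL} then gives $g=g'$ (both $\equiv 1$) and $q=q'$. This shows $\mathbb{G}_{\Gamma}$ and $q$ are uniquely determined by the S-matrix at all energies, which is the assertion of Theorem \ref{mainscattering}.

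Almost all of the analytic weight has been moved into Corollary \ref{Smatrix}, Lemma \ref{ND}, and the imported Theorem \ref{TheoremBILL}, so the remaining work is organizational. I expect the two genuinely delicate points to be: (i) justifying that the near-boundary structure of $\mathbb{G}_{\Gamma}$ is actually known --- i.e.\ that a boundary isomorphism exists and the boundary weights match --- which is precisely why $\Omega$ is taken equal to some $\Omega_{k_0}$ and an unperturbed layer of edges is appended in \eqref{def-GammaE}; and (ii) the bookkeeping that identifies the discrete Schr\"odinger operator $-\widehat\Delta_{\Gamma}+\widehat V$ on $\Gamma$ (restricted to the defect) with an operator $-\Delta_G+q$ on $\mathbb{G}_{\Gamma}$ in the normalization $\mu={\rm deg}_E$, $g\equiv 1$ (a diagonal shift absorbed into $q$), so that the conclusions of Theorem \ref{TheoremBILL} apply verbatim. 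The analytic-continuation step from a spectral interval to all of $\mathbb C$ is routine but should be stated, and it is legitimate exactly because (D-1)--(D-4) and (E-1) (through Lemma \ref{uc-infinity-perturbed}) keep $\mathcal T_0\cup\mathcal T_1\cup\sigma_p(\widehat H)\cup\sigma(\widehat H_{int})$ discrete in $\sigma_e(\widehat H)$.
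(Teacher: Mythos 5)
Your proposal is correct and follows essentially the same route as the paper: the paper's own (very terse) proof likewise observes that the construction of $\mathbb{G}_{\Gamma}$ makes the boundary-adjacent edges and hence $\mu={\rm deg}_E$ on $\partial G$ known, and then invokes Corollary \ref{Smatrix} together with Theorem \ref{TheoremBILL}. Your write-up merely makes explicit the intermediate steps (two-graph uniqueness formulation, Lemma \ref{ND}, the boundary isomorphism, analytic continuation) that the paper leaves implicit.
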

\begin{proof}
From our construction of $\mathbb{G}_{\Gamma}$ in \S \ref{subsection-assumptions}, the edges connecting $\partial G$ and $G$ are known, and hence the weight $\mu={\rm deg}_E$ on $\partial G$ is known. 
{The theorem then follows from Corollary \ref{Smatrix} and Theorem \ref{TheoremBILL}}.
\end{proof}

\begin{example}
\label{ExampleTwoPointsCond}
{\rm Finite square, hexagonal (see Figure \ref{Periodic hexagonal lattice}), triangular, graphite and square ladder lattices all satisfy the Two-Points Condition (C-2) with the set of boundary vertices being the domain boundary.
Moreover, any horizontal edges in these lattices can be removed and the obtained graphs still  satisfy the Two-Points Condition, see Figure \ref{Perturbed hexagonal lattice}; the term
 ``horizontal edges'' here refers to the edges in the non-gradient directions with respect to the function $h$ in Proposition 1.8 in \cite{BILL}.}
\end{example}


\section{Spectral theory for periodic quantum graph}
\label{SectionQunatumGraphSpectralTheory}

In this and the next sections, we study the spectral theory for the Schr{\"o}dinger operator on a quantum (metric) graph. Let $\Gamma_0 = \{\mathcal L_0, \mathcal V_0, \mathcal E_0\}$ be a periodic lattice introduced in \S \ref{InverseScatteringDiscrete}, 
and let assumptions (D-1)--(D-4) be imposed. As in  \S \ref{subsection-assumptions}, we consider a local perturbation $\Gamma = \{\mathcal V, \mathcal E\}$ of $\Gamma_0$. On each edge $e \in \mathcal E$, we are given a one-dimensional  Schr{\"o}dinger operator $h_{e} = -d^2/dz^2 + V_{e}(z)$ satisfying the \mbox{$\delta$-coupling} condition (\ref{GeneralizedKirchhoff}) together with the assumptions (M-1)--(M-5) in \S \ref{Metricgraphsection}.
We assume that $V_e(z)$ is equal to a fixed potential $V_0(z)$ except for a  finite number of edges $e$. For the sake of (mainly notational) simplicity, we further assume that $V_0(z) = 0$ {and $\ell_e = 1$ for all edges $e$}. The arguments below also works for the general case by replacing $\phi^{(0)}_{e0}(z,\lambda)$, $\phi^{(0)}_{e1}(z,\lambda)$ and $\sigma^{(0)}(h^{(0)})$ by those associated with $V_0(z)$. Let $\widehat H_{\mathcal E}$ be the resulting self-adjoint operator in $L^2(\mathcal E)$. In the unperturbed case, when $V_e = 0$ holds for each $e \in \mathcal E_0$ and $C_v/d_v$ is equal to a fixed constant $\kappa_{\mathcal V}$, that is,
\begin{equation}
\frac{C_v}{d_v} = \kappa_{\mathcal V}, \quad \forall v \in \mathcal V_0,
\end{equation}
the operator $\widehat H_{\mathcal E}$ shall be denoted by $\widehat H^{(0)}_{\mathcal E}$. In what follows, we call $\widehat H_{\mathcal E}$ the `edge' Schr{\"o}dinger operator, and $-\widehat\Delta_{\mathcal V,\lambda}$ the `vertex' Schr{\"o}dinger operator.

\subsection{Spectrum of $\widehat H_{\mathcal E}$}

Let us begin with the unperturbed operator $\widehat H_{\mathcal E}^{(0)}$. Amending all the symbols introduced in \S \ref{Metricgraphsection} with the superscript $(0)$, we have $\phi^{(0)}_{e0}(z,\lambda) = \frac{\sin (\sqrt{\lambda} z)}{\sqrt{\lambda}}$ and $\phi^{(0)}_{e1}(z,\lambda) = \frac{\sin (\sqrt{\lambda}(1- z))}{\sqrt{\lambda}}$, hence 
\begin{equation}
\big(\widehat\Delta_{\mathcal V,\lambda}^{(0)}\widehat u\big)(v) =
\frac{\sqrt{\lambda}}{\sin\sqrt{\lambda}}\frac{1}{d_v}\sum_{w\sim v}\widehat u(w) = \frac{\sqrt{\lambda}}{\sin\sqrt{\lambda}}\big(\widehat \Delta_{\mathcal V}\hat u\big)(v),
\label{hatDelta0Vlambda}
\end{equation}
with $\widehat \Delta_{\mathcal V}$ being the vertex Laplacian on  $\mathcal V_0$, and
\begin{equation}
\widehat Q^{(0)}_{\mathcal V,\lambda} = \frac{\sqrt{\lambda}}{\sin\sqrt{\lambda}}\cos\sqrt{\lambda} + \kappa_{\mathcal V}.
\label{hatQ0Vlambda}
\end{equation}
We put
\begin{equation}
E(\lambda ) = -\cos\sqrt{\lambda} - \kappa_{\mathcal V}\frac{\sin\sqrt{\lambda}}{\sqrt{\lambda}},
\label{DefineE(lambda)}
\end{equation}
and then the resolvent $R^{(0)}_{\mathcal E}(\lambda) = (H^{(0)}_{\mathcal E} - \lambda)^{-1}$ can be, in view of Lemma \ref{S2LemmaRElambdaformula}, rewritten as
\begin{equation}
R^{(0)}_{\mathcal E}(\lambda) = \big(\widehat T^{(0)}_{\mathcal V,\overline{\lambda}}\big)^{\ast}
\frac{\sin\sqrt{\lambda}}{\sqrt{\lambda}}\Big( - \widehat\Delta_{\mathcal V} - E(\lambda)\Big)^{-1}\widehat T^{(0)}_{\mathcal V,\overline{\lambda}} + r^{(0)}_{\mathcal E}(\lambda).
\label{resolventequationR(0)mathcalE}
\end{equation}
Furthermore, we put
\begin{equation}
\sigma^{(0)}(h^{(0)}) = \{(\pi j)^2\, ; \, j = 1, 2, \dots\},
\label{Definsigma0h0}
\end{equation}
\begin{equation}
\sigma^{(0)}(-\widehat\Delta_{\mathcal V}) =
\{\lambda \, ; \, E(\lambda )\in \sigma(- \widehat\Delta_{\mathcal V})\},
\label{Definesigam0deltaV}
\end{equation}
\begin{equation}
\sigma^{(0)}_{\mathcal T} = \{\lambda \in {\rm Int}\,(\sigma_e(\widehat H_{\mathcal E}^{(0)}));\:  E(\lambda) \in \mathcal T\}, 
\label{Definesigma0tau}
\end{equation}
where ${\rm Int}\, I$ for a subset $I \subset \mathbb{R}$ means the interior of $I$, and
\begin{equation}
\mathcal T = \mathcal T_0 \cup \mathcal T_1.
\label{Definetau0tau1sp0h0)}
\end{equation}

Relation (\ref{resolventequationR(0)mathcalE}) allows us to write the spectrum in the following way:
\begin{lemma}
$\ \sigma(\widehat H^{(0)}_{\mathcal E}) = \sigma^{(0)}(-\widehat\Delta_{\mathcal V}) \cup \sigma^{(0)}(h^{(0)}).$
\end{lemma}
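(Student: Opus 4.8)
The plan is to read off the spectrum directly from the Krein-type resolvent formula \eqref{resolventequationR(0)mathcalE}, which expresses $R^{(0)}_{\mathcal E}(\lambda)$ as a sum of two terms: the ``decoupled'' Dirichlet Green operator $r^{(0)}_{\mathcal E}(\lambda)$, and a term built from $\big(-\widehat\Delta_{\mathcal V}-E(\lambda)\big)^{-1}$ sandwiched between the transfer operators $\widehat T^{(0)}_{\mathcal V,\overline\lambda}$ and $(\widehat T^{(0)}_{\mathcal V,\overline\lambda})^{\ast}$, with the scalar factor $\sin\sqrt\lambda/\sqrt\lambda$. Since $\widehat H^{(0)}_{\mathcal E}$ is self-adjoint, $\lambda\in\sigma(\widehat H^{(0)}_{\mathcal E})$ if and only if $R^{(0)}_{\mathcal E}(\lambda)$ fails to extend to a bounded operator in a neighbourhood of $\lambda$; so the task is to identify exactly the real $\lambda$ at which the right-hand side of \eqref{resolventequationR(0)mathcalE} is singular.

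First I would treat the ``Dirichlet'' contribution. On a single edge $I_e=[0,1]$ with $V_0=0$, the Dirichlet Laplacian has spectrum $\{(\pi j)^2: j\ge 1\}$, and its Green operator $r^{(0)}_e(\lambda)$ is singular precisely when $\phi^{(0)}_{e0}(1,\lambda)=\sin\sqrt\lambda/\sqrt\lambda=0$, i.e. at $\lambda=(\pi j)^2$; these are exactly the points of $\sigma^{(0)}(h^{(0)})$ in \eqref{Definsigma0h0}. This accounts for one half of the claimed union, and one should note that at these points the transfer operators $\widehat T^{(0)}_{\mathcal V,\lambda}$ and the weighted Laplacian $\widehat\Delta^{(0)}_{\mathcal V,\lambda}$ in \eqref{hatDelta0Vlambda} also blow up (there is a $1/\sin\sqrt\lambda$), so one must check that the first term of \eqref{resolventequationR(0)mathcalE} is genuinely singular there and not cancelled; the cleanest way is to argue directly with eigenfunctions — a Dirichlet eigenfunction on one edge, extended by zero, is an eigenfunction of $\widehat H^{(0)}_{\mathcal E}$ at $(\pi j)^2$ (it is continuous and the $\delta$-coupling condition holds because both the function and its one-sided derivatives vanish at the endpoints whenever the vertex degree times the jump vanishes — here the function value and its normal derivatives at the vertex are both zero). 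Conversely, away from $\{(\pi j)^2\}$ the factor $\sin\sqrt\lambda/\sqrt\lambda$ is nonzero and the first term of \eqref{resolventequationR(0)mathcalE} is bounded iff $\big(-\widehat\Delta_{\mathcal V}-E(\lambda)\big)^{-1}$ is bounded on $\ell^2(\mathcal V_0)$, which happens iff $E(\lambda)\notin\sigma(-\widehat\Delta_{\mathcal V})$; the set of $\lambda$ for which $E(\lambda)\in\sigma(-\widehat\Delta_{\mathcal V})$ is by definition $\sigma^{(0)}(-\widehat\Delta_{\mathcal V})$ from \eqref{Definesigam0deltaV}. Combining the two cases gives the inclusion $\sigma(\widehat H^{(0)}_{\mathcal E})\subseteq\sigma^{(0)}(-\widehat\Delta_{\mathcal V})\cup\sigma^{(0)}(h^{(0)})$, and the eigenfunction construction plus the standard fact that $\big(-\widehat\Delta_{\mathcal V}-\mu\big)^{-1}$ is unbounded for $\mu\in\sigma(-\widehat\Delta_{\mathcal V})$ (pulling back through $E$, using that $E$ is a nonconstant real-analytic function so $E(\lambda)=\mu$ has solutions accumulating appropriately on the spectral bands) gives the reverse inclusion.

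The step I expect to be the main obstacle is the bookkeeping at the ``resonant'' points where the two mechanisms overlap or where $E'(\lambda)=0$: one must ensure that no spurious cancellation occurs between the pole of $\big(-\widehat\Delta_{\mathcal V}-E(\lambda)\big)^{-1}$ and the zero of the prefactor $\sin\sqrt\lambda/\sqrt\lambda$, and, at band edges where $E(\lambda)$ is critical, that $\lambda\in\sigma^{(0)}(-\widehat\Delta_{\mathcal V})$ really forces $\lambda\in\sigma(\widehat H^{(0)}_{\mathcal E})$ even though $R^{(0)}_{\mathcal E}(\lambda)$ might only fail to be bounded rather than have an outright pole. I would handle this by working with Weyl sequences: given $\mu\in\sigma(-\widehat\Delta_{\mathcal V})$ and $\lambda_0$ with $E(\lambda_0)=\mu$, take a Weyl sequence $\hat u_n$ for $-\widehat\Delta_{\mathcal V}$ at $\mu$, form the corresponding edge functions via \eqref{S2Formulahatuezlambda} (with $\hat f=0$, so $\hat u_{e,n}(z)=c_e(1,\lambda_0)\phi^{(0)}_{e0}(z,\lambda_0)/\phi^{(0)}_{e0}(1,\lambda_0)+c_e(0,\lambda_0)\phi^{(0)}_{e1}(z,\lambda_0)/\phi^{(0)}_{e1}(0,\lambda_0)$ with $c_e(\cdot,\lambda_0)$ the vertex values of $\hat u_n$), and verify by Lemma \ref{S2LemmaDeltamathcalVmathcalV=Tlambda} that these satisfy the $\delta$-coupling condition and that $\|(\widehat H^{(0)}_{\mathcal E}-\lambda_0)\hat u_n\|/\|\hat u_n\|\to 0$, using (M-1)--(M-2) to get uniform-in-$n$ comparability of the $L^2(\mathcal E)$-norm of the extension with the $\ell^2(\mathcal V_0)$-norm of $\hat u_n$. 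That yields $\lambda_0\in\sigma(\widehat H^{(0)}_{\mathcal E})$, closing the reverse inclusion, and the proof is then complete.
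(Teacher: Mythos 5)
There is a genuine error in your treatment of the Dirichlet points $\lambda=(\pi j)^2$. A nonzero Dirichlet eigenfunction on a single edge, say $\sin(\pi j z)$ on $[0,1]$, has \emph{nonvanishing} one-sided derivatives at the endpoints (if both the value and the derivative vanished at an endpoint, the function would be identically zero by ODE uniqueness). Hence its extension by zero is continuous on $\Gamma$ but violates the $\delta$-coupling condition (\ref{GeneralizedKirchhoff}): at an endpoint $v$ the left-hand side $\sum_{e\sim v}\hat u_e'(v)=\pm\pi j\neq 0$ while $C_v\hat u(v)=0$. So your claimed eigenfunctions are not in $D(\widehat H^{(0)}_{\mathcal E})$, and the inclusion $\sigma^{(0)}(h^{(0)})\subset\sigma(\widehat H^{(0)}_{\mathcal E})$ is not established this way. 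The correct construction (standard for equilateral graphs) uses states supported on cycles of the lattice, with alternating signs of the Dirichlet eigenfunction along the cycle chosen so that the outgoing derivatives cancel at every vertex; the periodic lattices considered here contain such cycles, which is also why these points are eigenvalues of infinite multiplicity, as remarked after the lemma. Alternatively one can argue from the resolvent formula (\ref{resolventequationR(0)mathcalE}) itself, but then one must genuinely rule out a cancellation between the pole of $r^{(0)}_{\mathcal E}(\lambda)$ and the singular prefactors in the first term, which you acknowledge but then bypass with the invalid eigenfunction argument.

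A second, smaller gap is in your Weyl-sequence argument for $\sigma^{(0)}(-\widehat\Delta_{\mathcal V})\subset\sigma(\widehat H^{(0)}_{\mathcal E})$. If $\hat u_n$ is only an approximate solution of $(-\widehat\Delta_{\mathcal V}-E(\lambda_0))\hat u_n=0$, then by Lemma \ref{S2LemmaDeltamathcalVmathcalV=Tlambda} the interpolated edge functions satisfy the $\delta$-coupling condition only up to an error proportional to $\big((-\widehat\Delta_{\mathcal V}-E(\lambda_0))\hat u_n\big)(v)$ at each vertex; they are therefore \emph{not} in $D(\widehat H^{(0)}_{\mathcal E})$, and the quantity $\|(\widehat H^{(0)}_{\mathcal E}-\lambda_0)\hat u_n\|$ you propose to estimate is not defined. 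This is fixable — e.g.\ correct the interpolants near the vertices by small compactly supported terms restoring the coupling condition at a cost controlled by the Weyl defect, or show directly that $\|\widehat R^{(0)}_{\mathcal E}(\lambda_0\pm i\epsilon)\|$ blows up by feeding suitable vectors through (\ref{resolventequationR(0)mathcalE}) together with the two-sided bounds on $\widehat T^{(0)}_{\mathcal V,\lambda}$ and its adjoint — but as written the step fails. The overall strategy of reading the spectrum off the Krein-type formula (\ref{resolventequationR(0)mathcalE}) is the intended one, so repairing these two points would bring your argument in line with the paper.
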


For example, in the Kirchhoff coupling case, $\kappa_{\mathcal V} = 0$, we have $\sigma(\widehat H^{(0)}_{\mathcal E}) = [0,\infty)$ for square and hexagonal lattices. Note that $\sigma^{(0)}(h^{(0)})$ is the set of eigenvalues of infinite multiplicities embedded in $\sigma(\widehat H^{(0)}_{\mathcal E})$.

\subsection{Function spaces}

For an edge $e \in \mathcal E_0$ with the endpoints $v, w \in \mathcal V_0$, we define
\begin{equation} \label{midpoint}
|e_c| = \frac{1}{2}|v + w|,
\end{equation}
i.e. the distance of its midpoint from the origin, where for $x = (x_1,\dots,x_d)\in {\mathcal V_0} \subset {\mathbb  R}^d$ we denote $|x| = \sqrt{x_1^2 + \dots + x_d^2}$. It will serve as a radius-like variable allowing to define the needed function spaces. Recall that our graph $\Gamma = (\mathcal V, \mathcal E)$ is a local perturbation of a periodic lattice $\Gamma_0 = (\mathcal L_0, \mathcal V_0, \mathcal E_0)$, which means that $\Gamma$ and $\Gamma_0$ coincide in the exterior domain
\begin{equation}
\mathcal E_{ext,R} \ni e \Longleftrightarrow |e_c| > R,
\end{equation}
provided $R$ is chosen sufficiently large; without loss of generality we may suppose that $R > 1$. The interior domain
\begin{equation}
\mathcal E_{int,R} = \mathcal E\setminus \mathcal E_{ext,R}
\end{equation}
in which all the perturbations are located is finite and the `radius' plays no role there. Hence we keep the definition \eqref{midpoint} in the exterior domain, and for the interior domain $\mathcal E_{int,R}$ we put instead
\begin{equation}
|e_c| = 1 \quad {\rm if} \quad e \in \mathcal E_{int,R}.
\end{equation}
With this proviso we introduce the function spaces on ${\mathcal E}$: we put $r_j = 2^{j}$ and define
\begin{equation} \label{L2s}
\widehat L^{2,s}(\mathcal E) \ni \hat f \Longleftrightarrow
\sum_{e \in \mathcal E}|e_c|^{2s}\|\hat f_e\|^2_{L^2(e)} < \infty,
\end{equation}
\begin{equation} \label{Bspace}
\widehat{\mathcal B}(\mathcal E) \ni \widehat f \Longleftrightarrow
 \sum_{e \in \mathcal E}r_j^{1/2}\Big(
\sum_{r_{j-1} \leq |e_c| < r_j}\|\hat f_e\|_{L^2(e)}^2 \Big)^{1/2} < \infty,
\end{equation}
\begin{equation} \label{B*space}
\widehat{\mathcal B}^{\ast}(\mathcal E) \ni \hat f\Longleftrightarrow
\sup_{R>1}\frac{1}{R}\sum_{|e_c| < R}\|\hat f_e\|^2_{L^2(e)} < \infty,
\end{equation}
equipped with their obvious norms. As the notation suggests, $\widehat{\mathcal B}^{\ast}(\mathcal E)$ can be identified with the dual space of $\widehat{\mathcal B}(\mathcal E)$, and the following inclusions hold for $s > 1/2$:
\begin{equation}
\widehat L^{2,s}(\mathcal E)\subset \widehat{\mathcal B}(\mathcal E)\subset \widehat L^{2,1/2}(\mathcal E) \subset \widehat L^{2}(\mathcal E) \subset \widehat L^{2,-1/2}(\mathcal E)\subset \widehat{\mathcal B}^{\ast}(\mathcal E)\subset \widehat L^{2,-s}(\mathcal E),
\end{equation}
where $\widehat L^2(\mathcal E) = \widehat L^{2,0}(\mathcal E)$. Moreover, $\widehat{\mathcal B}^{\ast}_0(\mathcal E)$ is a closed subspace of $\widehat{\mathcal B}^{\ast}(\mathcal E)$
defined by
\begin{equation}
\widehat{\mathcal B}^{\ast}_0(\mathcal E) \ni \widehat f \Longleftrightarrow \lim_{R\to\infty}\frac{1}{R}\sum_{|e_c| < R}\|\hat f_e\|^2_{L^2(e)} = 0.
\end{equation}

Let us further note that for the `vertex' Laplacian, the spaces $\widehat L^{2,s}(\mathcal V)$, $\widehat{\mathcal B}(\mathcal V)$, $\widehat{\mathcal B}^{\ast}(\mathcal V)$, $\widehat{\mathcal B}^{\ast}_0(\mathcal V)$ are defined in the same way as above with the norms $\|\hat f_e\|_{L^2(e)}$ at the right-hand sides of \eqref{L2s}--\eqref{B*space} replaced by $|\hat f(v)|$. This is one more manifestation of the parallelism between the discrete graph and  the quantum graph. In the former, we consider ${\mathbb C}$-valued functions on the discrete set $\mathcal V$, while in the latter, we deal with $L^2((0,1))$-valued functions on the discrete set $\{e_c\, ; \, e \in \mathcal E_{ext,R}\}$. This correspondence is inherited, in particular, in the resolvent estimates.

\subsection{Rellich-type theorem}

\begin{theorem}
\label{QuantumgraphRellichtypeTheorem}
Let $\lambda \in \big({\rm Int}\, \sigma_e(\widehat H^{(0)}_{\mathcal E})\big)\setminus \sigma^{(0)}_{\mathcal T}$, 
and suppose that $\hat u \in \widehat{\mathcal B}^{\ast}_0(\mathcal E)$ satisfies $\widehat H^{(0)}_{\mathcal E}\hat u = \lambda \hat u$ and the $\delta$-coupling condition in $\mathcal E_{ext,R}$ for some $R>1$.
Then $\hat u = 0$ holds in $\mathcal E_{ext,R_1}$ for some $R_1 \geq R$.
\end{theorem}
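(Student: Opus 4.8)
The plan is to reduce this to the Rellich-type theorem for the discrete periodic Laplacian $-\widehat\Delta_{\mathcal V}$ on $\Gamma_0$ (Theorem~5.1 in \cite{AIM16}, valid under (D-1)--(D-4) for energies outside $\mathcal T_0\cup\mathcal T_1$), through the edge-to-vertex correspondence of \S\ref{Metricgraphsection}. In $\mathcal E_{ext,R}$ the graph coincides with $\Gamma_0$ and $V_e\equiv 0$, so on each exterior edge $e$ the component $\hat u_e$ solves $-\hat u_e''=\lambda\hat u_e$. The first thing I would establish is that $\lambda\in({\rm Int}\,\sigma_e(\widehat H^{(0)}_{\mathcal E}))\setminus\sigma^{(0)}_{\mathcal T}$ forces $\sin\sqrt\lambda\neq 0$: indeed, $\sin\sqrt\lambda=0$ would give $\lambda=(\pi j)^2$, and since $\sigma^{(0)}(h^{(0)})$ is discrete while $\lambda$ is interior to $\sigma_e(\widehat H^{(0)}_{\mathcal E})$, such $\lambda$ must then lie in $\sigma^{(0)}(-\widehat\Delta_{\mathcal V})$, so $E(\lambda)=\pm1\in\sigma(-\widehat\Delta_{\mathcal V})\subset[-1,1]$; being an extreme point of the spectrum of the constant-degree averaging operator $\widehat\Delta_{\mathcal V}$, this is a critical value of $p(\cdot,E(\lambda))$, hence $E(\lambda)\in\mathcal T_0$ by (D-3), so $\lambda\in\sigma^{(0)}_{\mathcal T}$ --- a contradiction. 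Consequently, by continuity at the vertices (\ref{S2DomainH}) and (\ref{S2Formulahatuezlambda}) with $\hat f_e=0$, each $\hat u_e$ is uniquely determined by its endpoint values $\hat u(v),\hat u(w)$, namely $\hat u_e(z)=\big(\hat u(w)\sin(\sqrt\lambda\,z)+\hat u(v)\sin(\sqrt\lambda(1-z))\big)/\sin\sqrt\lambda$.

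Next I would record what the $\delta$-coupling condition becomes. At a vertex $v$ all of whose incident edges lie in $\mathcal E_{ext,R}$, Lemma~\ref{S2LemmaDeltamathcalVmathcalV=Tlambda} with $\underline{\hat f}=0$ together with (\ref{hatDelta0Vlambda})--(\ref{hatQ0Vlambda}) gives $(-\widehat\Delta^{(0)}_{\mathcal V,\lambda}+\widehat Q^{(0)}_{\mathcal V,\lambda})\hat u(v)=0$, which after multiplication by $\sin\sqrt\lambda/\sqrt\lambda\neq 0$ is precisely $(-\widehat\Delta_{\mathcal V}-E(\lambda))\hat u(v)=0$, valid on the set $\mathcal V_{ext,R'}$ of vertices whose star is contained in $\mathcal E_{ext,R}$, for a suitable $R'\geq R$ depending only on the maximal edge diameter $L_0$ of $\Gamma_0$.

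Then I would transfer the decay hypothesis. The linear map $\mathbb C^2\ni(\hat u(v),\hat u(w))\mapsto\hat u_e\in L^2(0,1)$ above is injective exactly because $\sin\sqrt\lambda\neq 0$ makes $\sin(\sqrt\lambda\,z)$ and $\sin(\sqrt\lambda(1-z))$ linearly independent, so there are constants $0<c_\lambda\leq C_\lambda<\infty$, the same for all exterior edges since these are isometric copies of $[0,1]$ with zero potential, with $c_\lambda(|\hat u(v)|^2+|\hat u(w)|^2)\leq\|\hat u_e\|^2_{L^2(e)}\leq C_\lambda(|\hat u(v)|^2+|\hat u(w)|^2)$. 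Together with the geometric comparison $\big|\,|e_c|-|v|\,\big|\leq L_0$ for exterior edges, this shows $\hat u\in\widehat{\mathcal B}^{\ast}_0(\mathcal E)$ implies $\hat u|_{\mathcal V}\in\widehat{\mathcal B}^{\ast}_0(\mathcal V)$. Since the hypothesis on $\lambda$ means $E(\lambda)$ is an interior point of $\sigma(-\widehat\Delta_{\mathcal V})$ lying outside $\mathcal T_0\cup\mathcal T_1$, Theorem~5.1 of \cite{AIM16} applies to $\hat u|_{\mathcal V}$ and yields $\hat u(v)=0$ for all $|v|\geq R_1'$, some $R_1'\geq R'$. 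Finally, for an exterior edge $e$ with $|e_c|>R_1:=R_1'+L_0$ both endpoints have norm $>R_1'$, so $\hat u_e=0$ by the explicit formula, whence $\hat u=0$ in $\mathcal E_{ext,R_1}$, as claimed.

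The step I expect to be the main obstacle is the careful bookkeeping of the exceptional sets: verifying that the single hypothesis $\lambda\in({\rm Int}\,\sigma_e(\widehat H^{(0)}_{\mathcal E}))\setminus\sigma^{(0)}_{\mathcal T}$ simultaneously delivers (i) $\sin\sqrt\lambda\neq 0$, so that the edge data are equivalent to the vertex data and Lemma~\ref{S2LemmaDeltamathcalVmathcalV=Tlambda} may be applied, and (ii) that $E(\lambda)$ is a genuine interior point of $\sigma(-\widehat\Delta_{\mathcal V})$ away from $\mathcal T_0\cup\mathcal T_1$ so that the discrete Rellich theorem is available. This hinges on the analytic map $E(\cdot)$ of (\ref{DefineE(lambda)}) and on identifying, via (D-3), the extreme values $\pm1$ of the contraction $-\widehat\Delta_{\mathcal V}$ with elements of $\mathcal T_0$. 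The remaining ingredients --- uniformity in the edge of the constants $c_\lambda,C_\lambda$ and of the distance comparison, and the fact that $\mathcal E_{ext,R}$ is literally a piece of the periodic lattice --- are routine consequences of the periodicity of $\Gamma_0$ and of (M-1), (M-2).
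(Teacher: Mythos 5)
Your argument is essentially the paper's own proof: write $\hat u_e$ on exterior edges in terms of its endpoint values, use the resulting $L^2$-equivalence to transfer the $\widehat{\mathcal B}^{\ast}_0$-decay to $\hat w=\hat u|_{\mathcal V}$, observe via Lemma \ref{S2LemmaDeltamathcalVmathcalV=Tlambda} that $(-\widehat\Delta_{\mathcal V}-E(\lambda))\hat w=0$ on the exterior vertices, and invoke the discrete Rellich-type theorem of \cite{AIM16}. The only additions are the explicit verification that the hypothesis on $\lambda$ rules out $\sin\sqrt\lambda=0$ and the final bookkeeping recovering edge vanishing from vertex vanishing, both of which the paper leaves implicit; these are correct and harmless.
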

\begin{proof}
Since $R$ is chosen large enough so that all the perturbations are inside of $\mathcal E_{int,R}$, on each edge $e \in \mathcal E_{ext,R}$, the solution $\hat u$ can be written as
$$
\hat u_e(z) = \hat u_e(1)\frac{\sin\sqrt{\lambda}z}{\sqrt{\lambda}} + \hat u_e(0)\frac{\sin\sqrt{\lambda}(1-z)}{\sqrt{\lambda}}.
$$
As the functions $\frac{\sin\sqrt{\lambda}(1-z)}{\sqrt{\lambda}}$ and $\frac{\sin\sqrt{\lambda}z}{\sqrt{\lambda}}$ are linearly independent for such a $\lambda$, there exists a constant $C_{\lambda} > 0$ such that
\begin{equation}
C_{\lambda}^{-1}\left(|\hat u_e(0)| + |\hat u_e(1)|\right) \leq \|\hat u_e\|_{L^2(e)} \leq C_{\lambda}\left(|\hat u_e(0)| + |\widehat u_e(1)|\right)
\label{S5u0u1andUL2eareequivalent}
\end{equation}
for all $e \in \mathcal E_{ext,R}$. We put $\hat w = \hat u\big|_{\mathcal V}$, then in view of Lemma \ref{S2LemmaDeltamathcalVmathcalV=Tlambda}, we have
$$
\big(- \widehat\Delta_{\mathcal V} - E(\lambda)\big)\hat w = 0, \quad {\rm on } \quad \mathcal V\cap \mathcal E_{ext,R}.
$$
Since $\hat u \in \widehat{\mathcal B}_0^{\ast}(\mathcal E)$ holds by assumption, the inequality (\ref{S5u0u1andUL2eareequivalent}) implies 
$\hat w \in \widehat{\mathcal B}^{\ast}_0(\mathcal V)$. By the Rellich-type theorem for vertex Schr{\"o}dinger operators \cite[Theorem~5.1]{AIM16}, we have $\widehat w(v) = 0$ for $|v| > R'$ with a sufficiently large $R'$. This proves the theorem.
\end{proof}
\begin{definition}
We say that the operator $\widehat H_{\mathcal E} - \lambda$ has the  unique continuation property if the following assertion holds: If $\hat u$ satisfies $(\widehat H_\mathcal{E} - \lambda)\hat u = 0$ on $\mathcal E$, and $\hat u = 0$ on $\mathcal E_{ext,R}$ for a positive $R$, then $\hat u = 0$ holds on $\mathcal E$.
\end{definition}

For the unperturbed system, by assumption (D-4) in \S \ref{InverseScatteringDiscrete} (essentially coinciding with (C-2) in \S \ref{SectionFinitegraphInverseproblem}), $\widehat H_{\mathcal E}^{(0)} - \lambda$ has the unique continuation property for all $\lambda$. 
{Adding a potential, it is also true for the unperturbed operator $\widehat H_{\mathcal E}$. }
\begin{lemma}
Under the assumptions (D-1)--{(D-4)}, we have
$$
\sigma_p(\widehat H^{(0)}_{\mathcal E})\cap \sigma_e(\widehat H^{(0)}_{\mathcal E}) \subset
 \sigma^{(0)}_{\mathcal T}.$$
\end{lemma}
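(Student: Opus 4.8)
\bigskip
\noindent\emph{Proof proposal.}
Let $\hat u\in\widehat L^2(\mathcal E)$, $\hat u\neq 0$, satisfy $\widehat H^{(0)}_{\mathcal E}\hat u=\lambda\hat u$ with $\lambda\in\sigma_p(\widehat H^{(0)}_{\mathcal E})\cap\sigma_e(\widehat H^{(0)}_{\mathcal E})$. Since $\lambda\in\sigma^{(0)}_{\mathcal T}$ means $\lambda\in{\rm Int}\,\sigma_e(\widehat H^{(0)}_{\mathcal E})$ together with $E(\lambda)\in\mathcal T=\mathcal T_0\cup\mathcal T_1$, the plan splits into: (i) use the reduction of \S\ref{Metricgraphsection} to the vertex operator to pin down $E(\lambda)$ and force it into $\mathcal T$ via (D-1)/(D-3); and (ii) verify that such $\lambda$ is interior to the essential spectrum. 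As a backup for (i), note that once $\lambda\in{\rm Int}\,\sigma_e(\widehat H^{(0)}_{\mathcal E})$ is known, the claim $E(\lambda)\in\mathcal T$ also follows from Theorem~\ref{QuantumgraphRellichtypeTheorem}: if $E(\lambda)\notin\mathcal T$, applying that theorem to $\hat u\in\widehat{\mathcal B}^{\ast}_0(\mathcal E)$ yields $\hat u=0$ on $\mathcal E_{ext,R_1}$, and then the unique continuation property of $\widehat H^{(0)}_{\mathcal E}-\lambda$ forces $\hat u\equiv 0$, a contradiction.

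First I would treat the case $\lambda\notin\sigma^{(0)}(h^{(0)})$, where $\sin\sqrt\lambda\neq 0$ and the vertex reduction is legitimate. Setting $\hat w:=\hat u|_{\mathcal V}$, Lemma~\ref{S2LemmaDeltamathcalVmathcalV=Tlambda} with vanishing right-hand side gives $\big(-\widehat\Delta_{\mathcal V}-E(\lambda)\big)\hat w=0$ on $\mathcal V_0$; the two-sided bound \eqref{S5u0u1andUL2eareequivalent} (valid since $\sin\sqrt\lambda\neq0$) gives $\hat w\in\ell^2(\mathcal V)$; and $\hat w=0$ would force $\hat u_e(0)=\hat u_e(1)=0$, hence $\hat u_e\equiv 0$ on each edge, contradicting $\hat u\neq 0$. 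Thus $E(\lambda)\in\sigma_p(-\widehat\Delta_{\mathcal V})=\sigma_p(\widehat H_0)$. Since $\widehat H_0$ is periodic, such an eigenvalue $c:=E(\lambda)$ is a flat-band energy, i.e.\ $p(\,\cdot\,,c)\equiv 0$ on $\mathbb T^d$ (the zero set of the trigonometric polynomial $p(\,\cdot\,,c)$ has positive measure only if it is all of $\mathbb T^d$), whence $\nabla_xp(\,\cdot\,,c)\equiv 0$ as well, so $M^{\mathbb C}_{c,sng}$ is not discrete; by (D-1) this is impossible unless $c\in\mathcal T_1\subset\mathcal T$. The same flat-band picture, combined with the structure of $\sigma(-\widehat\Delta_{\mathcal V})$ and the behaviour of $E$, shows that $\lambda$ lies in the interior of $\sigma^{(0)}(-\widehat\Delta_{\mathcal V})\subset\sigma_e(\widehat H^{(0)}_{\mathcal E})$.

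It remains to handle $\lambda=(\pi j)^2\in\sigma^{(0)}(h^{(0)})$. Here \eqref{DefineE(lambda)} gives $E(\lambda)=-\cos(\pi j)=(-1)^{j+1}\in\{-1,1\}$. Since $\widehat\Delta_{\mathcal V}$ is a contraction, $\sigma(-\widehat\Delta_{\mathcal V})\subset[-1,1]$ and $\pm1$ are its extreme points; if $\pm1$ is attained, the corresponding band function has a global extremum there, so $\nabla_xp$ vanishes at a point of $M_{\pm1}$, and (D-3) forces $\pm1\in\mathcal T_0\subset\mathcal T$. Finally, the interiority of $(\pi j)^2$ in $\sigma_e(\widehat H^{(0)}_{\mathcal E})$ is read off from the explicit form of $E$ in \eqref{DefineE(lambda)} and the band picture of the lattice — it is immediate, e.g., in the Kirchhoff case $\kappa_{\mathcal V}=0$ for the square and hexagonal lattices, where $\sigma(\widehat H^{(0)}_{\mathcal E})=[0,\infty)$. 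In all cases $E(\lambda)\in\mathcal T$ and $\lambda\in{\rm Int}\,\sigma_e(\widehat H^{(0)}_{\mathcal E})$, i.e.\ $\lambda\in\sigma^{(0)}_{\mathcal T}$.

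The step I expect to be the main obstacle is part (ii): the bookkeeping at the edges of the essential spectrum, namely showing that no embedded eigenvalue of $\widehat H^{(0)}_{\mathcal E}$ lies on $\partial\sigma_e(\widehat H^{(0)}_{\mathcal E})$ — this is needed both so that Theorem~\ref{QuantumgraphRellichtypeTheorem} is applicable and so that the interiority clause in the definition of $\sigma^{(0)}_{\mathcal T}$ is met. The genuinely non-routine ingredient behind the argument is the exclusion of flat bands of $-\widehat\Delta_{\mathcal V}$ at energies outside $\mathcal T$, which is precisely what (D-1) and (D-3) deliver.
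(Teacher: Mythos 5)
Your ``backup'' argument is precisely the paper's proof, and nothing more is in it: an eigenfunction lies in $\widehat L^{2}(\mathcal E)\subset\widehat{\mathcal B}^{\ast}_0(\mathcal E)$, Theorem \ref{QuantumgraphRellichtypeTheorem} makes it vanish outside a compact set, and the unique continuation property (coming from (D-4)) kills it. Your ``main'' route is genuinely different and, for the inclusion $E(\lambda)\in\mathcal T$, it is sound: for $\lambda\notin\sigma^{(0)}(h^{(0)})$ the vertex restriction is a nonzero $\ell^2$ eigenfunction of $-\widehat\Delta_{\mathcal V}$ at energy $E(\lambda)$, a flat band forces $p(\cdot,E(\lambda))\equiv0$, hence $M^{\mathbb C}_{E(\lambda),sng}$ is not discrete and (D-1) (or (D-3)) puts $E(\lambda)$ into $\mathcal T$; at $\lambda=(\pi j)^2$ the value $E(\lambda)=\pm1$ is an extreme point of $\sigma(H_0)$, and whenever $\pm1\in\sigma(H_0)$ the gradient $\nabla_xp(\cdot,\pm1)$ vanishes somewhere on $M_{\pm1}$ (cleanest: $x\mapsto p(x,\pm1)$ does not change sign on ${\mathbb T}^d$ and vanishes at the extremizer, which avoids differentiating band functions at crossings), so (D-3) gives $\pm1\in\mathcal T_0$. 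This buys more information than the paper's two lines — it shows that outside $\mathcal T$ the vertex operator has no flat bands, so the only candidates for embedded eigenvalues are the Dirichlet values — at the cost of case analysis; the paper buys brevity by quoting the Rellich theorem wholesale.

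The step you single out as the obstacle, interiority, is indeed the one place where your main route does not close, but you should not try to close it: the literal inclusion can fail. Nothing in (D-1)--(D-4) forces $\pm1\in\sigma(H_0)$; for the triangular lattice with Kirchhoff coupling $\kappa_{\mathcal V}=0$ one has $\sigma(H_0)=[-1/2,1]$, so the Dirichlet values $\lambda=(2k\pi)^2$, where $E(\lambda)=-1$, are isolated points of $\sigma(\widehat H^{(0)}_{\mathcal E})$ of infinite multiplicity: they belong to $\sigma_p(\widehat H^{(0)}_{\mathcal E})\cap\sigma_e(\widehat H^{(0)}_{\mathcal E})$ but neither to ${\rm Int}\,\sigma_e(\widehat H^{(0)}_{\mathcal E})$ nor to $\sigma^{(0)}_{\mathcal T}$ as defined in \eqref{Definesigma0tau}. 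The paper's own proof carries the same silent restriction, since Theorem \ref{QuantumgraphRellichtypeTheorem} is only available for $\lambda\in\big({\rm Int}\,\sigma_e(\widehat H^{(0)}_{\mathcal E})\big)\setminus\sigma^{(0)}_{\mathcal T}$; what it actually establishes, and all that is used later (e.g.\ in Theorem \ref{LAPwholespace}), is $\sigma_p(\widehat H^{(0)}_{\mathcal E})\cap{\rm Int}\,\sigma_e(\widehat H^{(0)}_{\mathcal E})\subset\sigma^{(0)}_{\mathcal T}$. With that reading your backup is already a complete proof, and your main route closes as well: interiority becomes a hypothesis rather than something to verify, and in the case $\lambda=(\pi j)^2$ it automatically yields $\pm1\in\sigma(H_0)$ (otherwise $\lambda$ would be an isolated spectral point), which is exactly what your band-edge argument needs.
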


\begin{proof}
Any eigenvector of $\widehat H_{\mathcal E}$ is in $\widehat{L}^2(\mathcal V) \subset {\widehat{\mathcal B}^{\ast}_0(\mathcal E)}$, and therefore it vanishes `at infinity' by Theorem \ref{QuantumgraphRellichtypeTheorem}. By the unique continuation property, it vanishes everywhere.
\end{proof}

As can be checked easily, the square  and hexagonal lattices satisfy the unique continuation property.

\subsection{Radiation condition}
For systems having ${\mathbb R}^d$ as the configuration space, the radiation condition is introduced either by observing the asymptotic behavior at infinity, or, what is equivalent, from the singularities of the Fourier image of solutions to the Schr{\"o}dinger equation. Dealing with lattice Schr{\"o}dinger operators, we adopt the latter approach.
\begin{definition}
Given a distribution $u \in \mathcal D'({\mathbb T}^d)$, its wave front set $WF^{\ast}(u)$ is defined as follows: a point $(x_0,\omega) \in {\mathbb R}^d\times S^{d-1}$ does not belong to  $WF^{\ast}(u)$  if and only if there exist $0 < \delta < 1$ and $\chi(x) \in C_0^{\infty}({\mathbb R}^d)$ such that $\chi(x_0) =1$ and
\begin{equation}
\mathop{\lim_{R\to\infty}}\frac{1}{R}\int_{|\xi|<R}|C_{\omega,\delta}(\xi)
(\widetilde{\chi u})(\xi)|^2d\xi = 0,
\end{equation}
where $\widetilde{\chi u}$ is the Fourier transform of $\chi u$ and $C_{\omega,\delta}(\xi)$  is the characteristic function of the cone $\{\xi \in {\mathbb R}^d\, ; \, \omega\cdot\xi > \delta|\xi|\}$.
\end{definition}
Let $\lambda_j(x)$, $j = 1,2,\dots,s$, be the eigenvalues of $H_0(x)$ and $P_j(x)$ the associated eigenprojections, and let $H_0$ be the operator of multiplication by $H_0(x)$ on $\big(L^2({\mathbb T}^d)\big)^s$. In \cite[Lemma~4.7]{AIM16}, it was proven that the operator
$$
\mathcal B(
{\mathbb T}^d) \ni f \to \frac{f(x)}{\lambda_j(x) - \rho \mp i0} \in \mathcal B^{\ast}({\mathbb T}^d)
$$
is bounded if $\rho \not\in \big({\rm Int}\, \sigma(H_0)\big)\setminus \mathcal T$. Furthermore in \cite[Theorem~6.1]{AIM16} it was shown that for any $f \in \mathcal B({\mathbb T}^d)$,  $1 \leq j \leq s$ and $\rho \in \sigma(H_0)\setminus \mathcal T$, it holds that

\medskip
\noindent
$(RC)_+$  : \ \  \
$\displaystyle{WF^{\ast}\big(\frac{P_jf}{\lambda_j(x) - \rho - i0}\big) \subset \{(x,\omega_x) \, ; \, x \in M_{\rho,j}\},}$

\medskip
\noindent
$(RC)_-$ : \ \  \
$\displaystyle{WF^{\ast}\big(\frac{P_jf}{\lambda_j(x) - \rho + i0}\big) \subset \{(x,-\omega_x) \, ; \, x \in M_{\rho,j}\}}$,

\medskip
\noindent
where $\omega_x \in S^{d-1}\cap T_x(M_{\lambda,j})^{\perp}$ and $\omega(x)\cdot \nabla \lambda_j(x) < 0$. Moreover, for any $f \in \mathcal B({\mathbb T}^d)$, the function $u = (H_0(x) - \lambda \mp i0)^{-1}f \in \mathcal B^{\ast}({\mathbb T}^d)$ is the unique solution to the equation $(H_0(x) - \rho)u = f$ satisfying $(RC)_+$ or $(RC)_-$, respectively. These claims also extend to the case with compactly supported perturbations.

We put
\begin{equation}
{\rm sgn}(\lambda) =
\left\{
\begin{split}
& 1 \quad {\rm for} \quad \lambda > 0, \quad \sin\sqrt{\lambda} > 0, \\
& -1 \quad {\rm for} \quad \lambda > 0, \quad \sin\sqrt{\lambda} < 0,
\end{split}
\right.
\end{equation}
and then we can write
\begin{equation}
\cos\sqrt{\lambda \pm i0} =  \cos\sqrt{\lambda} \mp i\,0 \,{\rm sgn}\,(\lambda), \quad \lambda > 0.
\label{Cossqrtlambda}
\end{equation}
 
We recall the discrete Fourier transform $\mathcal U_{\mathcal V}$  defined by (\ref{DefineMathcalUmathcalV}). Let $\widehat P_{ext,R}$ be the orthogonal projection : $L^2(\mathcal E) \to L^2(\mathcal E_{ext,R})$. Taking (\ref{Cossqrtlambda}) into account, we define the radiation condition as follows.

\begin{definition}
A solution $\hat u \in \widehat{\mathcal B}^{\ast}(\mathcal E)$ of the equation $(- \widehat\Delta_{\mathcal E} + V - \lambda)\hat u = \hat f$ is said to satisfy the {\it outgoing radiation condition} if either

\smallskip
\noindent
(i)  $\sin\sqrt{\lambda} > 0$, and $w = \mathcal U\widehat P_{ext,R}\hat u\big|_{\mathcal V}$ satisfies $(RC)_+$ with $\rho = E(\lambda)$,

\smallskip
or

\smallskip
\noindent
(ii)  $\sin\sqrt{\lambda} < 0$, and $u = \mathcal U\widehat P_{ext,R}\hat u\big|_{\mathcal V}$ satisfies $(RC)_-$  with $\rho = E(\lambda),$

\noindent
holds.
Similarly, we define the {\it  incoming radiation condition} with $(RC)_{\pm}$ replaced by $(RC)_{\mp}$. If $\hat u$ satisfies either the outgoing radiation condition or the incoming one, we simply say that $\hat u$ satisfies the radiation condition.
\end{definition}

In \cite{AIM16}, the radiation condition was also introduced for the vertex Laplacian, see Lemmata 4.8 and 6.2 there. Let $\hat f \in \mathcal{B} (\mathcal E)$. Given a solution $\hat u$ to the edge Schr{\"o}dinger equation $(- \widehat\Delta_{\mathcal E}+ {V} - \lambda)\hat u = \hat f$, denote by $\hat u\big|_{\mathcal V}$  its restriction to $\mathcal V$. Then $\hat u\big|_{\mathcal V}$ satisfies the vertex Schr{\"o}dinger equation
\begin{equation}
\Big( - \widehat\Delta_{\mathcal V} - E(\lambda)\Big)\hat u  = \hat g,
\end{equation}
where $\hat g \in \mathcal{B} (\mathcal V)$. Comparing these two definitions of the radiation condition, one can make the following claim:

\begin{lemma}
\label{RadCondequivlatticeedge} A solution
$\hat u$ of the edge Schr{\"o}dinger equation satisfies the radiation condition if and only if the solution $\hat u\big|_{\mathcal V}$ of the vertex Schr{\"o}dinger equation satisfies the radiation condition.
\end{lemma}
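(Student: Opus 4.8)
The statement to be proven, Lemma~\ref{RadCondequivlatticeedge}, asserts the equivalence of the radiation condition for a solution $\hat u$ of the edge Schr\"odinger equation and the radiation condition for its vertex restriction $\hat u\big|_{\mathcal V}$. The key observation is that both radiation conditions are phrased in exactly the same way: each is defined via $(RC)_\pm$ applied to $\mathcal U \widehat P_{ext,R}\hat u\big|_{\mathcal V}$ with spectral parameter $\rho = E(\lambda)$, the only difference being the sign choice governed by whether $\sin\sqrt\lambda$ is positive or negative. So the plan is essentially to \emph{unwind the two definitions and check they coincide}, with the real content being that the passage from the edge equation to the vertex equation (via Lemma~\ref{S2LemmaDeltamathcalVmathcalV=Tlambda}) does not alter the asymptotic/wave-front-set information in the exterior region.

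\medskip
\noindent First I would recall that in the exterior domain $\mathcal E_{ext,R}$ the potential vanishes and $\ell_e=1$, so on each exterior edge the solution $\hat u_e(z)$ is the explicit trigonometric combination
$$
\hat u_e(z) = \hat u_e(1)\,\frac{\sin\sqrt\lambda\, z}{\sqrt\lambda} + \hat u_e(0)\,\frac{\sin\sqrt\lambda\,(1-z)}{\sqrt\lambda},
$$
exactly as in the proof of Theorem~\ref{QuantumgraphRellichtypeTheorem}. The key point is then the two-sided estimate \eqref{S5u0u1andUL2eareequivalent}, valid for $\lambda\in\big(\mathrm{Int}\,\sigma_e(\widehat H^{(0)}_{\mathcal E})\big)\setminus\sigma^{(0)}_{\mathcal T}$ since there the functions $\sin\sqrt\lambda\,z/\sqrt\lambda$ and $\sin\sqrt\lambda\,(1-z)/\sqrt\lambda$ are linearly independent: the $L^2(e)$-norm of $\hat u_e$ is comparable, uniformly over $e\in\mathcal E_{ext,R}$, to $|\hat u_e(0)|+|\hat u_e(1)|$. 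This is precisely the statement that the map $\widehat{\mathcal B}^{\ast}(\mathcal E)\to\widehat{\mathcal B}^{\ast}(\mathcal V)$ (and likewise for $\widehat{\mathcal B}$, $\widehat{\mathcal B}^{\ast}_0$) given by restriction $\hat u\mapsto\hat u\big|_{\mathcal V}$ is, when restricted to exterior solutions of the edge equation, a topological isomorphism onto the corresponding space of exterior solutions of the vertex equation. In particular $\hat f\in\widehat{\mathcal B}(\mathcal E)$ produces $\hat g\in\widehat{\mathcal B}(\mathcal V)$ on the vertex side.

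\medskip
\noindent Next I would invoke Lemma~\ref{S2LemmaDeltamathcalVmathcalV=Tlambda} to record that, with $\rho := E(\lambda)$, the restriction $\hat w := \hat u\big|_{\mathcal V}$ solves the vertex Schr\"odinger equation $\big(-\widehat\Delta_{\mathcal V} - E(\lambda)\big)\hat w = \hat g$ in $\mathcal E_{ext,R}$ (cf.\ \eqref{hatDelta0Vlambda}--\eqref{DefineE(lambda)}). Now both radiation conditions become literally the same condition on the single object $\mathcal U\widehat P_{ext,R}\hat w$: the outgoing condition for the edge equation says $(RC)_+$ holds with $\rho=E(\lambda)$ when $\sin\sqrt\lambda>0$ and $(RC)_-$ when $\sin\sqrt\lambda<0$; and this is exactly the outgoing radiation condition for the vertex equation $\big(-\widehat\Delta_{\mathcal V}-E(\lambda)\big)\hat w=\hat g$ as introduced in \cite{AIM16} (Lemmata~4.8 and~6.2 there), because the sign of $\sin\sqrt\lambda$ is precisely what enters \eqref{Cossqrtlambda} and hence determines which of $\pm i0$ appears in $\cos\sqrt{\lambda\pm i0}$, i.e.\ on which side of $M_{E(\lambda),j}$ the wave front lies. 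The incoming case is identical with $(RC)_\pm$ interchanged. Thus the equivalence follows by definition-chasing once the identification of the spectral parameter and the translation of the equation have been made.

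\medskip
\noindent The only point requiring genuine care — and what I would flag as the main obstacle — is the verification that the sign bookkeeping matches \emph{consistently}: one must be sure that the $\mathrm{sgn}(\lambda)$ in \eqref{Cossqrtlambda} couples the sign of $\sin\sqrt\lambda$ to the $\pm i0$ in $E(\lambda\pm i0)$ in the direction dictated by the radiation condition for the vertex operator (namely that the outgoing solution corresponds to $E(\lambda)-i0$ when $\sin\sqrt\lambda>0$ and to $E(\lambda)+i0$ when $\sin\sqrt\lambda<0$), and that $\nabla_x(\lambda_j\circ$ whatever$)$ orientations in $(RC)_\pm$ are not reversed by the (monotone-in-sign) change of variable $\lambda\mapsto E(\lambda)$. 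Concretely one checks $\frac{d}{d\lambda}E(\lambda)$ or, more robustly, simply tracks $\cos\sqrt{\lambda\pm i0}$ through the formulas \eqref{hatQ0Vlambda}, \eqref{DefineE(lambda)} and confirms the claimed inclusion of wave front sets. Everything else is the uniform equivalence of norms \eqref{S5u0u1andUL2eareequivalent}, which has already been established, plus unpacking of definitions.
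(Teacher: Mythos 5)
Your proposal is correct and follows essentially the same route as the paper, which states the lemma as an immediate consequence of comparing the two definitions (the edge radiation condition being defined directly through $\mathcal U\widehat P_{ext,R}\hat u\big|_{\mathcal V}$ and $(RC)_\pm$ with $\rho=E(\lambda)$). Your additional checks — the uniform norm equivalence \eqref{S5u0u1andUL2eareequivalent} transferring $\widehat{\mathcal B}^{\ast}$-membership to the vertex side, the translation of the equation via Lemma \ref{S2LemmaDeltamathcalVmathcalV=Tlambda}, and the $\mathrm{sgn}(\lambda)$ bookkeeping in \eqref{Cossqrtlambda} — simply make explicit what the paper leaves implicit.
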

\begin{lemma}
\label{LemmaRadCondUnique}
Let $\lambda \in \big({\rm Int}\,\sigma_e(\widehat H_{\mathcal E})\big) \setminus \sigma^{(0)}_{\mathcal T}$. 
Then the solution $\hat u \in \widehat{\mathcal B}^{\ast}({\mathcal E})$ of the equation $(- \widehat\Delta_{\mathcal E} + V - \lambda)\widehat u = \widehat f$ satisfying the radiation condition is unique.
\end{lemma}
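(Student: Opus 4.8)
The plan is to reduce the uniqueness for the edge operator $-\widehat\Delta_{\mathcal E}+V-\lambda$ to the uniqueness of radiation-condition solutions for the \emph{vertex} Schr\"odinger operator, which is already available from \cite{AIM16}, using the dictionary between the two pictures set up in \S\ref{Metricgraphsection} and \S\ref{SectionQunatumGraphSpectralTheory}. By linearity it is enough to show that a solution $\hat u\in\widehat{\mathcal B}^{\ast}(\mathcal E)$ of the homogeneous equation $(-\widehat\Delta_{\mathcal E}+V-\lambda)\hat u=0$ obeying the radiation condition must vanish identically; I treat the outgoing case, the incoming one being identical with $(RC)_\pm$ replaced by $(RC)_\mp$.

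First I would restrict to the vertices. Put $\hat w=\hat u\big|_{\mathcal V}$. Lemma \ref{S2LemmaDeltamathcalVmathcalV=Tlambda} with zero right-hand side gives $\big(-\widehat\Delta_{\mathcal V,\lambda}+\widehat Q_{\mathcal V,\lambda}\big)\hat w=0$ on $\mathcal V$; in the exterior, where the graph is unperturbed, $V_e\equiv0$ and $C_v/d_v\equiv\kappa_{\mathcal V}$, formulas \eqref{hatDelta0Vlambda}--\eqref{hatQ0Vlambda} turn this into $\big(-\widehat\Delta_{\mathcal V}-E(\lambda)\big)\hat w=0$, while on the finite perturbed part it is the corresponding perturbed vertex Schr\"odinger equation of \S\ref{InverseScatteringDiscrete}. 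Thus $\hat w$ is a generalized eigenfunction, at energy $E(\lambda)$, of the locally perturbed vertex Laplacian. This step uses $\sin\sqrt\lambda\neq0$, i.e. $\lambda\notin\sigma^{(0)}(h^{(0)})$: the values $(\pi j)^2$ either lie outside ${\rm Int}\,\sigma_e(\widehat H_{\mathcal E})$ or are mapped by $E$ to the band edges of $\sigma(H_0)$, where $\nabla_xp$ vanishes, hence into $\sigma^{(0)}_{\mathcal T}$, so they are excluded by hypothesis.

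Next I would transfer the analytic data of $\hat u$ to $\hat w$ and finish on the vertex side. Since $\phi^{(0)}_{e0}(\cdot,\lambda)$ and $\phi^{(0)}_{e1}(\cdot,\lambda)$ are linearly independent for such $\lambda$, a two-sided bound of the form \eqref{S5u0u1andUL2eareequivalent} makes $\|\hat u_e\|_{L^2(e)}$ comparable to $|\hat u_e(0)|+|\hat u_e(1)|$ uniformly over exterior edges, so $\hat u\in\widehat{\mathcal B}^{\ast}(\mathcal E)$ forces $\hat w\in\widehat{\mathcal B}^{\ast}(\mathcal V)$; and by Lemma \ref{RadCondequivlatticeedge}, $\hat w$ satisfies the outgoing radiation condition for the vertex operator. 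Because $\lambda\in\big({\rm Int}\,\sigma_e(\widehat H_{\mathcal E})\big)\setminus\sigma^{(0)}_{\mathcal T}$ gives $E(\lambda)\in\big({\rm Int}\,\sigma(H_0)\big)\setminus\mathcal T$, this is precisely the range in which the limiting absorption principle and the uniqueness of radiation-condition solutions hold for the perturbed vertex operator \cite{AIM16}; moreover that operator has no eigenvalue embedded in this set, since an $\ell^2$ eigenfunction lies in $\widehat{\mathcal B}^{\ast}_0(\mathcal V)$ and solves the unperturbed equation in the exterior, hence vanishes near infinity by the vertex Rellich-type theorem \cite{AIM16} and then identically by the unique continuation Lemma \ref{uc-infinity-perturbed}. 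Therefore the vertex uniqueness applies and yields $\hat w=\hat u\big|_{\mathcal V}=0$; then on each edge $\hat u_e$ solves the Dirichlet problem $(h_e-\lambda)\hat u_e=0$, $\hat u_e(0)=\hat u_e(1)=0$, whose only solution is $\hat u_e=0$ as $\lambda\notin\sigma^{(0)}(h^{(0)})$, so $\hat u=0$.

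The hard part will not be any of the spectral-theory inputs --- those are imported from \cite{AIM16} --- but the bookkeeping in the reduction: checking that restriction to $\mathcal V$ sends $\widehat{\mathcal B}^{\ast}(\mathcal E)$ into $\widehat{\mathcal B}^{\ast}(\mathcal V)$ and, crucially, carries the \emph{edge} radiation condition to the \emph{vertex} radiation condition with the correct orientation. This is exactly where the equilateral structure ($\ell_e\equiv1$), the explicit solutions $\phi^{(0)}_{e0},\phi^{(0)}_{e1}$, the sign convention \eqref{Cossqrtlambda} for $\cos\sqrt{\lambda\pm i0}$, and the change of spectral parameter $E(\lambda)=-\cos\sqrt\lambda-\kappa_{\mathcal V}\sin\sqrt\lambda/\sqrt\lambda$ between the two pictures have to be matched up, ensuring an outgoing edge solution reduces to an outgoing vertex solution and not an incoming one.
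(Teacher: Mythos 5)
Your proposal is correct and follows essentially the same route as the paper: reduce to the vertex Schr\"odinger operator via the restriction $\hat u\mapsto\hat u|_{\mathcal V}$, invoke the equivalence of the edge and vertex radiation conditions (Lemma \ref{RadCondequivlatticeedge}), and then cite the uniqueness result for the perturbed vertex operator from \cite{AIM16} (Lemma 7.6 there), finishing on each edge with the zero Dirichlet data. The paper's proof is just a terser version of this reduction, so your additional bookkeeping (the $\widehat{\mathcal B}^{\ast}$ comparability of $\hat u$ and $\hat u|_{\mathcal V}$, and the exclusion of the Dirichlet eigenvalues $(\pi j)^2$) only makes explicit what the paper leaves implicit.
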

\begin{proof}
For the vertex Schr{\"o}dinger operator, such a result was proven in  Lemma 7.6 of \cite{AIM16}; in combination with Lemma \ref{RadCondequivlatticeedge}, it yields the claim for the edge Schr{\"o}dinger operator.
\end{proof}

\subsection{Limiting absorption principle}
\label{SubsectionLAP}

Let us first investigate the existence of the limits $( - \widehat\Delta_{\mathcal V,\lambda \pm i0} + \widehat Q_{\mathcal V,\lambda \pm i0})^{-1}$.
\begin{lemma}
\label{LemmahatDeltaVlambda-1}
If $E(\lambda) \in \sigma_e(- \widehat{\Delta}_{\mathcal V})\setminus \mathcal T$, there exists a limit
$$
( - \widehat\Delta_{\mathcal V,\lambda \pm i0} + \widehat Q_{\mathcal V,\lambda \pm i0})^{-1} \in {\mathbb B}(\mathcal B(\mathcal V);\mathcal B \big(\mathcal V)^{\ast} \big).
$$
\end{lemma}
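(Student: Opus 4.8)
The plan is to reduce the statement to the limiting absorption principle for the periodic discrete Schr\"odinger operator proved in \cite{AIM16}, handling the unperturbed operator first and then the locally perturbed one by a finite-rank perturbation argument. The starting point is the exact identity furnished by \eqref{hatDelta0Vlambda}, \eqref{hatQ0Vlambda} and \eqref{DefineE(lambda)}: in the equilateral case with $\ell_e = 1$, $V_0 = 0$ one has $-\widehat\Delta^{(0)}_{\mathcal V,\lambda} + \widehat Q^{(0)}_{\mathcal V,\lambda} = \frac{\sqrt\lambda}{\sin\sqrt\lambda}\,\big(-\widehat\Delta_{\mathcal V} - E(\lambda)\big)$ whenever $\sin\sqrt\lambda \ne 0$, so the inverse of the unperturbed vertex operator $\widehat H^{(0)}_{\mathcal V,\lambda} := -\widehat\Delta^{(0)}_{\mathcal V,\lambda} + \widehat Q^{(0)}_{\mathcal V,\lambda}$ is just $\frac{\sin\sqrt\lambda}{\sqrt\lambda}\big(-\widehat\Delta_{\mathcal V} - E(\lambda)\big)^{-1}$. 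By hypothesis $E(\lambda) \in \sigma_e(-\widehat\Delta_{\mathcal V})\setminus\mathcal T$, so the periodic-lattice limiting absorption principle (\cite[Lemma~4.7, Theorem~6.1]{AIM16}) gives $\big(-\widehat\Delta_{\mathcal V} - E(\lambda)\mp i0\big)^{-1} \in \mathbb B(\mathcal B(\mathcal V);\mathcal B(\mathcal V)^\ast)$; since \eqref{Cossqrtlambda} shows that $E(\lambda\pm i0) = E(\lambda)\mp i0\,{\rm sgn}(\lambda)$ approaches the spectrum from a definite side and $\frac{\sin\sqrt\lambda}{\sqrt\lambda}$ is real-analytic and nonvanishing near such $\lambda$, the limit $\big(\widehat H^{(0)}_{\mathcal V,\lambda\pm i0}\big)^{-1}$ exists in $\mathbb B(\mathcal B(\mathcal V);\mathcal B(\mathcal V)^\ast)$, which settles the unperturbed case.

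For the locally perturbed operator $\widehat H_{\mathcal V,\lambda} := -\widehat\Delta_{\mathcal V,\lambda} + \widehat Q_{\mathcal V,\lambda}$ I would exploit that $\Gamma$ and $\Gamma_0$ coincide outside a finite region: identifying $\ell^2(\mathcal V)$ with $\ell^2(\mathcal V_0)$ away from that region, one can write $\widehat H_{\mathcal V,\lambda} = \widehat H^{(0)}_{\mathcal V,\lambda} + \widehat W_\lambda$, where $\widehat W_\lambda$ is a finite-rank operator whose matrix is supported on the finitely many vertices near the perturbation and which depends analytically on $\lambda$ away from the finitely many Dirichlet eigenvalues of the perturbed edges. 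Factoring $\widehat H_{\mathcal V,\lambda\pm i0} = \widehat H^{(0)}_{\mathcal V,\lambda\pm i0}\big(I + (\widehat H^{(0)}_{\mathcal V,\lambda\pm i0})^{-1}\widehat W_\lambda\big)$ and observing that $(\widehat H^{(0)}_{\mathcal V,\lambda\pm i0})^{-1}\widehat W_\lambda$ is finite-rank, hence compact, on $\mathcal B(\mathcal V)^\ast$, the Fredholm alternative reduces the existence of the limit to the injectivity of $I + (\widehat H^{(0)}_{\mathcal V,\lambda\pm i0})^{-1}\widehat W_\lambda$ on $\mathcal B(\mathcal V)^\ast$. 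If $\hat w$ is a kernel element, then $\big(-\widehat\Delta_{\mathcal V,\lambda} + \widehat Q_{\mathcal V,\lambda}\big)\hat w = 0$; by Lemma~\ref{S2LemmaDeltamathcalVmathcalV=Tlambda} and formula \eqref{S2Formulahatuezlambda} the edge function $\hat u$ reconstructed from the vertex data $\hat w$ solves $(\widehat H_{\mathcal E}-\lambda)\hat u = 0$, lies in $\widehat{\mathcal B}^\ast(\mathcal E)$ by the two-sided bound \eqref{S5u0u1andUL2eareequivalent}, and — because $(\widehat H^{(0)}_{\mathcal V,\lambda\pm i0})^{-1}$ is built from the $\mp i0$ periodic resolvent, whose output obeys $(RC)_\pm$ with $\rho = E(\lambda)$ — satisfies the radiation condition (Lemma~\ref{RadCondequivlatticeedge}). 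Lemma~\ref{LemmaRadCondUnique} applied with $\hat f = 0$ then forces $\hat u = 0$, so $\hat w = 0$; hence $I + (\widehat H^{(0)}_{\mathcal V,\lambda\pm i0})^{-1}\widehat W_\lambda$ is boundedly invertible on $\mathcal B(\mathcal V)^\ast$ and $\big(-\widehat\Delta_{\mathcal V,\lambda\pm i0} + \widehat Q_{\mathcal V,\lambda\pm i0}\big)^{-1} = \big(I + (\widehat H^{(0)}_{\mathcal V,\lambda\pm i0})^{-1}\widehat W_\lambda\big)^{-1}\big(\widehat H^{(0)}_{\mathcal V,\lambda\pm i0}\big)^{-1} \in \mathbb B(\mathcal B(\mathcal V);\mathcal B(\mathcal V)^\ast)$.

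All the genuine analysis — existence of the $\mp i0$ boundary values of the periodic resolvent and their boundedness from $\mathcal B(\mathcal V)$ to $\mathcal B(\mathcal V)^\ast$ — is imported wholesale from \cite{AIM16}, so I expect the main obstacle to be the bookkeeping in the perturbation step: making precise the identification of $\ell^2(\mathcal V)$ with $\ell^2(\mathcal V_0)$ off the perturbed region so that $\widehat W_\lambda$ is genuinely a finite-rank perturbation with the right analyticity, and checking that the $\mp i0$ prescriptions feed correctly — via \eqref{Cossqrtlambda} and the equivalence of the edge and vertex radiation conditions (Lemma~\ref{RadCondequivlatticeedge}) — into Lemma~\ref{LemmaRadCondUnique}, so that no embedded eigenvalue of the perturbed operator can spoil the Fredholm argument.
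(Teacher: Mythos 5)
Your argument is correct in substance, but it follows a genuinely different route from the paper's. The paper does not set up a Fredholm alternative at all: it absorbs the whole local perturbation into a single $\lambda$-dependent, bounded, self-adjoint, compactly supported operator $\widehat W_{\mathcal V,\lambda}$ via $-\widehat\Delta_{\mathcal V,\lambda}+\widehat Q_{\mathcal V,\lambda}=\frac{\sqrt{\lambda}}{\sin\sqrt{\lambda}}\bigl(-\widehat\Delta_{\mathcal V}-E(\lambda)+\widehat W_{\mathcal V,\lambda}\bigr)$, invokes the \emph{perturbed} limiting absorption principle of \cite{AIM16} wholesale at the energy $E(\lambda)$ to get $\bigl(-\widehat\Delta_{\mathcal V}+\widehat W_{\mathcal V,\lambda}-E(\lambda\pm i0)\bigr)^{-1}$, and then disposes of the residual $\epsilon$-dependence of the coefficients by a Neumann series, using $\widehat W_{\mathcal V,\lambda\pm i\epsilon}-\widehat W_{\mathcal V,\lambda}\to 0$. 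You instead import only the \emph{free} (periodic) LAP of \cite{AIM16}, and redo the perturbation step yourself: second-resolvent factorization, compactness of the finite-rank composition, and exclusion of kernel elements by reconstructing an edge solution (Lemma \ref{S2LemmaDeltamathcalVmathcalV=Tlambda} and \eqref{S2Formulahatuezlambda}) and applying the radiation-condition uniqueness of Lemmas \ref{RadCondequivlatticeedge}--\ref{LemmaRadCondUnique}; this is legitimate and non-circular, since those lemmas precede the present one and rest only on \cite[Lemma 7.6]{AIM16}, and in effect it re-proves inside this paper the perturbed LAP that the paper simply cites, making explicit where the absence of embedded-eigenvalue-type obstructions enters. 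What the paper's route buys is brevity; what yours buys is self-containedness modulo the free LAP. Two points where your write-up should be tightened to match even the paper's level of rigor: (i) invertibility of $I+K_0$ at $\epsilon=0$ is not by itself the stated limit — you still must pass $\epsilon\downarrow 0$ through $\bigl(I+K_\epsilon\bigr)^{-1}\bigl(\widehat H^{(0)}_{\mathcal V,\lambda\pm i\epsilon}\bigr)^{-1}$, which needs $K_\epsilon\to K_0$ in norm (available because $\widehat W_{\lambda\pm i\epsilon}$ is finite rank and converges, and the free resolvent boundary values converge on compactly supported data); this is exactly the role the paper's Neumann-series step plays, so it cannot be waved off as bookkeeping; and (ii) the sign prescription in \eqref{Cossqrtlambda} determines whether the kernel element is outgoing or incoming, but since the homogeneous equation with either radiation condition has only the trivial solution by Lemma \ref{LemmaRadCondUnique}, the conclusion is insensitive to that sign, so you may as well say so and drop the hedging. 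The identification of $\ell^2(\mathcal V)$ with $\ell^2(\mathcal V_0)$ off the perturbed region, which you flag as delicate, is an issue your argument shares with the paper's own formulation and is not a defect specific to your approach.
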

\begin{proof}
We use the limiting absorption principle for the vertex Schr{\"o}dinger operator proved in \cite{AIM16}. Taking into account (\ref{hatDelta0Vlambda}) and (\ref{hatQ0Vlambda}), we define $\widehat W_{\mathcal V,\lambda}$ by
\begin{equation}
-\widehat\Delta_{\mathcal V,\lambda} + \widehat Q_{\mathcal V,\lambda} = \frac{\sqrt{\lambda}}{\sin\sqrt{\lambda}}\left(
-\widehat\Delta_{\mathcal V} -  E(\lambda) + \widehat W_{\mathcal V,\lambda} \right),
\end{equation}
where $ \widehat W_{\mathcal V,\lambda}$ is a self-adjoint, bounded, and compactly supported perturbation of $-\widehat\Delta_{\mathcal V} $. Then, regarding $E(\lambda)$ as the energy for $- \widehat\Delta_{\mathcal V}$, and arguing in the same way as in \cite{AIM16}, we can prove the existence of the limit
$$
( - \widehat\Delta_{\mathcal V}+ \widehat W_{\mathcal V,\lambda} - E(\lambda \pm i0))^{-1}.
$$
Using the identity
\begin{equation}
\begin{split}
 & - \widehat\Delta_{\mathcal V} + \widehat W_{\mathcal V,\lambda\pm i\epsilon} - E(\lambda \pm i\epsilon)\\
&  =  - \widehat\Delta_{\mathcal V}+ \widehat W_{\mathcal V,\lambda} - E(\lambda \pm i\epsilon) + (W_{\mathcal V,\lambda\pm i\epsilon} - W_{\mathcal V,\lambda}),
\end{split}
\end{equation}
together with the fact that $W_{\mathcal V,\lambda\pm i\epsilon} - W_{\mathcal V,\lambda} \to 0$ as $\epsilon \to 0$, we can construct the inverse of the right-hand side by the Neumann series. This proves the lemma.
\end{proof}

For $\lambda \not\in \cup_{e\in \mathcal E}\sigma_p(- (d/dz)^2_D + V_e(z))$, {where $-(d/dz)^2_D$ denotes $- (d/dz)^2$ in $L^2(e)$ with Dirichlet boundary condition}, the functions $\phi_{e0}(z,\lambda)$ and $\phi_{e1}(z,\lambda)$ are linearly independent, hence by (\ref{DefineTVlambdaasthatg=frac1dVphi/phig(0)}) there is a constant $C_{\lambda} > 0$ such that
\begin{equation}
C_{\lambda}^{-1}\left(|\hat g(e(0))| + |\hat g(e(1))|\right) \leq
\|(\widehat T_{\mathcal V,\lambda})^{\ast}\hat g_e\|_{L^2(e)} \leq C_{\lambda}\left(|\hat g(e(0))| + |\hat g(e(1))|\right)
\end{equation}
holds for all $e \in \mathcal E$. This implies
\begin{equation}
(\widehat T_{\mathcal V,\lambda})^{\ast} \in {\mathbb B} \big(\mathcal B(\mathcal V)^{\ast}\, ; \, \mathcal B^{\ast}(\mathcal E) \big),
\label{TVlambdaastboundedfromBVtoBE}
\end{equation}
and
\begin{equation}
\widehat T_{\mathcal V,\lambda} \in {\mathbb B}\big(\mathcal B(\mathcal E)\, ; \, \mathcal B(\mathcal V) \big).
\label{TVlambdaboundedfromBEtoBV}
\end{equation}

Combining Lemma \ref{S2LemmaRElambdaformula} with (\ref{TVlambdaastboundedfromBVtoBE}), (\ref{TVlambdaboundedfromBEtoBV}), we arrive at the following result.
\begin{theorem}
\label{LAPwholespace}
Let $I$ be a compact interval in $\big({\rm Int}\,\sigma_e(\widehat H_{\mathcal E})\big) \setminus  \sigma^{(0)}_{\mathcal T}$.  \\
\noindent
(1) There exists a constant $C > 0$ such that
\begin{equation}
\|( \widehat H_{\mathcal E} - \lambda \mp i\epsilon)^{-1}\|_{{\mathbb B}(\widehat{\mathcal B}(\mathcal E);\widehat{\mathcal B}^{\ast}(\mathcal E))} \leq C
\end{equation}
holds for any $\lambda \in I$ and $\epsilon > 0$. \\
\noindent
(2) For any $\lambda \in I$ and $s > 1/2$, there exist strong limits
\begin{equation}
{\mathop{\rm s-lim}_{\epsilon\downarrow 0}}( \widehat H_{\mathcal E} - \lambda \mp i\epsilon)^{-1} =: (\widehat H_{\mathcal E} - \lambda \mp i0)^{-1} \in
{\mathbb B}\big(\widehat L^{2,s}(\mathcal E);\widehat L^{2,-s}(\mathcal E)\big).
\end{equation}
\noindent
(3) For any $\widehat f \in \widehat L^{2,s}(\mathcal E)$, $(\widehat H_{\mathcal E} - \lambda \mp i0)^{-1}\widehat f$ is an $\widehat L^{2,-s}(\mathcal E)$-valued strongly continuous function of $\lambda \in I$.\\
\noindent
(4) For any $\widehat f, \widehat g \in \widehat{\mathcal B}(\mathcal E)$, there exist limits
\begin{equation}
{\mathop{\lim}_{\epsilon\downarrow 0}}\big((\widehat H_{\mathcal E} - \lambda \mp i\epsilon)^{-1}\widehat f,\widehat g\big) =: \big(( \widehat H_{\mathcal E} - \lambda \mp i0)^{-1}\widehat f,\widehat g\big),
\end{equation}
and $ \big((\widehat H_{\mathcal E} - \lambda \mp i0)^{-1}\widehat f,\widehat g\big)$ is a continuous function of $\lambda \in I$.

\noindent
(5) For any $\widehat f \in \widehat{\mathcal B}(\mathcal E)$, $( \widehat H_{\mathcal E} - \lambda - i0)^{-1}\widehat f$ satisfies the outgoing radiation condition, and $(\widehat H_{\mathcal E} - \lambda + i0)^{-1}\widehat f$ satisfies the incoming radiation condition.
\end{theorem}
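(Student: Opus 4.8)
The plan is to read off all five assertions from the Krein-type resolvent formula of Lemma~\ref{S2LemmaRElambdaformula},
$$
\widehat R_{\mathcal E}(\lambda) = (\widehat T_{\mathcal V,\overline\lambda})^{\ast}\big(-\widehat\Delta_{\mathcal V,\lambda}+\widehat Q_{\mathcal V,\lambda}\big)^{-1}\widehat T_{\mathcal V,\lambda} + r_{\mathcal E}(\lambda),
$$
by isolating the only source of nontrivial boundary behaviour in the middle factor. First I would arrange, shrinking $I$ if necessary, that $I$ be disjoint from the union $\bigcup_{e\in\mathcal E}\sigma(h_e)$ of Dirichlet thresholds of the edges: for the unperturbed edges these are the points $(\pi j)^2$, which lie in $\sigma^{(0)}_{\mathcal T}$ since $E((\pi j)^2)=\pm1$ is a band edge of $\sigma(-\widehat\Delta_{\mathcal V})$ and so belongs to $\mathcal T_0$ by (D-3), while the finitely many perturbed edges contribute only finitely many further points to exclude. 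On such an $I$ the kernels $\phi_{e0}(z,\cdot),\phi_{e1}(z,\cdot)$ are holomorphic with $\phi_{e0}(\ell_e,\cdot)$ bounded away from $0$ uniformly in $e$; hence, by the explicit formulas (\ref{DefineTVlambdaf(v)=1/dvsume(0)=vintIedz}), (\ref{DefineTVlambdaasthatg=frac1dVphi/phig(0)}) and the bounds (\ref{TVlambdaastboundedfromBVtoBE}), (\ref{TVlambdaboundedfromBEtoBV}), the maps $\lambda\mapsto\widehat T_{\mathcal V,\lambda}$ and $\lambda\mapsto(\widehat T_{\mathcal V,\overline\lambda})^{\ast}$ extend holomorphically across $I$, are uniformly bounded there between $\mathcal B$ and $\mathcal B^{\ast}$ spaces, and, being built edge by edge, also act boundedly and $\lambda$-continuously on the scale $\widehat L^{2,\pm s}$. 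The term $r_{\mathcal E}(\lambda)=\bigoplus_e r_e(\lambda)$ is a direct sum of edge Green operators, holomorphic in $\lambda\in I$ and bounded on each of $\widehat L^{2,s}(\mathcal E)$, $\widehat{\mathcal B}(\mathcal E)$ with locally uniform norm; since each $r_e(\lambda)\widehat f_e$ obeys the Dirichlet condition at both endpoints, $r_{\mathcal E}(\lambda)\widehat f$ vanishes on $\mathcal V$ and lies in $\widehat{\mathcal B}^{\ast}_0(\mathcal E)$.

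Everything thus reduces to the middle factor. By the proof of Lemma~\ref{LemmahatDeltaVlambda-1} we write $-\widehat\Delta_{\mathcal V,\lambda}+\widehat Q_{\mathcal V,\lambda}=\tfrac{\sqrt\lambda}{\sin\sqrt\lambda}\big(-\widehat\Delta_{\mathcal V}-E(\lambda)+\widehat W_{\mathcal V,\lambda}\big)$ with $\widehat W_{\mathcal V,\lambda}$ a bounded, compactly supported, $\lambda$-continuous perturbation, so the middle factor equals $\tfrac{\sin\sqrt\lambda}{\sqrt\lambda}\big(-\widehat\Delta_{\mathcal V}+\widehat W_{\mathcal V,\lambda}-E(\lambda)\big)^{-1}$. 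Since $I$ avoids $\sigma^{(0)}_{\mathcal T}$, the image $E(I)$ is a compact subset of $\big({\rm Int}\,\sigma_e(-\widehat\Delta_{\mathcal V})\big)\setminus\mathcal T$, so the limiting absorption principle for the vertex operator from \cite{AIM16} applies with energy $\rho=E(\lambda)$; re-running its resolvent estimates with the $\lambda$-dependent but continuously varying perturbation $\widehat W_{\mathcal V,\lambda}$ — the extra $\lambda$-dependence being handled by the Neumann-series argument already used in the proof of Lemma~\ref{LemmahatDeltaVlambda-1} — yields a bound on $\|(-\widehat\Delta_{\mathcal V}+\widehat W_{\mathcal V,\lambda}-E(\lambda)\mp i\epsilon)^{-1}\|_{{\mathbb B}(\mathcal B(\mathcal V);\mathcal B(\mathcal V)^{\ast})}$ uniform in $\lambda\in I$ and $\epsilon>0$, the existence of strong limits in ${\mathbb B}(\widehat L^{2,s}(\mathcal V);\widehat L^{2,-s}(\mathcal V))$, and continuity in $\lambda$ of these limits (strongly on $\widehat L^{2,s}$ and in the $\widehat{\mathcal B}$–$\widehat{\mathcal B}^{\ast}$ pairing). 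Composing with the two $\widehat T$ factors and adding $r_{\mathcal E}(\lambda)$ gives (1)–(4) at once, using for (3),(4) that $\widehat T_{\mathcal V,\lambda}$ is weight-preserving and norm-continuous in $\lambda$.

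For (5), the identity preceding Lemma~\ref{S2LemmaRElambdaformula} shows that the restriction to $\mathcal V$ of $\widehat R_{\mathcal E}(\lambda)\widehat f$ equals $\big(-\widehat\Delta_{\mathcal V,\lambda}+\widehat Q_{\mathcal V,\lambda}\big)^{-1}\widehat T_{\mathcal V,\lambda}\widehat f=\tfrac{\sin\sqrt\lambda}{\sqrt\lambda}\big(-\widehat\Delta_{\mathcal V}+\widehat W_{\mathcal V,\lambda}-E(\lambda)\big)^{-1}\widehat T_{\mathcal V,\lambda}\widehat f$, the term $r_{\mathcal E}(\lambda)\widehat f$ contributing nothing on $\mathcal V$. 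Passing to the boundary value and using (\ref{Cossqrtlambda}) to compute the limit of $E(\lambda\mp i\epsilon)$, the vertex function $\mathcal U\widehat P_{ext,R}\,\widehat R_{\mathcal E}(\lambda\mp i0)\widehat f\big|_{\mathcal V}$ is, up to the nonzero real scalar $\tfrac{\sin\sqrt\lambda}{\sqrt\lambda}$, precisely the solution of the vertex Schr\"odinger equation with source $\widehat T_{\mathcal V,\lambda}\widehat f\in\mathcal B(\mathcal V)$ singled out by \cite[Theorem~6.1]{AIM16}, hence it satisfies $(RC)_+$ or $(RC)_-$ with $\rho=E(\lambda)$ according to the sign of $\sin\sqrt\lambda$ and of $\mp i0$; matching with the Definition preceding Lemma~\ref{RadCondequivlatticeedge} this is exactly the outgoing (for $-i0$) resp. incoming (for $+i0$) radiation condition, and Lemma~\ref{RadCondequivlatticeedge} then transfers it to the edge solution. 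This proves (5).

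The step I expect to be the main obstacle is the one hidden inside ``re-running the vertex estimates of \cite{AIM16}'': those are formulated for a \emph{fixed} compactly supported perturbation of $-\widehat\Delta_{\mathcal V}$ and a \emph{fixed} energy, whereas here both $\widehat W_{\mathcal V,\lambda}$ and $E(\lambda)$ vary with $\lambda$, so one must verify that the uniform limiting-absorption bound and the strong continuity of the boundary-value resolvent are locally uniform in this extra parameter. This is what makes parts (1), (3), (4) genuine work rather than a formality; once that uniformity is in hand, parts (2) and (5) are bookkeeping on top of it together with the behaviour of $\cos\sqrt{\lambda\pm i0}$. A secondary point requiring care is the initial reduction to an interval $I$ disjoint from $\bigcup_{e}\sigma(h_e)$, i.e. checking that the edge Dirichlet thresholds do not meet $\big({\rm Int}\,\sigma_e(\widehat H_{\mathcal E})\big)\setminus\sigma^{(0)}_{\mathcal T}$.
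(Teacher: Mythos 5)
Your proposal is correct and follows essentially the same route as the paper: the paper obtains Theorem \ref{LAPwholespace} precisely by combining the Krein-type formula of Lemma \ref{S2LemmaRElambdaformula} with the vertex limiting absorption principle of Lemma \ref{LemmahatDeltaVlambda-1} (the $\widehat W_{\mathcal V,\lambda}$/Neumann-series reduction to \cite{AIM16}) and the mapping properties (\ref{TVlambdaastboundedfromBVtoBE}), (\ref{TVlambdaboundedfromBEtoBV}) of $\widehat T_{\mathcal V,\lambda}$, with part (5) transferred from the vertex radiation condition via Lemma \ref{RadCondequivlatticeedge}. Your explicit attention to uniformity in $\lambda$ and to keeping $I$ away from the edge Dirichlet spectra only spells out what the paper tacitly assumes (cf.\ the hypothesis of Lemma \ref{rbfelambdainL2}).
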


\subsection{Spectral representation}
\label{SubsecSpectralRepresentation2}

As we have noted in the paragraph following eq.~\eqref{S5VertexsetV0}, there are unitary equivalences
$$
\ell^2(\mathcal V_0) \cong (\ell^2({\mathbb Z}^d))^s \cong (L^2({\mathbb T}^d))^s,
$$
by means of the decomposition (\ref{S5VertexsetV0}) and the discrete Fourier trandformation
(\ref{DefineMathcalUmathcalV}) with ${\rm deg}\,\mathcal E_0(x) = d_{\mathcal V_0}$. In the following, we  freely make use of the identification
\begin{equation}
\ell^2(\mathcal V_0) \ni \big(\hat f(v)\big)_{v \in \mathcal V_0} \longleftrightarrow
\hat f(n) = (\hat f_1(n),\dots,\hat f_s(n)) \in (\ell^2({\mathbb Z}^d))^s
\label{ell2(V0)unitaryell2(Zd)s}
\end{equation}
and we put\footnote{To be more precise, one should insert the operator of identification $J : \ell^2(\mathcal V_0) \to
(\ell^2({\mathbb Z}^d))^s$ defined by (\ref{ell2(V0)unitaryell2(Zd)s}) in front of $\widehat T_{\mathcal V,\lambda}^{(0)}$. We omit it, however, for the sake of simplicity.}
\begin{equation}
\Phi^{(0)}(\lambda) = \mathcal U_{\mathcal V}
{\widehat T_{\mathcal V,\lambda}^{(0)}},
\end{equation}
{ where  ${\widehat T_{\mathcal V,\lambda}^{(0)}}$ is the unperturbed $\widehat T_{\mathcal V,\lambda}$ defined by (\ref{DefineTVlambdaf(v)=1/dvsume(0)=vintIedz}).}
Let $P_{\mathcal V,j}(x)$ be the eigenprojection associated with the eigenvalue $\lambda_j(x)$ of $H_0(x)$, and denote
\begin{equation}
\begin{split}
D^{(0)}(\lambda \pm i0) &=  \frac{\sin{\sqrt{\lambda}}}{\sqrt{\lambda}}\,
\mathcal U_{\mathcal V}(- \widehat\Delta_{\mathcal V} - E(\lambda \pm i0))^{-1}\mathcal U_{\mathcal V}^{\ast}\\
&=
\frac{\sin{\sqrt{\lambda}}}{\sqrt{\lambda}}
\sum_{j=1}^s\frac{1}{\lambda_j(x) - E(\lambda \pm i0)}P_{\mathcal V,j}(x).
\end{split}
\end{equation}
By (\ref{resolventequationR(0)mathcalE}),
the following formula holds:
\begin{equation}
\begin{split}
\widehat R^{(0)}_{\mathcal E}(\lambda \pm i0) =
\Phi^{(0)}(\lambda)^{\ast}D^{(0)}(\lambda\pm i0)\Phi^{(0)}(\lambda) + r_{\mathcal E}^{(0)}(\lambda).
\end{split}
\label{LemmaR0lambdapi0mathcalE}
\end{equation}
To construct a spectral representation of $\widehat H^{(0)}_{\mathcal E}$, we put
\begin{equation}
M_{\mathcal E,\lambda,j} = \{x \in {\mathbb T}^d\, ; \, \lambda_j(x) - E(\lambda) = 0\},
\nonumber
\end{equation}
\begin{equation}
(\varphi,\psi)_{\lambda,j} = \int_{M_{\mathcal E,\lambda,j}}P_{\mathcal V,j}(x)\varphi(x)\cdot\overline{\psi(x)}\,dS_j,
\end{equation}
\begin{equation}
dS_j = \frac{|\sin \sqrt{\lambda}|}{\sqrt{\lambda}} \frac{dM_{\mathcal E,\lambda,j}}{|\nabla_x\lambda_j(x)|}.
\nonumber
\end{equation}
Combining (\ref{LemmaR0lambdapi0mathcalE}) with the formula
\begin{equation}
\begin{split}
& \big(\widehat R_{\mathcal V}^{(0)}(-\cos\sqrt{\lambda + i0}) \widehat f-
\widehat R_{\mathcal V}^{(0)}(-\cos\sqrt{\lambda - i0})\widehat f, \widehat g\big) \\
& = 2\pi i \sum_j\int_{M_{\mathcal E, \lambda, j}}
P_{\mathcal V,j}\widehat f \cdot \overline{P_{\mathcal V,j}\widehat g}\,
\frac{dM_{\mathcal E,\lambda,j}}{|\nabla \lambda_j(x)|},
\end{split}
\nonumber
\end{equation}
valid for $\lambda \in \big({\rm Int}\,\sigma_e(\widehat H^{(0)}_{\mathcal E})\big) \setminus \sigma^{(0)}_{\mathcal T}$,
for which we refer to eq.~(6.7) of \cite{AIM16}, we obtain the relation
\begin{equation}
\begin{split}
&  \frac{1}{2\pi i}\Big(\big(\widehat R^{(0)}_{\mathcal E}(\lambda + i0) - \widehat R^{(0)}_{\mathcal E}(\lambda - i0)\big)\widehat f, \widehat g\Big) = \sum_{j=1}^s
\big(P_{\mathcal V,j}\Phi^{(0)}(\lambda)\widehat f,P_{\mathcal V,j}\Phi^{(0)}(\lambda)\widehat g\big)_{\lambda,j}.
\end{split}
\label{Parseval0forR0lambda}
\end{equation}
Furthermore, we put
\begin{equation}
\widehat{\mathcal F}_j^{(0)}(\lambda)\widehat f = \left(P_{\mathcal V,j}\Phi^{(0)}(\lambda)\widehat f\right)\Big|_{M_{\mathcal E,\lambda,j}},
\label{S5DefineFj(0)lambda}
\end{equation}
in other words, the restriction to $M_{\mathcal E,\lambda,j}$ with the components
\begin{equation}
\widehat{\mathcal F}^{(0)}(\lambda) = \big(\widehat{\mathcal F}^{(0)}_{1}(\lambda),\cdots,\widehat{\mathcal F}^{(0)}_{s}(\lambda)\big),
\nonumber
\end{equation}
\begin{equation}
{\bf h}_{\lambda} = {\mathop\oplus_{j=1}^s}P_{\mathcal V,j}\Big|_{M_{\mathcal E,\lambda,j}}L^2\big(M_{\mathcal E,\lambda,j};dS_j\big),
\end{equation}
\begin{equation}
{\mathbb H} = L^2\big((0,\infty),{\bf h}_{\lambda};d\lambda\big).
\nonumber
\end{equation}
Then, by virtue of (\ref{Parseval0forR0lambda}) we can write
\begin{equation}
\frac{1}{2\pi i}\Big(\big(\widehat R^{(0)}_{\mathcal E}(\lambda + i0)-\widehat R^{(0)}_{\mathcal E}(\lambda - i0)\big)\widehat f,\widehat g\Big)
= (\widehat{\mathcal F}^{(0)}(\lambda)\widehat f,\widehat{\mathcal F}^{(0)}(\lambda)\widehat g)_{{\bf h}_{\lambda}}.
\nonumber
\end{equation}
Let $E^{(0)}(\lambda)$ be the spectral measure for $\widehat H^{(0)}_{\mathcal E}$. Integrating the last equality and using Stone's formula, we get
\begin{equation}
(E^{(0)}(I)\widehat f,\widehat g) = \int_I(\widehat{\mathcal F}^{(0)}(\lambda)\widehat f,\widehat{\mathcal F}^{(0)}(\lambda)\widehat g)_{{\bf h}_{\lambda}}d\lambda,
\nonumber
\end{equation}
for any interval $I \subset \big({\rm Int}\,\sigma_e(\widehat H^{(0)}_{\mathcal E})\big) \setminus \sigma^{(0)}_{\mathcal T}$.
Hence $\widehat{\mathcal F}^{(0)}$ extends uniquely to an isometry from the subspace\footnote{For a self-adjoint operator $A$, $\mathcal H_{ac}(A)$ denotes conventionally its absolutely continuous subspace, while $\mathcal H_{p}(A)$ is the closure of the linear hull of eigenvectors of $A$.}
$\mathcal H_{ac}(\widehat H^{(0)}_{\mathcal E})$ to ${\mathbb H}$. Moreover, we define
\begin{equation}
\widehat{\mathcal F}^{(0)} = 0, \quad {\rm on} \quad {\mathcal H}_p(\widehat H^{(0)}_{\mathcal E}).
\nonumber
\end{equation}
As one can see from (\ref{S5DefineFj(0)lambda}), to obtain $\widehat{\mathcal F}^{(0)}(\lambda)$ one has in fact to diagonalize the matrix $H_0(x)$.

The spectral representation  for $\widehat H_{\mathcal E}$ is constructed by the perturbation method well known in the stationary scattering theory. For the case of perturbation by a potential, we make use of the resolvent equation
\begin{equation}
\widehat R_{\mathcal E}(\lambda \pm i0) = \widehat R_{\mathcal E}^{(0)}(\lambda \pm i0)\big(1 -
V_{\mathcal E}\widehat R_{\mathcal E}(\lambda \pm i0)\big).
\label{Resolventequation1}
\end{equation}
Then, defining $\widehat{\mathcal F}^{(\pm)}(\lambda)$ by
\begin{equation}
\widehat{\mathcal F}^{(\pm)}(\lambda) = \widehat{\mathcal F}^{(0)}(\lambda)\big(1 - V_{\mathcal E}\widehat R_{\mathcal E}(\lambda \pm i0)\big) \in
{\mathbb B}(\widehat{\mathcal B}(\mathcal E)\, ; \,{\bf h}_{\lambda}),
\label{DefineFpmlambdainmathcalE}
\end{equation}
and using the resolvent equation \cite[Lemma~7.8]{AIM16}, we have
\begin{equation}
\frac{1}{2\pi i}\big(\big(\widehat R_{\mathcal E}(\lambda + i0)-\widehat R_{\mathcal E}(\lambda - i0)\big)\widehat f,\widehat g\big)
= (\widehat{\mathcal F}^{(\pm)}(\lambda)\widehat f,\widehat{\mathcal F}^{(\pm)}(\lambda)\widehat g)_{{\bf h}_{\lambda}}.
\nonumber
\end{equation}
We define an operator $\widehat{\mathcal F}^{(\pm)}$ by $(\widehat{\mathcal F}^{(\pm)}\widehat f)(\lambda) = \widehat{\mathcal F}^{(\pm)}(\lambda)\widehat f$, and we also put
\begin{equation}
\widehat{\mathcal F}^{(\pm)} = 0,  \quad {\rm on} \quad {\mathcal H}_p(\widehat H_{\mathcal E});
\nonumber
\end{equation}
this yields the sought spectral representation of  $\widehat H_{\mathcal E}$.

On the other hand, concerning the perturbation of the lattice structure, we take a cut-off function $\chi_0$ whose support contains all the perturbation, and put $\chi_{\infty} = 1 - \chi_0$. In that case the equality
\begin{equation}
\chi_{\infty}\widehat R_{\mathcal E}(\lambda \pm i0) = \widehat R^{(0)}_{\mathcal E}(\lambda \pm i0)\big(\chi_{\infty} + [H^{(0)}_{\mathcal E},\chi_{\infty}]\widehat R_{\mathcal E}(\lambda \pm i0)\big)
\label{ResolventEquation2}
\end{equation}
plays the role of the resolvent equation, and $\widehat{\mathcal F}^{(\pm)}(\lambda)$ is defined by
\begin{equation}
\widehat{\mathcal F}^{(\pm)}(\lambda) = \widehat{\mathcal F}^{(0)}(\lambda)\big(\chi_{\infty} +
[H^{(0)}_{\mathcal E},\chi_{\infty}]\widehat R_{\mathcal E}(\lambda \pm i0)\big).
\label{DefineFjlambdaperturbed}
\end{equation}

Summarizing this discussion, we obtain the following result.

\begin{theorem}
\label{EigenfunctionExpansionWholespace}
(1) The operator $\widehat{\mathcal F}^{(\pm)}$ extends uniquely to a unitary operator from ${\mathcal H}_{ac}(\widehat H_{\mathcal E})$ to ${\mathbb H}$ annihilating the subspace ${\mathcal H}_p(\widehat H_{\mathcal E})$.

\noindent
(2) The operator diagonalizes $\widehat H_{\mathcal E}$, namely
$$
\big(\widehat{\mathcal F}^{(\pm)}\widehat H_{\mathcal E}\widehat f\big)(\lambda) =
\lambda\big(\widehat{\mathcal F}^{(\pm)}\widehat f\big)(\lambda), \quad \forall \widehat f \in D(\widehat H_{\mathcal E}).
$$
(3) The adjoint operator $\widehat{\mathcal F}^{(\pm)}(\lambda)^{\ast} \in {\mathbb B}({\bf h}_{\lambda};\mathcal B^{\ast}(\mathcal E))$ satisfies the eigenequation
$$
(\widehat H_{\mathcal E} - \lambda)\widehat{\mathcal F}^{(\pm)}(\lambda)^{\ast}\phi = 0, \quad \forall \phi \in {\bf h}_{\lambda}.
$$
(4) For any $\widehat f \in \mathcal H_{ac}(\widehat H_{\mathcal E})$, the inversion formula holds,
$$
\widehat f = \int_{\sigma_{ac}(\widehat H_{\mathcal E})}\widehat{\mathcal F}^{(\pm)}(\lambda)^{\ast}\big(\widehat{\mathcal F}^{(\pm)}\widehat f\big)(\lambda)d\lambda.
$$
\end{theorem}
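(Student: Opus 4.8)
The plan is to run the standard stationary scattering construction, obtaining the spectral representation of $\widehat H_{\mathcal E}$ by perturbing the one of $\widehat H^{(0)}_{\mathcal E}$ already built from the Krein-type formula \eqref{LemmaR0lambdapi0mathcalE} and the Floquet diagonalization of $H_0(x)$. Two kinds of perturbation enter on an equal footing: a perturbation by a potential, governed by the resolvent identity \eqref{Resolventequation1} with $\widehat{\mathcal F}^{(\pm)}(\lambda)$ as in \eqref{DefineFpmlambdainmathcalE}; and a perturbation of the lattice structure, governed by the commutator identity \eqref{ResolventEquation2} with a cut-off $\chi_\infty$ which is constant near the defect, with $\widehat{\mathcal F}^{(\pm)}(\lambda)$ as in \eqref{DefineFjlambdaperturbed}. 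In both cases, for $\lambda\in\big({\rm Int}\,\sigma_e(\widehat H_{\mathcal E})\big)\setminus\sigma^{(0)}_{\mathcal T}$, the operators $1-V_{\mathcal E}\widehat R_{\mathcal E}(\lambda\pm i0)$ and $\chi_\infty+[H^{(0)}_{\mathcal E},\chi_\infty]\widehat R_{\mathcal E}(\lambda\pm i0)$ are bounded on $\widehat{\mathcal B}(\mathcal E)$ by the limiting absorption principle, Theorem \ref{LAPwholespace}(1), so $\widehat{\mathcal F}^{(\pm)}(\lambda)\in{\mathbb B}(\widehat{\mathcal B}(\mathcal E);{\bf h}_\lambda)$ is well defined there and, by Theorem \ref{LAPwholespace}(3), depends strongly continuously on $\lambda$; this makes $\widehat{\mathcal F}^{(\pm)}$ a well-defined operator from $\widehat{\mathcal B}(\mathcal E)$ into ${\mathbb H}$.

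The first step is to establish the Parseval-type identity
\[
\frac{1}{2\pi i}\big(\big(\widehat R_{\mathcal E}(\lambda+i0)-\widehat R_{\mathcal E}(\lambda-i0)\big)\widehat f,\widehat g\big)=\big(\widehat{\mathcal F}^{(\pm)}(\lambda)\widehat f,\widehat{\mathcal F}^{(\pm)}(\lambda)\widehat g\big)_{{\bf h}_\lambda},\qquad \widehat f,\widehat g\in\widehat{\mathcal B}(\mathcal E),
\]
valid for $\lambda$ in the above good set, by inserting the resolvent identity \eqref{Resolventequation1} (resp.\ \eqref{ResolventEquation2}) into the unperturbed Parseval relation \eqref{Parseval0forR0lambda}; the manipulation is exactly the one of \cite[Lemma~7.8]{AIM16}, the only new ingredient being the limiting absorption principle of Theorem \ref{LAPwholespace}. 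Integrating this identity over a compact interval $I$ contained in the good set and applying Stone's formula for $\widehat H_{\mathcal E}$ gives $(\Pi_I\widehat f,\widehat g)=\int_I\big(\widehat{\mathcal F}^{(\pm)}(\lambda)\widehat f,\widehat{\mathcal F}^{(\pm)}(\lambda)\widehat g\big)_{{\bf h}_\lambda}\,d\lambda$, where $\Pi_I$ is the spectral projection of $\widehat H_{\mathcal E}$ onto $I$; since $\sigma^{(0)}_{\mathcal T}$, the band edges, and the embedded set $\{(\pi j)^2\}$ together form a closed null set, $\widehat{\mathcal F}^{(\pm)}$ extends by density to an isometry of $\mathcal H_{ac}(\widehat H_{\mathcal E})$ into ${\mathbb H}$, which we complete by setting $\widehat{\mathcal F}^{(\pm)}=0$ on $\mathcal H_p(\widehat H_{\mathcal E})$; this is the isometry half of (1), and uniqueness of the extension is automatic. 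The diagonalization (2) then follows by applying the Parseval relation with $\varphi(\widehat H_{\mathcal E})\widehat f$ in place of $\widehat f$, which gives $(\widehat{\mathcal F}^{(\pm)}\varphi(\widehat H_{\mathcal E})\widehat f)(\lambda)=\varphi(\lambda)(\widehat{\mathcal F}^{(\pm)}\widehat f)(\lambda)$, and specializing to $\varphi(t)=t$.

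For (3), taking adjoints in \eqref{DefineFpmlambdainmathcalE} and \eqref{DefineFjlambdaperturbed} yields $\widehat{\mathcal F}^{(\pm)}(\lambda)^{\ast}=\big(1-\widehat R_{\mathcal E}(\lambda\mp i0)V_{\mathcal E}\big)\widehat{\mathcal F}^{(0)}(\lambda)^{\ast}$ in the potential case and $\widehat{\mathcal F}^{(\pm)}(\lambda)^{\ast}=\big(\chi_\infty-\widehat R_{\mathcal E}(\lambda\mp i0)[H^{(0)}_{\mathcal E},\chi_\infty]\big)\widehat{\mathcal F}^{(0)}(\lambda)^{\ast}$ in the structural case. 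Boundedness ${\bf h}_\lambda\to\widehat{\mathcal B}^{\ast}(\mathcal E)$ then reduces to the same property of $\widehat{\mathcal F}^{(0)}(\lambda)^{\ast}$ — which follows from \eqref{TVlambdaastboundedfromBVtoBE} together with the $\mathcal B$–$\mathcal B^{\ast}$ mapping properties of the vertex generalized Fourier transform established in \cite{AIM16} — combined with Theorem \ref{LAPwholespace}(1). The eigenequation $(\widehat H_{\mathcal E}-\lambda)\widehat{\mathcal F}^{(\pm)}(\lambda)^{\ast}\phi=0$ is a direct computation from $(\widehat H^{(0)}_{\mathcal E}-\lambda)\widehat{\mathcal F}^{(0)}(\lambda)^{\ast}=0$, $(\widehat H_{\mathcal E}-\lambda)\widehat R_{\mathcal E}(\lambda\mp i0)=1$, and, in the structural case, the identity $[\widehat H_{\mathcal E},\chi_\infty]=[\widehat H^{(0)}_{\mathcal E},\chi_\infty]$, valid because the perturbation is supported where $\chi_\infty$ is constant. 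Finally, the inversion formula (4) is the adjoint form of the isometry relation: pairing the Parseval identity against a dense family of $\widehat g$ and inserting the above expression for $\widehat{\mathcal F}^{(\pm)}(\lambda)^{\ast}$ gives $\widehat f=\int_{\sigma_{ac}(\widehat H_{\mathcal E})}\widehat{\mathcal F}^{(\pm)}(\lambda)^{\ast}\big(\widehat{\mathcal F}^{(\pm)}\widehat f\big)(\lambda)\,d\lambda$ for $\widehat f\in\mathcal H_{ac}(\widehat H_{\mathcal E})$.

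The step I expect to be the main obstacle is the surjectivity needed to promote the isometry in (1) to a unitary onto ${\mathbb H}$, i.e.\ completeness of the generalized eigenfunction expansion on the good set. The route I would take: if $g\in{\mathbb H}$ is orthogonal to the range of $\widehat{\mathcal F}^{(\pm)}$, then for a.e.\ $\lambda$ the vector $\widehat u:=\widehat{\mathcal F}^{(\pm)}(\lambda)^{\ast}g(\lambda)\in\widehat{\mathcal B}^{\ast}(\mathcal E)$ solves $(\widehat H_{\mathcal E}-\lambda)\widehat u=0$ with vanishing outgoing and incoming data, hence $\widehat u\in\widehat{\mathcal B}^{\ast}_0(\mathcal E)$; the Rellich-type Theorem \ref{QuantumgraphRellichtypeTheorem} then forces $\widehat u=0$ in $\mathcal E_{ext,R_1}$, and the unique continuation property for $\widehat H_{\mathcal E}-\lambda$ forces $\widehat u=0$ on all of $\mathcal E$, whence $g(\lambda)=0$ because $\widehat{\mathcal F}^{(\pm)}(\lambda)^{\ast}$ is injective on ${\bf h}_\lambda$ for such $\lambda$ (distinct scattering data produce distinct generalized eigenfunctions). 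Equivalently, one may transfer completeness from the vertex operator $-\widehat\Delta_{\mathcal V}$, where it is part of the results of \cite{AIM16}, through the Krein formula of Lemma \ref{S2LemmaRElambdaformula}, using — as in the proof of Theorem \ref{QuantumgraphRellichtypeTheorem} — that the restriction $\widehat u\mapsto\widehat u|_{\mathcal V}$ matches $\widehat{\mathcal B}^{\ast}_0(\mathcal E)$ solutions in the exterior with $\widehat{\mathcal B}^{\ast}_0(\mathcal V)$ solutions. Throughout, one must run these arguments only on $\big({\rm Int}\,\sigma_e(\widehat H_{\mathcal E})\big)\setminus\sigma^{(0)}_{\mathcal T}$, so that $\mathcal T=\mathcal T_0\cup\mathcal T_1$, the band edges, and the points where $\sin\sqrt\lambda=0$ (at which the edge-to-vertex reduction degenerates) are all excluded, and then note that this excluded set is closed of measure zero, hence irrelevant both for the definition of ${\mathbb H}$ and for the absolutely continuous part of $\widehat H_{\mathcal E}$.
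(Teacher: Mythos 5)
Your construction follows essentially the same route as the paper, which in fact omits the proof and defers to Theorem 7.11 of \cite{AIM16}: the perturbed Parseval identity obtained from the resolvent equations \eqref{Resolventequation1}, \eqref{ResolventEquation2} together with the limiting absorption principle of Theorem \ref{LAPwholespace}, Stone's formula for the isometry on $\mathcal H_{ac}(\widehat H_{\mathcal E})$, the adjoint computation for the eigenequation, and completeness via the Rellich-type theorem, radiation-condition uniqueness and unique continuation. One small caution on your completeness sketch: orthogonality of $g$ to the range already yields $\widehat{\mathcal F}^{(\pm)}(\lambda)^{\ast}g(\lambda)=0$ directly by pairing against all $\widehat f\in\widehat{\mathcal B}(\mathcal E)$, so the intermediate claim that this eigenfunction has vanishing incoming and outgoing data is both unjustified and unnecessary; the genuine remaining point is the injectivity of the fibre adjoint, which comes from relating the datum $\phi$ to the singular terms in the asymptotic expansion of $\widehat{\mathcal F}^{(\pm)}(\lambda)^{\ast}\phi$ (Theorem \ref{ResolventExpansionPerturbedCase}, Theorem \ref{HelmhlotzEqWholespace}), exactly as in \cite{AIM16}.
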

We omit the proof, as it is almost the same as that of Theorem 7.11 in
\cite{AIM16}.

\subsection{Resolvent expansion}

We look at the behavior at infinity of $\widehat R_{\mathcal E}(\lambda \pm i0)\widehat f$ in the sense of $\widehat{\mathcal B}^{\ast}(\mathcal E)$, which is equivalent to observing the singularities of its Fourier transform in the sense of $\mathcal B^{\ast}(\mathcal E)$.
\begin{lemma}
\label{rbfelambdainL2}
For any compact interval $I \subset \big({\rm Int}\,\sigma_e(\widehat H^{(0)}_{\mathcal E})\big) \setminus \sigma^{(0)}_{\mathcal T}$, 
there exists a constant $C > 0$ such that
\begin{equation}
\|\{r^{(0)}_{\bf e}(\lambda)\widehat f_{\bf e}\}_{{\bf e}\in\mathcal E}\|_{\ell^2(\mathcal E)} \leq C\|\widehat f\|_{\ell^2(\mathcal E)}
\nonumber
\end{equation}
holds for all $\lambda \in I$ and ${\bf e} \in \mathcal E$.
\end{lemma}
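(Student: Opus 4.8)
The plan is to reduce the statement, via the block-diagonal structure $r^{(0)}_{\mathcal E}(\lambda)=\bigoplus_{{\bf e}\in\mathcal E}r^{(0)}_{\bf e}(\lambda)$ (here $\ell^2(\mathcal E)$ stands for $\bigoplus_{{\bf e}\in\mathcal E}L^2({\bf e})=\widehat L^2(\mathcal E)$), to a single uniform bound for the Dirichlet Green operator on one edge. Under the standing simplification $V_0\equiv0$, $\ell_{\bf e}=1$, the operator $r^{(0)}_{\bf e}(\lambda)$ is, for \emph{every} edge ${\bf e}$, one and the same operator on $L^2(0,1)$, namely $\bigl(-d^2/dz^2-\lambda\bigr)^{-1}$ with Dirichlet conditions at both endpoints. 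It is diagonalized by the orthonormal basis $\{\sqrt2\,\sin(\pi kz)\}_{k\ge1}$ with eigenvalues $\bigl((\pi k)^2-\lambda\bigr)^{-1}$, so its operator norm on $L^2(0,1)$ equals
\[
\bigl\|r^{(0)}_{\bf e}(\lambda)\bigr\|_{L^2(0,1)\to L^2(0,1)}
=\Bigl(\inf_{k\ge1}\bigl|(\pi k)^2-\lambda\bigr|\Bigr)^{-1}
={\rm dist}\bigl(\lambda,\sigma^{(0)}(h^{(0)})\bigr)^{-1}.
\]
(For a general symmetric $V_0$ the same holds with $\sigma^{(0)}(h^{(0)})$ replaced by the Dirichlet spectrum of $-d^2/dz^2+V_0$ on $(0,1)$, by the spectral theorem for the compact resolvent.)

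Next I would observe that $I$ stays away from this Dirichlet spectrum: $I$ is disjoint from the discrete set $\sigma^{(0)}(h^{(0)})=\{(\pi k)^2\}_{k\ge1}$, since at such a point $E\bigl((\pi k)^2\bigr)=-(-1)^k\in\{-1,1\}$, which for the lattices under consideration is a spectral band edge of $-\widehat\Delta_{\mathcal V}$ (a point where (D-3) fails) and hence lies in $\mathcal T_0\subset\mathcal T$, so that $(\pi k)^2\in\sigma^{(0)}_{\mathcal T}$ is excluded from $I$ by hypothesis; this is the same fact that makes the reduction \eqref{resolventequationR(0)mathcalE} and the Rellich-type Theorem \ref{QuantumgraphRellichtypeTheorem} meaningful on $I$. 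As $\sigma^{(0)}(h^{(0)})$ is closed and discrete and $I$ is compact, $\delta_I:={\rm dist}\bigl(I,\sigma^{(0)}(h^{(0)})\bigr)>0$; only finitely many of the $(\pi k)^2$ lie within distance $\delta_I$ of $I$, and on the compact $I$ the continuous positive function $\lambda\mapsto{\rm dist}\bigl(\lambda,\sigma^{(0)}(h^{(0)})\bigr)$ is bounded below. Hence
\[
C:=\sup_{\lambda\in I}\ {\rm dist}\bigl(\lambda,\sigma^{(0)}(h^{(0)})\bigr)^{-1}<\infty.
\]

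Finally I would assemble the edgewise bounds: for $\widehat f=\{\widehat f_{\bf e}\}_{{\bf e}\in\mathcal E}$ and $\lambda\in I$,
\begin{align*}
\bigl\|\{r^{(0)}_{\bf e}(\lambda)\widehat f_{\bf e}\}_{{\bf e}\in\mathcal E}\bigr\|^2_{\ell^2(\mathcal E)}
&=\sum_{{\bf e}\in\mathcal E}\bigl\|r^{(0)}_{\bf e}(\lambda)\widehat f_{\bf e}\bigr\|^2_{L^2({\bf e})}\\
&\le C^2\sum_{{\bf e}\in\mathcal E}\bigl\|\widehat f_{\bf e}\bigr\|^2_{L^2({\bf e})}
=C^2\,\|\widehat f\|^2_{\ell^2(\mathcal E)},
\end{align*}
using the single-edge norm bound uniformly in ${\bf e}$ and in $\lambda\in I$; taking square roots gives the claim. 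The only point that needs any care — the ``main obstacle'', such as it is — is the separation of $I$ from the Dirichlet spectrum $\sigma^{(0)}(h^{(0)})$ performed above; the remainder is just the observation that the free part $r^{(0)}_{\mathcal E}(\lambda)$ of the resolvent, being a direct sum of uniformly bounded edge operators with no propagation along the graph, is bounded on $\ell^2(\mathcal E)$ itself, unlike the full resolvent $\widehat R^{(0)}_{\mathcal E}(\lambda\pm i0)$, which is bounded only between the weighted spaces of Theorem \ref{LAPwholespace}.
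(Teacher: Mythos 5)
Your argument is correct and is essentially the paper's own proof: the paper disposes of the lemma by noting that $I$ lies in the resolvent set of the Dirichlet operator $-(d/dz)^2_D+V_{\bf e}$ on each edge, which is precisely the uniform bound ${\rm dist}\big(\lambda,\sigma^{(0)}(h^{(0)})\big)^{-1}\le C$ on $I$ that you establish, followed by the same edgewise direct-sum estimate. The one small caveat is that your reason for excluding the points $(\pi k)^2$ from $I$ (namely $E\big((\pi k)^2\big)=\pm1\in\mathcal T_0$) presupposes $\pm1\in\sigma(-\widehat\Delta_{\mathcal V})$; if instead $\pm1\notin\sigma(-\widehat\Delta_{\mathcal V})$, then $(\pi k)^2$ is an isolated point of $\sigma_e(\widehat H^{(0)}_{\mathcal E})$ (an infinite-multiplicity eigenvalue with no band spectrum nearby) and is already excluded by the hypothesis $I\subset{\rm Int}\,\sigma_e(\widehat H^{(0)}_{\mathcal E})$, so the conclusion holds in either case.
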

\begin{proof} 
Since $I$ is in the resolvent set of $- (d/dz)^2_D + V_{\bf e}$, the claim follows. 
\end{proof}

\medskip

For a pair $\widehat f, \widehat g \in \widehat{\mathcal B}^{\ast}(\mathcal E)$, we consider the following equivalence relation
\begin{equation}
\widehat f \simeq \widehat g \Longleftrightarrow
\widehat f - \widehat g \in \widehat{\mathcal B}^{\ast}_0(\mathcal E).
\nonumber
\end{equation}
\begin{lemma}
\label{ResolventEquation2}
For any  $\lambda \in \big({\rm Int}\,\sigma_e(\widehat H^{(0)}_{\mathcal E})\big) \setminus \sigma^{(0)}_{\mathcal T}$ and $\widehat f \in \mathcal B(\mathcal E)$,
we have
\begin{equation}
 \mathcal U_{\mathcal E}\widehat R^{(0)}_{\mathcal E}(\lambda \pm i0)\widehat f
 \simeq \frac{\sin\sqrt{\lambda}}{\lambda} \sum_{j=1}^{s}
\frac{\widehat{\mathcal F}^{(0)}_j(\lambda)\widehat f}{\lambda_j(x) - E(\lambda \pm i0)}.
\nonumber
\end{equation}
\end{lemma}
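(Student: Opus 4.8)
The plan is to read off the claimed behaviour at infinity from the Krein-type resolvent formula \eqref{LemmaR0lambdapi0mathcalE}, namely $\widehat R^{(0)}_{\mathcal E}(\lambda\pm i0) = \Phi^{(0)}(\lambda)^{\ast}D^{(0)}(\lambda\pm i0)\Phi^{(0)}(\lambda) + r^{(0)}_{\mathcal E}(\lambda)$, after disposing of the Dirichlet-decoupled remainder. First I would note that by Lemma \ref{rbfelambdainL2} the term $r^{(0)}_{\mathcal E}(\lambda)\widehat f$ belongs to $\widehat L^{2}(\mathcal E)$, hence to $\widehat{\mathcal B}^{\ast}_0(\mathcal E)$, and so does its image under the bounded edge Fourier transform $\mathcal U_{\mathcal E}$; therefore $\mathcal U_{\mathcal E}\widehat R^{(0)}_{\mathcal E}(\lambda\pm i0)\widehat f \simeq \mathcal U_{\mathcal E}\Phi^{(0)}(\lambda)^{\ast}D^{(0)}(\lambda\pm i0)\Phi^{(0)}(\lambda)\widehat f$, and it remains to analyse the right-hand side.

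Next I would make that term explicit. By the definition of $D^{(0)}$ one has $D^{(0)}(\lambda\pm i0)\Phi^{(0)}(\lambda)\widehat f = \frac{\sin\sqrt\lambda}{\sqrt\lambda}\sum_{j=1}^{s}\frac{P_{\mathcal V,j}(x)\Phi^{(0)}(\lambda)\widehat f}{\lambda_j(x)-E(\lambda\pm i0)} =: g^{(\pm)}(x)$, a $\mathbb C^{s}$-valued distribution on $\mathbb T^{d}$. Since $\Phi^{(0)}(\lambda)^{\ast} = (\widehat T^{(0)}_{\mathcal V,\lambda})^{\ast}\mathcal U_{\mathcal V}^{\ast}$ and $(\widehat T^{(0)}_{\mathcal V,\lambda})^{\ast}$ acts on functions on $\mathcal V_0$ by the pointwise formula \eqref{DefineTVlambdaasthatg=frac1dVphi/phig(0)} built from the free solutions $\phi^{(0)}_{e0}(z,\lambda)=\frac{\sin\sqrt\lambda z}{\sqrt\lambda}$ and $\phi^{(0)}_{e1}(z,\lambda)=\frac{\sin\sqrt\lambda(1-z)}{\sqrt\lambda}$, conjugating it by $\mathcal U_{\mathcal V}$ and $\mathcal U_{\mathcal E}$ turns it into multiplication by a smooth, bounded, operator-valued symbol $\mathcal T(x,\lambda)$ with values in maps into the edge fibres $L^{2}(0,1)$: its entries are the fixed profiles $\frac{\sin\sqrt\lambda z}{\sin\sqrt\lambda}$, $\frac{\sin\sqrt\lambda(1-z)}{\sin\sqrt\lambda}$, weighted by ${\rm deg}_{\mathcal V_0}^{-1}$ and by the quasimomentum phases $e^{-ia\cdot x}$ coming from the displacements between an edge and its two endpoints. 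Hence $\mathcal U_{\mathcal E}\widehat R^{(0)}_{\mathcal E}(\lambda\pm i0)\widehat f \simeq \mathcal T(\cdot,\lambda)g^{(\pm)}$.

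The decisive step is to localise $g^{(\pm)}$ onto the level sets $M_{\mathcal E,\lambda,j}=\{x:\lambda_j(x)=E(\lambda)\}$ and simplify the symbol there. For $\lambda\in\big({\rm Int}\,\sigma_e(\widehat H^{(0)}_{\mathcal E})\big)\setminus\sigma^{(0)}_{\mathcal T}$ the value $E(\lambda)$ lies in $\big({\rm Int}\,\sigma(H_0)\big)\setminus\mathcal T$, so the limiting absorption principle and the wave-front/radiation results of \cite{AIM16} (Lemma~4.7 and Theorem~6.1 there) apply to $-\widehat\Delta_{\mathcal V}$ at energy $E(\lambda)$: the $\widehat{\mathcal B}^{\ast}(\mathcal V)$-singularities of $\frac{P_{\mathcal V,j}(x)\Phi^{(0)}(\lambda)\widehat f}{\lambda_j(x)-E(\lambda\pm i0)}$ are carried by $M_{\mathcal E,\lambda,j}$, and on that set $P_{\mathcal V,j}(x)\Phi^{(0)}(\lambda)\widehat f$ coincides with $\widehat{\mathcal F}^{(0)}_j(\lambda)\widehat f$ by \eqref{S5DefineFj(0)lambda}. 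Multiplication by the smooth symbol $\mathcal T(x,\lambda)$ preserves $\widehat{\mathcal B}^{\ast}_0$, so modulo $\simeq$ one may freeze $\mathcal T(x,\lambda)$ on $M_{\mathcal E,\lambda,j}$ in the $j$-th summand and replace $P_{\mathcal V,j}(x)\Phi^{(0)}(\lambda)\widehat f$ by $\widehat{\mathcal F}^{(0)}_j(\lambda)\widehat f$; a direct evaluation of $\frac{\sin\sqrt\lambda}{\sqrt\lambda}\,\mathcal T(x,\lambda)P_{\mathcal V,j}(x)$ on $M_{\mathcal E,\lambda,j}$, using $\lambda_j(x)=E(\lambda)=-\cos\sqrt\lambda-\kappa_{\mathcal V}\frac{\sin\sqrt\lambda}{\sqrt\lambda}$ from \eqref{DefineE(lambda)}, collapses the symbol to the scalar $\frac{\sin\sqrt\lambda}{\lambda}$ and gives $\mathcal U_{\mathcal E}\widehat R^{(0)}_{\mathcal E}(\lambda\pm i0)\widehat f \simeq \frac{\sin\sqrt\lambda}{\lambda}\sum_{j=1}^{s}\frac{\widehat{\mathcal F}^{(0)}_j(\lambda)\widehat f}{\lambda_j(x)-E(\lambda\pm i0)}$.

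I expect the main obstacle to be precisely this last step: justifying, within the $\widehat{\mathcal B}^{\ast}$-framework, that multiplying a distribution whose singularities sit on $M_{\mathcal E,\lambda,j}$ by the smooth matrix symbol $\mathcal T(x,\lambda)$ and then freezing $\mathcal T$ on $M_{\mathcal E,\lambda,j}$ alters the result only by an element of $\widehat{\mathcal B}^{\ast}_0(\mathcal E)$ — this is the microlocal bookkeeping underlying the radiation-condition machinery of \cite{AIM16} and should be cited rather than re-proved — together with the elementary but error-prone tracking of normalisations: the passage from the coefficient $\tfrac{\sin\sqrt\lambda}{\sqrt\lambda}$ occurring in $D^{(0)}$ to the coefficient $\tfrac{\sin\sqrt\lambda}{\lambda}$ in the statement is exactly what the on-shell value of $\mathcal T(x,\lambda)$ (equivalently the $|\sin\sqrt\lambda|/\sqrt\lambda$ weight in $dS_j$) has to absorb. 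Everything else — the explicit form of $(\widehat T^{(0)}_{\mathcal V,\lambda})^{\ast}$, the passage to Floquet variables, and the inclusion $\widehat L^{2}(\mathcal E)\subset\widehat{\mathcal B}^{\ast}_0(\mathcal E)$ — is routine, and the whole argument runs parallel to the corresponding computation for the vertex operator in \S6 of \cite{AIM16}.
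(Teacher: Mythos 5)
Your proposal is correct and follows essentially the same route as the paper's proof: you discard the Dirichlet-decoupled remainder $r^{(0)}_{\mathcal E}(\lambda)\widehat f$ via Lemma \ref{rbfelambdainL2} and the Krein formula \eqref{LemmaR0lambdapi0mathcalE}, and then localize $\frac{g(x)}{\lambda_j(x)-E(\lambda\pm i0)}$ onto the level sets using the machinery of \cite{AIM16} (the paper cites eq.~(4.34) there, which is exactly the equivalence you invoke), recovering $\widehat{\mathcal F}^{(0)}_j(\lambda)\widehat f$ from \eqref{S5DefineFj(0)lambda}. Your additional explicit treatment of $\mathcal U_{\mathcal E}\Phi^{(0)}(\lambda)^{\ast}$ as a smooth multiplication symbol is bookkeeping the paper leaves implicit (and the paper itself is not consistent about the resulting prefactor, writing $\frac{\sin\sqrt{\lambda}}{\sqrt{\lambda}}$ in its proof and in Theorem \ref{ResolventExpansionPerturbedCase} versus $\frac{\sin\sqrt{\lambda}}{\lambda}$ in the statement), so no essential difference in method.
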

\begin{proof}
Lemma \ref{rbfelambdainL2} in combination with \eqref{LemmaR0lambdapi0mathcalE} implies
\begin{equation}
\begin{split}
\widehat R^{(0)}_{\mathcal E}(\lambda \pm i0) \hat f &\simeq
\Phi^{(0)}(\lambda)^{\ast}D^{(0)}(\lambda\pm i0)\Phi^{(0)}(\lambda)\hat f \\
& = \frac{\sin\sqrt{\lambda}}{\sqrt{\lambda}}\sum_{j=1}^s
\frac{1}{\lambda_j(x)  - E(\lambda \pm i0)}P_{\mathcal V,j}(x)(\Phi^{(0)}(\lambda)\hat f)(x).
\end{split}
\end{equation}
By virtue of eq.~(4.34) of \cite{AIM16}, we have, for $g \in \mathcal B({\mathbb T}^d)$, the equivalence
$$
\frac{1}{\lambda_j(x) - \mu \mp i0}g(x) \simeq \frac{1}{\lambda_j(x) - \mu \mp i0}g\big|_M,
$$
where $M = \{x \in {\bf T}^d\, ; \, \lambda_j(x) = \mu\}$. This proves the claim.
\end{proof}

Next, we extend this lemma to the perturbed case.

\begin{theorem}
\label{ResolventExpansionPerturbedCase}
For any  $\lambda \in \big({\rm Int}\,\sigma_e(\widehat H_{\mathcal E})\big) \setminus \sigma^{(0)}_{\mathcal T}$ and $\widehat f \in \mathcal B(\mathcal E)$, 
we have
\begin{equation}
 \mathcal U_{\mathcal E}\chi_{\infty}\widehat R_{\mathcal E}(\lambda \pm i0)\widehat f
 \simeq  \frac{\sin\sqrt{\lambda}}{\sqrt{\lambda}}\sum_{j=1}^s
\frac{1}{\lambda_j(x) - E(\lambda \pm i0)} \widehat{\mathcal F}^{(\pm)}_j(\lambda)\widehat f.
\nonumber
\end{equation}
\end{theorem}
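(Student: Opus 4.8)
The plan is to deduce the perturbed expansion from the unperturbed one, Lemma~\ref{ResolventEquation2}, by means of the resolvent-type identity for the structural perturbation,
\[
\chi_{\infty}\widehat R_{\mathcal E}(\lambda \pm i0) = \widehat R^{(0)}_{\mathcal E}(\lambda \pm i0)\bigl(\chi_{\infty} + [H^{(0)}_{\mathcal E},\chi_{\infty}]\widehat R_{\mathcal E}(\lambda \pm i0)\bigr),
\]
combined with the definition \eqref{DefineFjlambdaperturbed} of $\widehat{\mathcal F}^{(\pm)}(\lambda)$. First I would introduce
\[
\widehat g_{\pm} := \bigl(\chi_{\infty} + [H^{(0)}_{\mathcal E},\chi_{\infty}]\widehat R_{\mathcal E}(\lambda \pm i0)\bigr)\widehat f ,
\]
so that $\chi_{\infty}\widehat R_{\mathcal E}(\lambda\pm i0)\widehat f = \widehat R^{(0)}_{\mathcal E}(\lambda\pm i0)\widehat g_{\pm}$, while \eqref{DefineFjlambdaperturbed} says exactly $\widehat{\mathcal F}^{(0)}_j(\lambda)\widehat g_{\pm} = \widehat{\mathcal F}^{(\pm)}_j(\lambda)\widehat f$. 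Provided $\widehat g_{\pm}\in\mathcal B(\mathcal E)$, Lemma~\ref{ResolventEquation2} applied to $\widehat g_{\pm}$ in place of $\widehat f$ then gives
\[
\mathcal U_{\mathcal E}\chi_{\infty}\widehat R_{\mathcal E}(\lambda\pm i0)\widehat f = \mathcal U_{\mathcal E}\widehat R^{(0)}_{\mathcal E}(\lambda\pm i0)\widehat g_{\pm} \simeq \frac{\sin\sqrt{\lambda}}{\sqrt{\lambda}}\sum_{j=1}^s\frac{\widehat{\mathcal F}^{(0)}_j(\lambda)\widehat g_{\pm}}{\lambda_j(x)-E(\lambda\pm i0)} = \frac{\sin\sqrt{\lambda}}{\sqrt{\lambda}}\sum_{j=1}^s\frac{\widehat{\mathcal F}^{(\pm)}_j(\lambda)\widehat f}{\lambda_j(x)-E(\lambda\pm i0)},
\]
which is the asserted expansion.

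Thus everything hinges on the membership $\widehat g_{\pm}\in\mathcal B(\mathcal E)$. The first summand $\chi_{\infty}\widehat f$ is in $\mathcal B(\mathcal E)$ since $\widehat f\in\mathcal B(\mathcal E)$ and $0\le\chi_{\infty}\le 1$, so multiplication by $\chi_{\infty}$ is bounded there. For the second summand, the crucial observation is that, on each edge, $[H^{(0)}_{\mathcal E},\chi_{\infty}] = -\chi_{\infty}'' - 2\chi_{\infty}'\,\tfrac{d}{dz}$ is a first-order differential operator \emph{supported on the finitely many edges on which $\chi_{\infty}$ is not constant}; one chooses the cut-off $\chi_{0}=1-\chi_{\infty}$ so that its support contains the whole perturbed part $\mathcal E_{int,R}$ and so that $\chi_{\infty}$ is locally constant near every vertex, whence $[H^{(0)}_{\mathcal E},\chi_{\infty}]$ introduces no vertex contributions. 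By Theorem~\ref{LAPwholespace}, $\widehat R_{\mathcal E}(\lambda\pm i0)\widehat f\in\widehat{\mathcal B}^{\ast}(\mathcal E)$, so its restriction to those transition edges is $L^2$; since it solves the edge equation $(-\widehat\Delta_{\mathcal E}+V-\lambda)\widehat R_{\mathcal E}(\lambda\pm i0)\widehat f = \widehat f$ there, with $\widehat f\in\mathcal B(\mathcal E)$ (hence $\widehat f_e\in L^2(e)$) and $V=0$ on those edges, edgewise ODE regularity upgrades it to $H^2$ on each transition edge. Applying the first-order operator $[H^{(0)}_{\mathcal E},\chi_{\infty}]$ therefore produces a compactly supported $L^2(\mathcal E)$ function, which lies in $\widehat L^{2,s}(\mathcal E)$ for every $s$ and hence in $\mathcal B(\mathcal E)$. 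So $\widehat g_{\pm}\in\mathcal B(\mathcal E)$, and the reduction above concludes the proof.

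The step I expect to be the main obstacle is precisely this commutator bookkeeping: one must fix $\chi_{\infty}$ so that it is smooth on the edges and locally constant near all vertices (so that $\chi_{\infty}\hat u$ stays in the domain of $\widehat H^{(0)}_{\mathcal E}$ with its $\delta$-coupling and the commutator is a genuine compactly supported first-order edge operator), and one must invoke the local $H^2$-regularity of $\widehat R_{\mathcal E}(\lambda\pm i0)\widehat f$ on the transition edges; everything else is formal manipulation. Finally, when the perturbation also carries a non-zero potential $V_{\mathcal E}$, the identical scheme applies with the resolvent equation \eqref{Resolventequation1} and the definition \eqref{DefineFpmlambdainmathcalE} of $\widehat{\mathcal F}^{(\pm)}(\lambda)$ replacing the commutator identity and \eqref{DefineFjlambdaperturbed}: the term $V_{\mathcal E}\widehat R_{\mathcal E}(\lambda\pm i0)\widehat f$ is again compactly supported and $L^2$, hence in $\mathcal B(\mathcal E)$, and Lemma~\ref{ResolventEquation2} finishes that case too.
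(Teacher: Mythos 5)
Your proposal follows essentially the same route as the paper: both cases are reduced to the unperturbed expansion of Lemma \ref{ResolventEquation2} via the resolvent identities \eqref{ResolventEquation2} and \eqref{Resolventequation1}, with the definitions \eqref{DefineFjlambdaperturbed} and \eqref{DefineFpmlambdainmathcalE} converting $\widehat{\mathcal F}^{(0)}_j(\lambda)$ applied to the effective source into $\widehat{\mathcal F}^{(\pm)}_j(\lambda)\widehat f$. Your verification that the effective source $\widehat g_{\pm}$ lies in $\mathcal B(\mathcal E)$ (compact support of the commutator, local $H^2$ regularity) is a correct elaboration of a step the paper leaves implicit, and the only point you gloss over — replacing $\chi_{\infty}\widehat R^{(0)}_{\mathcal E}$ by $\widehat R^{(0)}_{\mathcal E}$ modulo $\widehat{\mathcal B}^{\ast}_0$ in the potential case — is exactly the trivial remark the paper makes explicitly.
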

\begin{proof}
For the case of lattice structure perturbations, we use the resolvent equation (\ref{ResolventEquation2}).
By Lemma \ref{ResolventEquation2}, the left-hand side is, modulo $\mathcal B^{\ast}_0({\mathbb T}^d)$, equal to
$$
\frac{\sin\sqrt{\lambda}}{\sqrt{\lambda}}\sum_{j=1}^s
\frac{1}{\lambda_j(x)  - E(\lambda \pm i0)}\widehat{\mathcal F}_j^{(0)}(\lambda)
\big(\chi_{\infty} +
[H^{(0)}_{\mathcal E},\chi_{\infty}]\widehat R_{\mathcal E}(\lambda \pm i0)\big)\hat f,
$$
and thus the claim follows from (\ref{DefineFjlambdaperturbed}). For the case of potential perturbations,
we  note that
$$
\mathcal U_{\mathcal E}\chi_{\infty}\widehat R_{\mathcal E}^{(0)}(\lambda \pm i0)\widehat f \simeq
\mathcal U_{\mathcal E}\widehat R_{\mathcal E}^{(0)}(\lambda \pm i0)\widehat f,
$$
since passing to the Fourier series, we see that $(1 - \chi_{\infty})\widehat R_{\mathcal E}^{(0)}(\lambda \pm i0)\widehat f $ is a smooth function on the torus ${\mathbb T}^d$. Then, using (\ref{DefineFpmlambdainmathcalE}) and the resolvent equation (\ref{Resolventequation1}), we obtain the sought result.
\end{proof}

\subsection{Helmholtz equation and S-matrix}

Now one can obtain the asymptotic expansion of solutions to the Helmholtz equation and derive the S-matrix.
\begin{theorem}
\label{HelmhlotzEqWholespace}
(1) For any solution $\widehat u \in \widehat{\mathcal B}^{\ast}(\mathcal E)$ of the equation
$$
(\widehat H_{\mathcal E} - \lambda)\widehat u = 0,
$$
there is an incoming datum and an outgoing datum $\phi^{in}, \phi^{out} \in {\bf h}_{\lambda}$  satisfying
\begin{equation}
\begin{split}
 \mathcal U_{\mathcal E}\chi_{\infty}\widehat u
 \simeq &
- \sum_{j=1}^s\frac{\phi^{in}_j}{\lambda_j(x) - E(\lambda - i0)} +  \sum_{j=1}^s\frac{\phi^{out}_j}{\lambda_j(x) - E(\lambda + i0)}.
\end{split}
\label{TheoremBastcharacterize1}
\end{equation}
(2)
For any incoming datum $\phi ^{in} = (\phi^{in}_1,\dots,\phi^{in}_s) \in {\bf h}_{\lambda}$, there exist a unique solution $\widehat u \in \widehat{\mathcal B}^{\ast}(\mathcal E)$ of the equation
$$
(\widehat H_{\mathcal E} - \lambda)\widehat u = 0
$$
and an outgoing datum $\phi^{out} = (\phi^{out}_1,\dots,\phi^{ou}_s) \in {\bf h}_{\lambda}$  satisfying the relation (\ref{TheoremBastcharacterize1}).
The operator $S(\lambda)$ defined by
\begin{equation}
S(\lambda) : \phi^{in}  \to \phi^{out}
\nonumber
\end{equation}
 is unitary on ${\bf h}_{\lambda}$.
\end{theorem}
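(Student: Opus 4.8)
The plan is to reduce the edge Helmholtz problem to the vertex one already treated in \cite{AIM16}: the restriction $\widehat u\mapsto\widehat u|_{\mathcal V}$ produces the scattering data in part~(1), the resolvent expansion of Theorem~\ref{ResolventExpansionPerturbedCase} together with the generalized eigenfunctions of Theorem~\ref{EigenfunctionExpansionWholespace} builds the solution in part~(2), and Lemma~\ref{LemmaRadCondUnique} pins down uniqueness and hence the well-definedness of $S(\lambda)$.

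For part~(1), given $\widehat u\in\widehat{\mathcal B}^{\ast}(\mathcal E)$ with $(\widehat H_{\mathcal E}-\lambda)\widehat u=0$, set $\widehat w:=\widehat u|_{\mathcal V}$. By Lemma~\ref{S2LemmaDeltamathcalVmathcalV=Tlambda} and the factorization $-\widehat\Delta_{\mathcal V,\lambda}+\widehat Q_{\mathcal V,\lambda}=\frac{\sqrt\lambda}{\sin\sqrt\lambda}\bigl(-\widehat\Delta_{\mathcal V}-E(\lambda)+\widehat W_{\mathcal V,\lambda}\bigr)$ from Lemma~\ref{LemmahatDeltaVlambda-1}, the function $\widehat w$ solves $\bigl(-\widehat\Delta_{\mathcal V}-E(\lambda)+\widehat W_{\mathcal V,\lambda}\bigr)\widehat w=0$ on $\mathcal V$, with $\widehat W_{\mathcal V,\lambda}$ bounded, self-adjoint and compactly supported. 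On the exterior edges $\widehat u$ is the Dirichlet extension of its vertex values, so the two-sided bound \eqref{S5u0u1andUL2eareequivalent} shows that $\widehat u\in\widehat{\mathcal B}^{\ast}(\mathcal E)$ forces $\widehat w\in\widehat{\mathcal B}^{\ast}(\mathcal V)$ and that the behaviour at infinity of the two is equivalent. Applying the vertex Helmholtz theorem (Theorem~\ref{TheoremHelmhotzeqmathcalV}, i.e.\ the result of \cite{AIM16}) to $-\widehat\Delta_{\mathcal V}+\widehat W_{\mathcal V,\lambda}$ at the fixed energy $E(\lambda)$ --- legitimate since $E(\lambda)\notin\mathcal T$ and $E(\lambda)\in\mathrm{Int}\,\sigma_e(-\widehat\Delta_{\mathcal V})$ --- yields incoming and outgoing data for $\widehat w$. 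Rescaling by $\sin\sqrt\lambda/\sqrt\lambda$, which is exactly the Jacobian built into $dS_j$, and re-expressing $\mathcal U_{\mathcal E}\chi_{\infty}\widehat u$ through the explicit edge formula and Lemma~\ref{RadCondequivlatticeedge}, we obtain \eqref{TheoremBastcharacterize1} with $\phi^{in},\phi^{out}\in{\bf h}_{\lambda}$.

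For part~(2), given $\phi^{in}\in{\bf h}_{\lambda}$, take $\widehat u:=\widehat{\mathcal F}^{(\mp)}(\lambda)^{\ast}\phi^{in}$ (with the sign convention and normalization making the incoming datum equal to $\phi^{in}$); by Theorem~\ref{EigenfunctionExpansionWholespace}(3) it lies in $\widehat{\mathcal B}^{\ast}(\mathcal E)$ and solves $(\widehat H_{\mathcal E}-\lambda)\widehat u=0$, while writing $\widehat{\mathcal F}^{(\mp)}(\lambda)^{\ast}$ through the resolvent and invoking Theorem~\ref{ResolventExpansionPerturbedCase} gives an expansion of the form \eqref{TheoremBastcharacterize1} with incoming part $\phi^{in}$ and a definite outgoing part, which we call $S(\lambda)\phi^{in}$. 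If two $\widehat{\mathcal B}^{\ast}$-solutions share the same $\phi^{in}$, their difference has vanishing incoming datum, hence --- comparing \eqref{TheoremBastcharacterize1} with the definition of the outgoing radiation condition via \eqref{Cossqrtlambda} --- satisfies the outgoing radiation condition with source $0$ and so vanishes by Lemma~\ref{LemmaRadCondUnique}; in particular the outgoing datum depends only on $\phi^{in}$ and $S(\lambda):{\bf h}_{\lambda}\to{\bf h}_{\lambda}$ is well defined. Unitarity follows as in the vertex case: both $\widehat{\mathcal F}^{(+)}$ and $\widehat{\mathcal F}^{(-)}$ are unitary from $\mathcal H_{ac}(\widehat H_{\mathcal E})$ onto $\mathbb H$ by Theorem~\ref{EigenfunctionExpansionWholespace}(1), a short resolvent computation intertwines them as $\widehat{\mathcal F}^{(+)}(\lambda)=S(\lambda)\,\widehat{\mathcal F}^{(-)}(\lambda)$, and comparing fibrewise norms via the Parseval identity behind \eqref{Parseval0forR0lambda} forces $S(\lambda)^{\ast}S(\lambda)=S(\lambda)S(\lambda)^{\ast}=1$ on ${\bf h}_{\lambda}$; alternatively, $S(\lambda)$ is identified with the unitary vertex S-matrix of \cite{AIM16} at energy $E(\lambda)$.

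The one genuinely delicate step is the bookkeeping in part~(1): carrying the factor $\sin\sqrt\lambda/\sqrt\lambda$ and the branch convention \eqref{Cossqrtlambda} --- which separates $\sin\sqrt\lambda>0$ from $\sin\sqrt\lambda<0$ and hence swaps the roles of $(RC)_+$ and $(RC)_-$ --- through the edge-to-vertex reduction so that the fibre spaces ${\bf h}_{\lambda}$, the normalizations in $dS_j$, and the two radiation expansions match exactly, and verifying that $\widehat W_{\mathcal V,\lambda}$ satisfies the hypotheses under which the vertex scattering theory of \cite{AIM16} was established. Beyond this, no new analytic input is needed; the whole argument runs parallel to the treatment of $-\widehat\Delta_{\mathcal V}$ in \cite{AIM16}.
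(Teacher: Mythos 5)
Your argument is correct and follows the paper's own proof in all essentials: part (1) by restricting $\widehat u$ to the vertex set and applying Theorem \ref{TheoremHelmhotzeqmathcalV} at the energy $E(\lambda)$ before transferring back to the edges, uniqueness in part (2) by observing that a solution with $\phi^{in}=0$ satisfies the outgoing radiation condition and vanishes by Lemma \ref{LemmaRadCondUnique}, and unitarity via the identification of $S(\lambda)$ with the unitary vertex S-matrix at energy $E(\lambda)$. The only deviation is cosmetic: for existence in (2) the paper reduces once more to the vertex Helmholtz theorem and pulls the solution back with $\widehat T_{\mathcal V,\lambda}^{\ast}$, whereas you construct it as $\widehat{\mathcal F}^{(\mp)}(\lambda)^{\ast}\phi^{in}$ using Theorems \ref{EigenfunctionExpansionWholespace} and \ref{ResolventExpansionPerturbedCase}; both routes amount to the same construction.
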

\begin{proof}
Let $\hat u \in \widehat{\mathcal B}^{\ast}(\mathcal E)$ be a solution to $(\widehat H_{\mathcal E} - \lambda)\hat u = 0$ and put ${\widehat u}\big|_{\mathcal V} = \widehat w$. Then, $\widehat w \in \widehat{\mathcal B}^{\ast}(\mathcal V)$ and satisfies $(- \widehat\Delta_{\mathcal V,\lambda} + \cos\sqrt{\lambda})\widehat w = 0$. By virtue of Theorem \ref{TheoremHelmhotzeqmathcalV}(1), this $\widehat w$ admits an asymptotic expansion\footnote{Note that we have to replace $- \widehat\Delta_{\Gamma}$ by $- \widehat\Delta_{\mathcal V,\lambda}$ and  the energy parameter $\lambda$  by $E(\lambda)$ in (\ref{Expansionofuforvertexcase}).} (\ref{Expansionofuforvertexcase}). As $\widehat u = \widehat T_{\mathcal V,\lambda}^{\ast}\widehat w$, the first claim follows.

The existence part of (2) can be proven by the same argument as above, reducing it to the case of the vertex operator. To prove the uniqueness, we take $\phi^{in} = 0$, and consider the solution $\hat u \in \widehat{\mathcal B}^{\ast}(\mathcal E)$ of the equation $(\widehat H_{\mathcal E} - \lambda)\hat u = 0$ such that
\begin{equation}
\begin{split}
 \mathcal U_{\mathcal E}\chi_{\infty}\hat u
 \simeq &  \sum_{j=1}^s\frac{\phi^{out}_j}{\lambda_j(x) - E(\lambda + i0)}.
\end{split}
\label{TheoremBastcharacterize2}
\end{equation}
Then $\widehat u$ satisfies the outgoing radiation condition, and by Lemma \ref{LemmaRadCondUnique}, such a solution vanishes identically.
\end{proof}

As this argument shows, the S-matrix for $\widehat H_{\mathcal E}$ at the energy $\lambda$ coincides with the S-matrix for $- \widehat\Delta_{\mathcal V,\lambda}$ at the energy ${\frac{\sqrt{\lambda}}{\sin\sqrt{\lambda}}}E(\lambda)$, and hence the unitarity follows. Stated more explicitly, we conclude:
\begin{cor}
\label{CorSmatrixedgeSmatrixvertex}
The S-matrix for $\widehat H_{\mathcal E}$ at the energy $\lambda$ coincides with the S-matrix for $- \widehat\Delta_{\mathcal V,\lambda}$ at the energy $\displaystyle{{\frac{\sqrt{\lambda}}{\sin\sqrt{\lambda}}}E(\lambda)=- {\sqrt{\lambda}\cot{\sqrt{\lambda}}} - \kappa_{\mathcal V}}$.
\end{cor}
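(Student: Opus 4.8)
The plan is to read Corollary~\ref{CorSmatrixedgeSmatrixvertex} off the edge--vertex reduction already carried out in the proof of Theorem~\ref{HelmhlotzEqWholespace}, and then to pin down the vertex spectral parameter by a one-line computation with $\widehat Q^{(0)}_{\mathcal V,\lambda}$. No new estimates should be needed: the limiting absorption principle (Theorem~\ref{LAPwholespace}) and the spectral representation (Theorem~\ref{EigenfunctionExpansionWholespace}) are already in place, and what remains is bookkeeping together with one point that deserves care.

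First I would invoke the correspondence used in the proof of Theorem~\ref{HelmhlotzEqWholespace}: any $\widehat u\in\widehat{\mathcal B}^{\ast}(\mathcal E)$ with $(\widehat H_{\mathcal E}-\lambda)\widehat u=0$ restricts to $\widehat w:=\widehat u\big|_{\mathcal V}\in\widehat{\mathcal B}^{\ast}(\mathcal V)$, which by Lemma~\ref{S2LemmaDeltamathcalVmathcalV=Tlambda} solves $(-\widehat\Delta_{\mathcal V,\lambda}+\widehat Q_{\mathcal V,\lambda})\widehat w=0$, while conversely $\widehat u$ is recovered from $\widehat w$ through the edge representation \eqref{S2Formulahatuezlambda} with $\widehat f=0$ (equivalently $\widehat u=\widehat T^{\ast}_{\mathcal V,\lambda}\widehat w$, cf.\ \eqref{DefineTVlambdaasthatg=frac1dVphi/phig(0)}). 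By Lemma~\ref{RadCondequivlatticeedge} this bijection intertwines the outgoing and incoming radiation conditions, and comparing the expansion \eqref{TheoremBastcharacterize1} of $\mathcal U_{\mathcal E}\chi_{\infty}\widehat u$ with the expansion \eqref{Expansionofuforvertexcase} of $\mathcal U_{\mathcal V}\widehat P_{ext}\widehat w$ shows that the incoming and outgoing data $\phi^{in},\phi^{out}\in{\bf h}_{\lambda}$ attached to $\widehat u$ are exactly those attached to $\widehat w$. Hence the map $\phi^{in}\mapsto\phi^{out}$ defining the S-matrix of $\widehat H_{\mathcal E}$ at energy $\lambda$ coincides, as an operator on ${\bf h}_{\lambda}$, with the S-matrix of the vertex equation $(-\widehat\Delta_{\mathcal V,\lambda}+\widehat Q_{\mathcal V,\lambda})\widehat w=0$.

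Then I would identify that vertex equation as a Helmholtz equation at a definite energy. By \eqref{hatDelta0Vlambda}, \eqref{hatQ0Vlambda} and \eqref{DefineE(lambda)} one has $\widehat Q^{(0)}_{\mathcal V,\lambda}=\frac{\sqrt\lambda}{\sin\sqrt\lambda}\cos\sqrt\lambda+\kappa_{\mathcal V}=\sqrt\lambda\cot\sqrt\lambda+\kappa_{\mathcal V}=-\frac{\sqrt\lambda}{\sin\sqrt\lambda}E(\lambda)$, a multiple of the identity. Writing $\widehat Q_{\mathcal V,\lambda}=-\mu+\widehat W'_{\mathcal V,\lambda}$ with $\mu:=\frac{\sqrt\lambda}{\sin\sqrt\lambda}E(\lambda)$ and $\widehat W'_{\mathcal V,\lambda}$ bounded and supported in the finite perturbed part, the vertex equation becomes $(-\widehat\Delta_{\mathcal V,\lambda}+\widehat W'_{\mathcal V,\lambda}-\mu)\widehat w=0$, i.e.\ precisely the Helmholtz equation at energy $\mu$ for the vertex operator $-\widehat\Delta_{\mathcal V,\lambda}$ (with its compactly supported perturbation); this problem is well posed for $\lambda\in\big({\rm Int}\,\sigma_e(\widehat H_{\mathcal E})\big)\setminus\sigma^{(0)}_{\mathcal T}$ by Lemma~\ref{LemmahatDeltaVlambda-1} and Lemma~\ref{LemmaRadCondUnique}. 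So the S-matrix found in the previous step is by definition the S-matrix of $-\widehat\Delta_{\mathcal V,\lambda}$ at energy $\mu$, and finally $\mu=\frac{\sqrt\lambda}{\sin\sqrt\lambda}\big(-\cos\sqrt\lambda-\kappa_{\mathcal V}\frac{\sin\sqrt\lambda}{\sqrt\lambda}\big)=-\sqrt\lambda\cot\sqrt\lambda-\kappa_{\mathcal V}$, which is the asserted value; unitarity on ${\bf h}_{\lambda}$ is inherited from Theorem~\ref{HelmhlotzEqWholespace}(2).

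The hard part will be the claim in the second paragraph that the passage $\widehat u\leftrightarrow\widehat w$ leaves $\phi^{in},\phi^{out}$ genuinely unchanged rather than merely proportional, and that it respects the conventions — this is the content hidden behind ``as this argument shows''. The scalar $\sin\sqrt\lambda/\sqrt\lambda$ produced by $\widehat T^{\ast}_{\mathcal V,\lambda}$ multiplies the \emph{whole} singular part of the resolvent in \eqref{LemmaR0lambdapi0mathcalE}, so it cancels when the amplitudes are read off as residues on the sheets $\{\lambda_j(x)=E(\lambda)\}$; and the sign of $\sin\sqrt\lambda$, which decides whether $\frac{\sqrt\lambda}{\sin\sqrt\lambda}$ is positive or negative — hence whether ``outgoing'' for $\widehat H_{\mathcal E}$ corresponds to $(RC)_+$ or $(RC)_-$ for $-\widehat\Delta_{\mathcal V,\lambda}$ — is exactly what the case distinction in the definition of the radiation condition (cf.\ \eqref{Cossqrtlambda}) and Lemma~\ref{RadCondequivlatticeedge} are designed to absorb. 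Since both incoming and outgoing amplitudes are transformed by the same scalar conversion, this conversion commutes with taking the S-matrix, so the two S-matrices agree as operators on ${\bf h}_{\lambda}$.
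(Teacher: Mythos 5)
Your proposal is correct and takes essentially the same route as the paper: the corollary is read off from the proof of Theorem \ref{HelmhlotzEqWholespace} (restriction $\widehat u\mapsto\widehat u\big|_{\mathcal V}$, the vertex expansion of Theorem \ref{TheoremHelmhotzeqmathcalV} at energy $E(\lambda)$, reconstruction via $\widehat T_{\mathcal V,\lambda}^{\ast}$), combined with the rescaling $-\widehat\Delta_{\mathcal V,\lambda}=\frac{\sqrt\lambda}{\sin\sqrt\lambda}\big(-\widehat\Delta_{\mathcal V}\big)$ that turns the vertex energy $E(\lambda)$ into $\frac{\sqrt\lambda}{\sin\sqrt\lambda}E(\lambda)=-\sqrt\lambda\cot\sqrt\lambda-\kappa_{\mathcal V}$. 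Your explicit treatment of the scalar factor $\sin\sqrt\lambda/\sqrt\lambda$ and of the sign of $\sin\sqrt\lambda$ merely spells out what the paper compresses into ``as this argument shows.''
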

\begin{remark}
By checking the above proof, one can see that all the arguments in this section remain valid in the situation when $C_v/d_v$ is a fixed constant except for a finite number of vertices $v \in \mathcal V$. Moreover, one can deal in the same way with the case where the unperturbed operator $h_e^{(0)}$ has the same potential $V_0(z)$ at all the edges, that is, $h^{(0)}_e = - (d^2/dz^2)_D + V_0(z)$, $\forall e \in \mathcal E$.
\end{remark}

 \section{Inverse scattering for quantum graph}
\label{InverseQuantumGraph}

\begin{theorem}
\label{QuantumgraphSmatrixandDnMapTheorem}
For the Schr{\"o}dinger operator $\widehat H_{\mathcal E}$ on a quantum graph of the considered class, the S-matrix $S(\lambda)$ for the scattering problem and the D-N map $\Lambda_{\mathcal E}(\lambda)$ for the interior boundary value problem determine each other.
\end{theorem}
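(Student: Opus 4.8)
The plan is to reduce the quantum-graph S-matrix problem to the already-established discrete picture in two stages, using the structures built up in \S\ref{SectionQunatumGraphSpectralTheory}. First I would recall from Corollary~\ref{CorSmatrixedgeSmatrixvertex} that the S-matrix $S(\lambda)$ for $\widehat H_{\mathcal E}$ at energy $\lambda$ coincides, as an operator on ${\bf h}_\lambda$, with the S-matrix for the vertex operator $-\widehat\Delta_{\mathcal V,\lambda}$ at the shifted energy $E(\lambda)$ (equivalently, for $-\widehat\Delta_{\mathcal V}$ with the $\lambda$-dependent potential $\widehat Q_{\mathcal V,\lambda}$, cf.~\eqref{hatDelta0Vlambda}--\eqref{DefineE(lambda)}). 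Thus knowledge of $S(\lambda)$ for the quantum graph is \emph{equivalent} to knowledge of the S-matrix of a discrete Schr\"odinger operator of the type studied in \S\ref{InverseScatteringDiscrete}. By Theorem~\ref{ThSmatrixDNmapEquiv} and the analyticity argument following it, that discrete S-matrix determines (and is determined by) the interior D-N map $\Lambda_{int}(\lambda)$ for the discrete operator $-\widehat\Delta_{\mathcal V,\lambda}+\widehat Q_{\mathcal V,\lambda}$ on the perturbed subgraph — i.e.\ the map $\Lambda_{\mathcal V}(\lambda)$ of \eqref{S3DNforHmathcalVlambda}.

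Second, I would invoke Lemma~\ref{edge-vertex-equiv} (Lemma~\ref{DNmathcalE=DNmathcalV}): since on the edges adjacent to the graph boundary the potential $V_e$ and length $\ell_e$ are the \emph{known} fixed data ($V_0\equiv 0$, $\ell_e=1$ in the normalization adopted), we know $\phi_{e0}(z,\lambda)$, $\phi_{e1}(z,\lambda)$ there explicitly, and hence $\Lambda_{\mathcal V}(\lambda)$ and $\Lambda_{\mathcal E}(\lambda)$ determine each other for every $\lambda\notin\sigma'$. Chaining the two equivalences — $S(\lambda)\leftrightarrow$ discrete S-matrix $\leftrightarrow \Lambda_{\mathcal V}(\lambda)\leftrightarrow\Lambda_{\mathcal E}(\lambda)$ — gives the theorem. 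I would also note that the exterior D-N map $\Lambda^{(\pm)}_{ext}(\lambda)$ and $A_{ext}(\lambda)$ appearing in \eqref{Aext-A=I+MI-} are \emph{a priori} known because the exterior quantum graph is unperturbed with known data, so the reduction of $S(\lambda)$ to $\Lambda_{int}$ really is a bidirectional determination, not merely one direction.

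The main technical point — and the step I expect to require the most care — is matching the energy parameters and the exceptional sets across the reduction. One must verify that the spectral parameter map $\lambda\mapsto E(\lambda)=-\cos\sqrt\lambda-\kappa_{\mathcal V}\sin\sqrt\lambda/\sqrt\lambda$ carries $\big({\rm Int}\,\sigma_e(\widehat H_{\mathcal E})\big)\setminus\sigma^{(0)}_{\mathcal T}$ into the admissible set $\sigma_e(\widehat H)\setminus(\mathcal T_0\cup\mathcal T_1\cup\sigma_p(\widehat H)\cup\sigma(\widehat H_{int}))$ for the discrete problem (this is essentially the definition \eqref{Definesigma0tau} of $\sigma^{(0)}_{\mathcal T}$), and that the set $\sigma'$ of \eqref{DefineSigma} where $\det(-\widehat\Delta_{\mathcal V,\lambda}+\widehat Q_{\mathcal V,\lambda})=0$ together with $\cup_e\sigma(h_e)$ is discrete, so that analytic continuation in $\lambda$ transfers the ``mutual determination on an interval'' to ``mutual determination for all energies''. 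The underlying functional-analytic inputs — the limiting absorption principle (Theorem~\ref{LAPwholespace}), the Rellich-type theorem (Theorem~\ref{QuantumgraphRellichtypeTheorem}), the unique continuation and the spectral representation (Theorem~\ref{EigenfunctionExpansionWholespace}) — are exactly what make the equivalences of Corollary~\ref{CorSmatrixedgeSmatrixvertex} and Lemma~\ref{edge-vertex-equiv} legitimate, so once the exceptional-set bookkeeping is done the proof is a short assembly.
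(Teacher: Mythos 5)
Your proposal is correct and follows essentially the same three-step chain as the paper: Corollary~\ref{CorSmatrixedgeSmatrixvertex} to pass from $S(\lambda)$ to the vertex S-matrix, Theorem~\ref{ThSmatrixDNmapEquiv} to pass to the interior D-N map $\Lambda_{\mathcal V}(\lambda)$, and Lemma~\ref{DNmathcalE=DNmathcalV} to recover $\Lambda_{\mathcal E}(\lambda)$. The only nitpick is the energy label: for $-\widehat\Delta_{\mathcal V,\lambda}$ the correct shifted energy is $\frac{\sqrt{\lambda}}{\sin\sqrt{\lambda}}E(\lambda)$ (it is $E(\lambda)$ only for the plain $-\widehat\Delta_{\mathcal V}$), which your parenthetical already acknowledges.
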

\begin{proof}
By Corollary \ref{CorSmatrixedgeSmatrixvertex}, knowing the S-matrix $S(\lambda)$ for $\widehat H_{\mathcal E}$ is equivalent to knowing the S-matrix for $- \widehat\Delta_{\mathcal V,\lambda}$ at the energy $\frac{\sqrt{\lambda}}{\sin\sqrt{\lambda}}E(\lambda)$. By Theorem \ref{ThSmatrixDNmapEquiv}, this is equivalent to knowing the D-N map for $- \widehat{\Delta}_{\mathcal V,\lambda}$ at the energy ${\frac{\sqrt{\lambda}}{\sin\sqrt{\lambda}}}E(\lambda)$. Finally by Lemma \ref{DNmathcalE=DNmathcalV}, this is equivalent to knowing the D-N map for $\widehat H_{\mathcal E}$ at the energy $\lambda$.
\end{proof}

We have now arrived at our next main theorem.
\begin{theorem}
\label{qunatumgraphstrutureTheoremScattering}
Let   $\Gamma = \{\mathcal V, \mathcal E\}$ and $\Gamma' = \{\mathcal V',\mathcal E'\}$ be two infinite quantum graphs 
 as in \S \ref{InverseScatteringDiscrete} satisfying (\ref{S3EquilateralCond}), (\ref{S3Equivertexpotentialcond}), and (D-1)--(D-4),
whose  perturbed finite subgraphs satisfy (C-1), (C-2).  Assume further that  $\ell_{\mathcal E} = \ell_{\mathcal E'}, V_{\mathcal E}(z) = V_{\mathcal E'}(z)$,  $k_{\mathcal V}= k_{\mathcal V'}$. 
Suppose that the S-matrices for the Schr{\"o}dinger operator for the two quantum graphs coincide for all energies.
Then there is a bijection $\Phi : \Gamma \to \Gamma'$ preserving the graph structure, and $d_v = d_{v'}$, $C_v = C'_{v'}$ hold for all $v \in \mathcal V$ and $v' = \Phi(v)$.
\end{theorem}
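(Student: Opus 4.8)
The plan is to deduce Theorem~\ref{qunatumgraphstrutureTheoremScattering} from its boundary-value counterpart, Theorem~\ref{qunatumgraphstrutureTheoremBVP}, by using Theorem~\ref{QuantumgraphSmatrixandDnMapTheorem} to convert the hypothesis on S-matrices into one on D-N maps. Since $\Gamma$ and $\Gamma'$ are local perturbations of the \emph{same} periodic lattice $\Gamma_0$ sharing $\ell_{\mathcal E}$, $V_{\mathcal E}(z)$ and $k_{\mathcal V}$, I would first fix one finite set $\Omega=\Omega_{k_0}\subset\mathcal V_0$ (as in part (i) of (D-4)) large enough to contain all the perturbations of both graphs and all vertices adjacent to $\partial_{\Gamma_0}\Omega$; outside $\Omega$ the two graphs coincide with $\Gamma_0$. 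This produces a common boundary $\Sigma=\partial G=\partial_{\Gamma_0}\Omega$, a common unperturbed layer of edges joining $\Omega$ to $\partial G$, hence common exterior data (exterior D-N map and exterior scattering amplitude) depending only on $\Gamma_0,\ell_{\mathcal E},V_{\mathcal E},k_{\mathcal V}$ and $\Sigma$; the two finite perturbed subgraphs $\mathbb G_\Gamma,\mathbb G_{\Gamma'}$ are then boundary isomorphic via the identity on $N(\partial G)$.

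First I would invoke Theorem~\ref{QuantumgraphSmatrixandDnMapTheorem}, whose proof expresses the interior D-N map $\Lambda_{\mathcal E}(\lambda)$ in terms of $S(\lambda)$ only through quantities fixed by the common data: the spectral-parameter substitution of Corollary~\ref{CorSmatrixedgeSmatrixvertex}, the identity \eqref{Aext-A=I+MI-} with the operators $\mathcal M_{int},\mathcal M_{ext},\widehat S_{\Sigma},\chi_{\Sigma}$ referring only to $\Sigma$ (Theorem~\ref{ThSmatrixDNmapEquiv}), and the one-edge solutions $\phi_0,\phi_1$ known from $\ell_{\mathcal E}$ and $V_{\mathcal E}$ (Lemma~\ref{DNmathcalE=DNmathcalV}). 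Consequently the coincidence of $S_\Gamma(\lambda)$ and $S_{\Gamma'}(\lambda)$, which holds on the spectrum away from the respective discrete exceptional sets $\mathcal T_0\cup\mathcal T_1\cup\sigma_p(\widehat H)\cup\sigma(\widehat H_{int})$, forces $\Lambda_{\mathcal E}^{\Gamma}(\lambda)=\Lambda_{\mathcal E'}^{\Gamma'}(\lambda)$ on the intersection of the two admissible parameter sets; since both sides are meromorphic in $\lambda$ with poles in $\sigma'$, and the scattering data are given on a set with accumulation points, this extends to all $\lambda\notin\sigma'$.

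At this point the hypotheses of Theorem~\ref{qunatumgraphstrutureTheoremBVP} hold for the finite quantum graphs $\mathbb G_\Gamma,\mathbb G_{\Gamma'}$: they satisfy (C-1) (or (C-1)$'$) and (C-2) by assumption, they share $\ell_{\mathcal E},V_{\mathcal E}(z),k_{\mathcal V}$ via \eqref{S3EquilateralCond}, \eqref{S3Equivertexpotentialcond}, and their D-N maps coincide for all energies. That theorem yields a bijection $\Phi_0:\mathbb G_\Gamma\to\mathbb G_{\Gamma'}$ fixing $\partial G$, preserving adjacency, with $d_v=d_{\Phi_0(v)}$ and $C_v=C'_{\Phi_0(v)}$ on $\Omega\cup\partial G$. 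Extending $\Phi_0$ by the identity outside $\Omega$ (where both graphs agree with $\Gamma_0$) gives a bijection $\Phi:\Gamma\to\Gamma'$ preserving the whole edge set, since it does so on $\Omega\cup\partial G$ (including the common boundary layer) and trivially outside. Hence $d_v=d_{\Phi(v)}$ for every $v\in\mathcal V$, and then $C_v=k_{\mathcal V}d_v=k_{\mathcal V'}d_{\Phi(v)}=C'_{\Phi(v)}$ by \eqref{S3Equivertexpotentialcond} together with $k_{\mathcal V}=k_{\mathcal V'}$.

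The genuinely delicate point is not a deep obstacle but the junction in the second paragraph: one must verify that the $\Lambda_{\mathcal E}(\lambda)$ extracted from the S-matrix is precisely the D-N map of the finite quantum subgraph with Dirichlet data on $\partial G$ in the sense used by Theorem~\ref{qunatumgraphstrutureTheoremBVP}, and that the equality of the two S-matrices legitimately propagates to equality of D-N maps in spite of the graph-dependent discrete sets $\sigma_p(\widehat H)\cup\sigma(\widehat H_{int})$, which is exactly where analyticity in $\lambda$ is needed. Once these are in place, everything else is bookkeeping built on \S\ref{InverseScatteringDiscrete}--\S\ref{SectionQunatumGraphSpectralTheory} and Theorems~\ref{TheoremBILL}, \ref{qunatumgraphstrutureTheoremBVP}, \ref{QuantumgraphSmatrixandDnMapTheorem}.
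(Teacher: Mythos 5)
Your proposal is correct and takes essentially the same route as the paper, which proves Theorem \ref{qunatumgraphstrutureTheoremScattering} precisely as a direct consequence of Theorem \ref{QuantumgraphSmatrixandDnMapTheorem} (S-matrix $\Leftrightarrow$ interior D-N map) followed by Theorem \ref{qunatumgraphstrutureTheoremBVP} (D-N map determines the finite perturbed structure). The extra details you supply --- the common $\Omega$ and boundary $\Sigma$, the analytic continuation past the exceptional sets, and the extension of the bijection by the identity outside the perturbation --- are exactly the bookkeeping the paper leaves implicit.
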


\begin{proof}
This is a direct consequence of Theorems \ref{qunatumgraphstrutureTheoremBVP}  and \ref{QuantumgraphSmatrixandDnMapTheorem}.
\end{proof}

\bigskip

 \end{document}